\newenvironment{proof}{\noindent{\bf Proof:}}{\hfill\fbox{}\vspace*{1mm}}
\newtheorem{theorem}{Theorem}[section]
\newtheorem{example}[theorem]{Example}
\newtheorem{lemma}[theorem]{Lemma}
\newtheorem{definition}[theorem]{Definition}
\newtheorem{remark}[theorem]{Remark}
\newtheorem{proposition}[theorem]{Proposition}
\newtheorem{corollary}[theorem]{Corollary}
\newtheorem{problem}[theorem]{Problem}
\newcommand\nextitem[1]{%
  \setcounter{\@enumctr}{#1}%
  \addtocounter{\@enumctr}{-1}%
}
\begin{document}
\renewcommand{\labelenumi}{\arabic{enumi}.}
\renewcommand{\labelenumii}{\arabic{enumi}.\arabic{enumii}}
\renewcommand{\labelenumiii}{\arabic{enumi}.\arabic{enumii}.\arabic{enumiii}}
\renewcommand{\labelenumiv}{\arabic{enumi}.\arabic{enumii}.\arabic{enumiii}.\arabic{enumiv}}

\title{On the Number of Control Nodes of Threshold and XOR Boolean Networks}

\author{Christopher H. Fok$^1$$^*$, Liangjie Sun$^2$, Tatsuya Akutsu$^2$, Wai-Ki Ching$^1$}
\date{
   $^1$Department of Mathematics, The University of Hong Kong, Pokfulam Road, Hong Kong\\
   $^2$Bioinformatics Center, Institute for Chemical Research, Kyoto University, Kyoto 611-0011, Japan\\
   $^*$christopherfok2015@outlook.com}

\maketitle

\begin{abstract}
Boolean networks (BNs) are important models for gene regulatory networks and many other biological systems. In this paper, we study the minimal controllability problem of threshold and XOR BNs with degree constraints. Firstly, we derive lower-bound-related inequalities and some upper bounds for the number of control nodes of several classes of controllable majority-type threshold BNs. Secondly, we construct controllable majority-type BNs and BNs involving Boolean threshold functions with both positive and negative coefficients such that these BNs are associated with a small number of control nodes. Thirdly, we derive a linear-algebraic necessary and sufficient condition for the controllability of general XOR-BNs, whose update rules are based on the XOR logical operator, and construct polynomial-time algorithms for computing control-node sets and control signals for general XOR-BNs. Lastly, we use ring theory and linear algebra to establish a few best-case upper bounds for a type of degree-constrainted XOR-BNs called $k$-$k$-XOR-BNs. In particular, we show that for any positive integer $m \geq 2$ and any odd integer $k \in [3, 2^{m} - 1]$, there exists a $2^{m}$-node controllable $k$-$k$-XOR-BN with 1 control node. Our results offer theoretical insights into minimal interventions in networked systems such as gene regulatory networks.

{\bf Keywords:} Boolean Networks, Controllability, Minimum Control Node Set, Boolean Threshold Functions, XOR Operators
\end{abstract}



\section{Introduction}\label{section:intro}

Boolean networks (BNs) were first proposed by \cite{KauffmanS.A.1969Msae} for modeling gene regulatory networks. Since then, BNs have also been applied to study biological mechanisms such as T cell receptor signaling \cite{Saez2007}, apoptosis \cite{apoptosis} and the yeast cell-cycle network \cite{yeastrobust}.
By using BNs to model biological systems, we can gain insight into the systems and then design intervention strategies to drive the systems from undesirable states to desirable ones. This motivates the research community to study the control theory of BNs \cite{ControlAndObserve, AkutsuTree, MinRealization, ZhongAsynchronous, StuckAt, AsympObserve, CoarsestPartition}, which has gained increasing attention in recent years.

One of the important directions of research on BN control is the minimal controllability problem of BNs. This problem (defined in Problem \ref{problem:BN_min_control}, Section \ref{section:prelim}) asks to find the smallest set of control nodes in a BN which renders the BN controllable. There have been a number of important studies related to the BN minimal controllability problem. For example, \cite{ControlKernel} introduced a general algorithm for computing a control kernel, which is a minimal set of nodes that must be controlled to drive the network to a desirable state. \cite{HouAttractor} studied a special case of the BN control problem in which the target states are restricted to be attractors, and proved under a reasonable assumption that a set of driver nodes of size $O(\log_{2}N + \log_{2}M)$ is enough for controlling BNs, where $N$ is the total number of nodes and $M$ is the number of attractors. \cite{Borriello} analyzed 49 biological network models and found numerical evidence that the average control-kernel size depends logarithmically on the number of attractors. \cite{WeissConjunctive} studied the minimal controllability problem of conjunctive BNs, and used graph theory to come up with a necessary and sufficient condition for the controllability of such BNs as well as a quadratic-time algorithm for testing controllability. \cite{BNStrongStruct} introduced the concept of strong structural controllability of BNs, and 
proved that the minimal strong structural controllability problem of BNs is NP-hard. \cite{ZhuStabilizable} proposed an efficient procedure for determining a minimal set of controlled nodes which leads to the global stabilization of BNs. \cite{SunMinControl} studied the minimal controllability problem of BNs involving the XOR operator, the $k$-ary AND function, the $k$-ary AND function combined with negation and nested canalyzing Boolean functions, and derived general lower bounds, general upper bounds, worst-case lower bounds and best-case upper bounds for such BNs.

However, to the best of our knowledge, there has been no research on the minimal controllability problem of BNs consisting of Boolean threshold functions (which we call threshold BNs). Threshold BNs and its variations can in many cases successfully model the activation and inhibition effects in biological systems \cite{yeastrobust, FissionYeast, FlowerEvo, PBTN, Arabidopsis}. Therefore, we study the minimal controllability problem of threshold BNs in the hope that our results can provide theoretical insights into the control of real biological systems.
Besides models of biological systems, Boolean threshold functions are also used in theoretical models of artificial neural networks \cite{AnthonyNeural, SiuNeural}.

Most of our results on threshold BNs are related to majority-type Boolean functions, which also have important applications such as social network modeling and digital circuit design \cite{TieBreak, MajorityVote, MajorityLogic}. Hence, our results on threshold BNs may also benefit the research in these areas.

Our contributions in this paper can be divided into four parts. The first and the second parts are related to threshold BNs, whereas the third and the fourth parts are related to XOR-BNs.

In the first part of our contribution, we derive general bounds for BNs involving majority-type Boolean functions. More specifically, for any controllable $k$-in-MAJORITY-BN, we derive an inequality for the size of its control-node set (Theorem \ref{thm:k_in_MAJOR_BN_ineq_thm}), and use the inequality to derive a lower bound for the size of its control-node set (Corollary \ref{coroll:k_in_MAJOR_BN_lower_bound}). 
We also derive a similar inequality and a similar lower bound for $2k$-in-MTBI-BNs (Theorem \ref{thm:2k_in_MTBI_BN_ineq_thm}).
We also prove a result about directed graphs with degree constraints (Proposition \ref{prop:special_subset_of_K-K-regular_graph}), and use the result to derive general upper bounds for the sizes of the minimal control-node sets of $(2k+1)$-$(2k+1)$-MAJORITY-BNs, $2k$-$2k$-MAJORITY-BNs and $2k$-$2k$-MTBI-BNs (Theorems \ref{thm:3k2_6k_2_3k2_7k_2} and \ref{thm:3k2_4k_1_3k2_5k_2}).

In the second part of our contribution, we construct special controllable $(2k+1)$-$(2k+1)$-MAJORITY-BNs, $2k$-$2k$-MAJORITY-BNs and $2k$-$2k$-MTBI-BNs (Section \ref{section:majority_BNs_bcub}). In Section \ref{section:threshold_BNs_pos_neg_coef}, we construct special controllable BNs that have degree constraints and involve Boolean threshold functions with both positive and negative coefficients. All of these BNs are associated with small control-node sets, and can serve as best-case upper bounds.

In the third part of our contribution, we provide results which connect the study of the controllability of XOR-BNs with linear algebra. We begin by introducing a new way of looking at the functional dependencies of XOR-BNs as a linear map from $F^{n}_{2}$ to $F^{n}_{2}$, where $F_{2}$ is the finite field of order $2$. Then, we prove a necessary and sufficient condition for the controllability of XOR-BNs (Theorem \ref{thm:Wu_spans_Fn_2}). Afterwards, we provide polynomial-time algorithms for constructing control-node sets and control strategies for general XOR-BNs (Algorithms \ref{alg:construct_control_node_set}, \ref{alg:intermediate}, \ref{alg:control_scheme}).

In the fourth part of our contribution, we present best-case upper bounds for the sizes of the minimal control-node sets of $k$-$k$-XOR-BNs (i.e., XOR-BNs with some particular degree constraints). More specifically, we use Theorem \ref{thm:Wu_spans_Fn_2} to show that for all integers $n$ and $k$ such that $n > k \geq 2$, there exists an $n$-node controllable $k$-$k$-XOR-BN whose control-node set is of size $k-1$ (Theorem \ref{thm:k-k-XOR-BN_(k-1)-bound}). Moreover, we use Theorem \ref{thm:Wu_spans_Fn_2} and ring theory to show that when $n$ and $k$ are integers such that $n \geq 4$ is a power of $2$ and $k \in [3, n - 1]$ is odd, there exists an $n$-node controllable $k$-$k$-XOR-BN whose control-node set is of size $1$ (Theorem \ref{thm:1_control_node}). We remark that both of these results are better than the best-case upper bound results for $k$-$k$-XOR-BNs in \cite{SunMinControl}.

Our major results in this paper are summarized in Table \ref{table:summary}. Please refer to Table \ref{table:meaning_of_bounds} for the meaning of general lower bounds, general upper bounds and best-case upper bounds.

\begin{table}[h!]
\centering
\caption{Summary of our major results in this paper}\label{table:summary}
\footnotesize
\begin{tabular}{|l|l|l|}
\cline{2-3}
\multicolumn{1}{l|}{}	& Brief summary of the result		& \makecell[l]{Nature of \\ the result} \\ \hline
Theorem \ref{thm:k_in_MAJOR_BN_ineq_thm}	&
\makecell[l]{For any $k$-in-MAJORITY-BN controllable within the time $0$ to $s+1$, \\
$m \coloneqq |U|$ satisfies 
$2^{n} \leq 2^{(s+1)m} - \left( \sum^{\left\lceil \frac{k}{2} \right\rceil - 1}_{i = 1} \binom{m}{i} \right) \sum^{s}_{t = 1} 2^{tm}$.}	&
\multirow{2}{*}{\makecell[l]{General \\ lower bound}} \\
Corollary \ref{coroll:k_in_MAJOR_BN_lower_bound}	&
By this inequality, $|U| \geq \max(\left\lceil \frac{k}{2} \right\rceil, \left\lceil \frac{n + 1}{s + 1} \right\rceil)$.	& \\ \hline
Theorem \ref{thm:2k_in_MTBI_BN_ineq_thm}	&
\makecell[l]{For any $2k$-in-MTBI-BN controllable within the time $0$ to $s + 1$, \\ 
$m \coloneqq |U|$ satisfies 
$2^{n} \leq 2^{(s+1)m} - \left( \sum^{k - 1}_{i = 1} \binom{m}{i} \right) \sum^{s}_{t = 1} 2^{tm}$. \\
By this inequality, $|U| \geq \max(k, \left\lceil \frac{n + 1}{s + 1} \right\rceil)$.}		&
\makecell[l]{General \\ lower bound} \\ \hline
Theorem \ref{thm:3k2_6k_2_3k2_7k_2}	&
\makecell[l]{For any $n$-node $(2k+1)$-$(2k+1)$-MAJORITY-BN, there exists \\
a control-node set $U$ such that $|U| \leq \left( \frac{3k^{2} + 6k + 2}{3k^{2} + 7k + 2} \right) n$.}	&
\makecell[l]{General \\ upper bound} \\ \hline
Theorem \ref{thm:3k2_4k_1_3k2_5k_2}	&
\makecell[l]{For any $n$-node $2k$-$2k$-MAJORITY-BN or $2k$-$2k$-MTBI-BN, \\
there exists a control-node set $U$ such that $|U| \leq \left( \frac{3k^{2} + 4k - 1}{3k^{2} + 5k - 2} \right) n$.}	&
\makecell[l]{General \\ upper bound} \\ \hline
Theorem \ref{thm:(2k+1)_bcub}	&
\makecell[l]{For any positive integer $m$, there exists a controllable $(2k+2)m$-node \\
$(2k+1)$-$(2k+1)$-MAJORITY-BN such that $|U| = 2k + 2 + (m - 1)k$.}	&
\makecell[l]{Best-case \\ upper bound} \\ \hline
Theorem \ref{thm:2k_MAJOR_bcub}	&
\makecell[l]{For any positive integer $m$, there exists a controllable $(2k+1)m$-node \\
$2k$-$2k$-MAJORITY-BN such that $|U| = 2k + 1 + (m - 1)k$.}	&
\makecell[l]{Best-case \\ upper bound} \\ \hline
Theorem \ref{thm:2k_MTBI_bcub}	&
\makecell[l]{For any positive integer $m$, there exists a controllable $2mk$-node \\
$2k$-$2k$-MTBI-BN such that $|U| = 2k + (m - 1)(k - 1)$.}	&
\makecell[l]{Best-case \\ upper bound} \\ \hline
Theorem \ref{thm:phi_k_BN_bcub}	&
\makecell[l]{For any positive integer $k \geq 3$, there exists a Boolean threshold \\
function $\varphi_{k} : \{ 0, 1 \}^{k} \to \{ 0, 1 \}$ such that for any positive integer $m$, \\
there exists a controllable $mk$-node $k$-$k$-$\varphi_{k}$-BN satisfying $|U| = k$.}	&
\makecell[l]{Best-case \\ upper bound} \\ \hline
Theorem \ref{thm:Wu_spans_Fn_2}	&
\multicolumn{2}{|l|}{\makecell[l]{A linear-algebraic polynomial-time necessary and sufficient condition for the \\ controllability of XOR-BNs.}} \\ \hline
Algorithm \ref{alg:construct_control_node_set}	&
\multicolumn{2}{|l|}{\makecell[l]{A linear-algebraic polynomial-time algorithm which takes any XOR-BN as input \\
and constructs a control-node set rendering the XOR-BN controllable.}} \\ \hline
Algorithm \ref{alg:control_scheme}	&
\multicolumn{2}{|l|}{\makecell[l]{A linear-algebraic polynomial-time algorithm which takes any two states 
$\mathbf{a}$, $\mathbf{b}$ of \\ 
any controllable XOR-BN as inputs and constructs control signals driving the BN \\
from $\mathbf{a}$ to $\mathbf{b}$.}} \\ \hline
Theorem \ref{thm:k-k-XOR-BN_(k-1)-bound}	&
\makecell[l]{For all integers $n > k \geq 2$, there exists a controllable $n$-node \\
$k$-$k$-XOR-BN such that $|U| = k - 1$.}	&
\makecell[l]{Best-case \\ upper bound} \\ \hline
Theorem \ref{thm:1_control_node}	&
\makecell[l]{Let $m \geq 2$ be an integer and $k \in [3, 2^{m} - 1]$ be an odd integer. \\
There exists a controllable $2^{m}$-node $k$-$k$-XOR-BN such that $|U| = 1$.}	&
\makecell[l]{Best-case \\ upper bound} \\ \hline
\end{tabular}
\end{table}

The rest of this paper is organized as follows. Section \ref{section:prelim} introduces the necessary preliminary concepts studied in this paper. In Section \ref{section:threshold_functions}, we provide our results on the minimal controllability problem of threshold BNs. In Section \ref{section:XOR_BNs}, our results on the minimal controllability of XOR-BNs are given. In Section \ref{section:conclusion}, we provide our conclusion, discuss the potential applications of our work and propose some possible future directions of research.

\section{Preliminaries}\label{section:prelim}

A Boolean network (BN) is a discrete-time dynamical system consisting of a set of nodes $x_{1}$, $x_{2}$, \ldots, $x_{n}$ and Boolean functions $f_{1}, f_{2}, \ldots, f_{n} : \{ 0, 1 \}^{n} \to \{ 0, 1 \}$. For all $t \in \mathbb{Z}_{\geq 0}$, the state of a BN at time $t$ is denoted by $\mathbf{x}(t) \coloneqq (x_{1}(t), x_{2}(t), \ldots, x_{n}(t))$. When we do not exert control signals on the BN, the update rules can be described as follows:
\begin{equation}
x_{i}(t + 1) = f_{i}(x_{1}(t), x_{2}(t), \ldots, x_{n}(t)) \quad \text{for $i = 1, 2, \ldots, n$.}
\end{equation}
Further, we let $F \coloneqq (f_{1}, f_{2}, \ldots, f_{n}) : \{ 0, 1 \}^{n} \to \{ 0, 1 \}^{n}$. 

Let $U \subseteq \{ x_{1}, x_{2}, \ldots, x_{n} \} \eqqcolon V$ be the set of control nodes of a BN. The update rules of  the BN under the influence of control signals are given by
\begin{align}
x_{i}(t + 1) &= f_{i}(x_{1}(t), x_{2}(t), \ldots, x_{n}(t)), \quad \forall x_{i} \notin U, \\
x_{i}(t + 1) &= u_{i}(t) \oplus f_{i}(x_{1}(t), x_{2}(t), \ldots, x_{n}(t)), \quad \forall x_{i} \in U,
\end{align}
where $\oplus$ denotes the XOR operator and for all $t \in \mathbb{Z}_{\geq 0}$, $u_{i}(t)$ can take arbitrary values in the set $\{ 0, 1 \}$.

Now, we provide the following important definition:
\begin{definition}
If an $n$-node BN with control-node set $U$ satisfies that for all $\mathbf{a}, \mathbf{b} \in \{ 0, 1\}^{n}$, there exist control signals which drive the BN from $\mathbf{x}(0) = \mathbf{a}$ to $\mathbf{x}(T) = \mathbf{b}$ for some non-negative integer $T$, then the BN is said to be controllable.
\end{definition}

The minimal controllability problem of BNs is defined below.
\begin{problem}[Minimal Controllability Problem of BNs]\label{problem:BN_min_control}
Given a BN, find the smallest control-node set which renders the BN controllable.
\end{problem}

We also consider three concepts related to the size of the minimal control-node sets of BNs \cite{SunMinControl, SunObserve}.
\begin{table}[h!]
  \centering
  \caption{The meaning of general lower bounds, general upper bounds and best-case  upper bounds, where $b$ is the corresponding bound.}\label{table:meaning_of_bounds}
  \begin{tabular}{|l|l|}
   \hline
   General lower bound & \makecell[l]{For any BN satisfying certain properties, \\ the size of the minimal control-node set is at least $b$.} \\ \hline
   Best-case upper bound & \makecell[l]{For some of the BNs satisfying certain properties, \\ the size of the minimal control-node set is at most $b$, \\ which is currently the smallest.} \\ \hline
   General upper bound & \makecell[l]{For any BN satisfying certain properties, \\ the size of the minimal control-node set is at most $b$.} \\ \hline
  \end{tabular}
\end{table}

\FloatBarrier

\section{Boolean Networks Consisting of Threshold Functions}\label{section:threshold_functions}

In this section, we study the number of control nodes of BNs consisting of Boolean threshold functions. In Section \ref{section:threshold_general_bounds}, we derive general lower and upper bounds for BNs involving majority-type threshold functions. In Section \ref{section:majority_BNs_bcub}, we provide best-case upper bounds for BNs with majority-type threshold functions and degree constraints. In Section \ref{section:threshold_BNs_pos_neg_coef}, we provide a best-case upper bound for BNs with degree constraints which involve Boolean threshold functions having both positive and negative coefficients.

Before we present the results, we have to provide some definitions. 

\begin{definition}\label{def:k-ary_majority_function}
Let $k$ be a positive integer. The function $f : \{ 0, 1 \}^{k} \to \{ 0, 1 \}$ defined as
\begin{equation}\label{eq:k-ary_majority_function}
f(\mathbf{x}) \coloneqq
\begin{cases}
1	& \text{if $x_{1} + x_{2} + \cdots + x_{k} \geq \frac{k}{2}$} \\
0	& \text{otherwise}
\end{cases}
\end{equation}
is called the  $k$-ary majority function.
\end{definition}

\begin{definition}\label{def:2k-ary_MTBI_function}
Let $k$ be a positive integer. The function $g : \{ 0, 1 \}^{2k} \to \{ 0, 1 \}$ defined as
\begin{equation}\label{eq:2k-ary_MTBI_function}
g(\mathbf{x}) \coloneqq
\begin{cases}
1	& \text{if $1.1 x_{1} + x_{2} + x_{3} + \cdots + x_{2k} \geq k + 0.05$} \\
0	& \text{otherwise}
\end{cases}
\end{equation}
is called the $2k$-ary majority function with a tie-breaking input (MTBI). $x_{1}$ is said to be the tie-breaking input of $g$.
Equivalently, 
\begin{equation}\label{eq:2k-ary_MTBI_function_intuition}
g(\mathbf{x}) \coloneqq
\begin{cases}
1	& \text{if at least $k+1$ entries of $\mathbf{x}$ equal $1$} \\
1	& \text{if exactly $k$ entries of $\mathbf{x}$ equal $1$ and $x_{1} = 1$} \\
0	& \text{if exactly $k$ entries of $\mathbf{x}$ equal $1$ and $x_{1} = 0$} \\
0 	& \text{if at most $k - 1$ entries of $\mathbf{x}$ equal $1$}
\end{cases}
\end{equation}
\end{definition}

\begin{definition}\label{def:k_in_MAJORITY_BN}
Let $k$ be a positive integer. If a BN satisfies that all of its Boolean functions are the $k$-ary majority function, then the BN is said to be a $k$-in-MAJORITY-BN.
\end{definition}

\begin{definition}\label{def:2k_in_MTBI_BN}
Let $k$ be a positive integer. If a BN satisfies that all of its Boolean functions are the $2k$-ary MTBI function, then the BN is said to be a $2k$-in-MTBI-BN.
\end{definition}

\begin{definition}\label{def:k_k_MAJORITY_BN}
Let $k$ be a positive integer. If a $k$-in-MAJORITY-BN satisfies that each node of its associated directed graph has in-degree $k$ and out-degree $k$, then the BN is said to be a $k$-$k$-MAJORITY-BN.
\end{definition}

\begin{definition}
Let $k$ be a positive integer. If a $2k$-in-MTBI-BN satisfies that each node of its associated directed graph has in-degree $2k$ and out-degree $2k$, then the BN is said to be a $2k$-$2k$-MTBI-BN.
\end{definition}

\subsection{General Lower and Upper Bounds}\label{section:threshold_general_bounds}

In this section, we derive lower-bound-related inequalities for the number of control nodes of $k$-in-MAJORITY-BNs and $2k$-in-MTBI-BNs. We also derive general upper bounds for the number of control nodes of $(2k+1)$-$(2k+1)$-MAJORITY-BNs, $2k$-$2k$-MAJORITY-BNs and $2k$-$2k$-MTBI-BNs.

First, we present our results for lower bounds for $k$-in-MAJORITY-BNs and $2k$-in-MTBI-BNs.

\begin{theorem}\label{thm:k_in_MAJOR_BN_ineq_thm}
Fix arbitrary positive integers $n \geq k \geq 3$, $s \geq 1$.
Consider any $n$-node controllable $k$-in-MAJORITY-BN such that for all states
$\mathbf{a} \neq \mathbf{b} \in \{ 0, 1 \}^{n}$, there exists a control scheme 
$\mathbf{u}(0)$, $\mathbf{u}(1)$, \ldots, $\mathbf{u}(t)$ (where $t \leq s$) which drives the BN from $\mathbf{a}$ (time $0$) to $\mathbf{b}$ (time $t+1$). Let $U \subseteq \{ x_{1}, x_{2}, \ldots, x_{n} \}$ be the control-node set of the BN and $m \coloneqq |U|$.
Then, $m \geq \left\lceil \frac{k}{2} \right\rceil$. 
Moreover, $m$ satisfies the inequality
\begin{equation}\label{eq:k_in_MAJOR_BN_ineq_thm}
2^{n} \leq 2^{(s+1)m} - \left( \sum^{\left\lceil \frac{k}{2} \right\rceil - 1}_{i = 1} \binom{m}{i} \right) \sum^{s}_{t = 1} 2^{tm}.
\end{equation}
\end{theorem}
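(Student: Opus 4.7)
Plan: I would establish both claims by exploiting the defining robustness of the $k$-ary majority: its output is unchanged by flipping fewer than $\lceil k/2 \rceil$ of its $k$ inputs.

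For the lower bound $m \geq \lceil k/2 \rceil$, I would argue by contradiction starting from $\mathbf{a} = \mathbf{0}$. Since $F(\mathbf{0}) = \mathbf{0}$, only the $m$ control nodes can be set to $1$ at time $1$, so $\mathbf{x}(1)$ has at most $m$ ones. Inductively, whenever $\mathbf{x}(t)$ has at most $m < \lceil k/2 \rceil$ ones, every non-control node sees fewer than $\lceil k/2 \rceil$ ones among its $k$ inputs and outputs $0$; hence $\mathbf{x}(t+1)$ again has at most $m$ ones, all in $U$-positions. Every state reachable from $\mathbf{0}$ is then zero on $V \setminus U$, contradicting controllability (the all-ones state is unreachable). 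Alternatively, this bound can be deduced from the inequality itself: when $m < \lceil k/2 \rceil$ the right-hand side of \eqref{eq:k_in_MAJOR_BN_ineq_thm} collapses to $2^m$, which is incompatible with $2^n$ on the left once $n \geq k \geq 3$.

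For the inequality, the key structural fact is that if $\mathbf{y}' = \mathbf{y} \oplus \mathbf{e}$ with $\mathbf{e}$ supported on $U$ and $|\mathbf{e}| < \lceil k/2 \rceil$, then $F(\mathbf{y}) = F(\mathbf{y}')$, since each coordinate of $F$ is a $k$-ary majority whose inputs are flipped in fewer than $\lceil k/2 \rceil$ places. Consequently, $\mathbf{y}$ and $\mathbf{y}'$ yield the same one-step successor under any given control. I would fix a starting state $\mathbf{a}$ and examine the $2^{(s+1)m}$ length-$(s+1)$ trajectories from $\mathbf{a}$. For each $t \in \{1, \ldots, s\}$, each nonzero $\mathbf{e}$ supported on $U$ with $|\mathbf{e}| < \lceil k/2 \rceil$, and each length-$t$ control prefix $(\mathbf{u}(0), \ldots, \mathbf{u}(t-1))$, replacing $\mathbf{u}(t-1)$ by $\mathbf{u}(t-1) \oplus \mathbf{e}$ produces a twin length-$t$ trajectory whose endpoint differs from the original by $\mathbf{e}$; extending both by any common suffix $(\mathbf{u}(t), \ldots, \mathbf{u}(s))$ yields two length-$(s+1)$ trajectories with identical endpoints, because from time $t+1$ onwards the two coincide. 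Tabulating these collisions produces $\bigl(\sum_{i=1}^{\lceil k/2 \rceil - 1}\binom{m}{i}\bigr) \sum_{t=1}^{s} 2^{tm}$ redundant length-$(s+1)$ trajectories (indexed by triples $(t, \mathbf{e}, \text{prefix})$), so the number of distinct length-$(s+1)$ endpoints from $\mathbf{a}$ is at most $2^{(s+1)m}$ minus this quantity. Controllability then forces this count to be at least $2^n$, giving \eqref{eq:k_in_MAJOR_BN_ineq_thm}.

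The main obstacle will be the collision bookkeeping: each redundant length-$(s+1)$ trajectory must be attributed to exactly one $(t, \mathbf{e}, \text{prefix})$ triple so that the subtraction is exactly $\bigl(\sum_{i=1}^{\lceil k/2 \rceil - 1}\binom{m}{i}\bigr)\sum_{t=1}^{s} 2^{tm}$, neither more nor less. Composing multiple single-perturbation twins can push the intermediate offset outside the safe weight range where $F(\mathbf{y}) = F(\mathbf{y}\oplus\mathbf{e})$ is guaranteed, so only single-perturbation collisions can be attributed per triple; carefully pinning down which single twin each redundant trajectory corresponds to is the delicate step that produces the exact form of the subtraction term.
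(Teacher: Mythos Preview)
Your argument for $m \geq \lceil k/2 \rceil$ is fine and matches the paper. The problem is the ``key structural fact'' underpinning your collision count: you claim that if $\mathbf{y}' = \mathbf{y} \oplus \mathbf{e}$ with $\mathbf{e}$ supported on $U$ and $|\mathbf{e}| < \lceil k/2\rceil$, then $F(\mathbf{y}) = F(\mathbf{y}')$. This is false. The $k$-ary majority is \emph{not} invariant under flipping fewer than $\lceil k/2\rceil$ inputs at an arbitrary point; it is only invariant under such flips when the input is far from the threshold. For instance, with $k=3$ a single flip sends $(1,1,0)\mapsto(0,1,0)$ and the output drops from $1$ to $0$. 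So two trajectories that differ at time $t$ by a small $\mathbf{e}$ supported on $U$ need not merge at time $t+1$, and your collision bookkeeping collapses.

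The paper does not attempt a general robustness argument. Instead it fixes the starting state $\mathbf{a}=\mathbf{0}_n$ and uses the much weaker (and correct) fact that if $\mathbf{y}$ has fewer than $\lceil k/2\rceil$ ones \emph{in total}, then every majority sees fewer than $\lceil k/2\rceil$ ones and $F(\mathbf{y})=\mathbf{0}_n$. Because the set $A_0$ of states reachable in one step from $\mathbf{0}_n$ is exactly the vectors supported on $U$, the states in $A_0$ with fewer than $\lceil k/2\rceil$ ones all map back into $A_0$ and contribute nothing new at time $2$. This gives $|B_1|\le 2^m\bigl(2^m-\sum_{i=0}^{\lceil k/2\rceil-1}\binom{m}{i}\bigr)$; for $t\ge 2$ the paper only uses the trivial bound $|B_{t+1}|\le 2^m|B_t|$, so the entire ``savings'' in the inequality comes from this single base step. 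To repair your approach you would need to route all the savings through the first step from $\mathbf{0}_n$ rather than claiming collisions at arbitrary intermediate states.
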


\begin{proof}
It is easy to see that $m = |U| \geq \left\lceil \frac{k}{2} \right\rceil$ must hold; otherwise, 
if $\mathbf{x}(0) = \mathbf{0}_{n}$, then for all $t \in \mathbb{Z}_{\geq 0}$, $F(\mathbf{x}(t)) = \mathbf{0}_{n}$, and hence there is no control scheme which drives the BN from $\mathbf{0}_{n}$ to $\mathbf{1}_{n}$.

Next, we prove the inequality given by Eq.\@ (\ref{eq:k_in_MAJOR_BN_ineq_thm}).

First, we note that if $m = n$, then
\begin{align}
2^{(s+1)m} - \left( \sum^{\left\lceil \frac{k}{2} \right\rceil - 1}_{i = 1} \binom{m}{i} \right) \sum^{s}_{t = 1} 2^{tm}
&= 2^{(s+1)n} - \left( \sum^{n}_{i = 0} \binom{n}{i} \right) \sum^{s}_{t = 1} 2^{tn} + \sum^{s}_{t = 1} 2^{tn}
+ \left( \sum^{n}_{i = \left\lceil \frac{k}{2} \right\rceil} \binom{n}{i} \right) \sum^{s}_{t = 1} 2^{tn} \\
&= 2^{(s+1)n} - \sum^{s+1}_{t = 2} 2^{tn} + \sum^{s}_{t = 1} 2^{tn}
+ \left( \sum^{n}_{i = \left\lceil \frac{k}{2} \right\rceil} \binom{n}{i} \right) \sum^{s}_{t = 1} 2^{tn} \\
&= 2^{n} + \left( \sum^{n}_{i = \left\lceil \frac{k}{2} \right\rceil} \binom{n}{i} \right) \sum^{s}_{t = 1} 2^{tn} \\
&> 2^{n}.
\end{align}
Hence, the inequality of the theorem holds. From now on, we assume that $\left\lceil \frac{k}{2} \right\rceil \leq m < n$.

WLOG, we assume that $U = \{ x_{1}, x_{2}, \ldots, x_{m} \}$. For all integers $t \in [0, s]$, let $A_{t}$ be the set of all states $\mathbf{b} \in \{ 0, 1 \}^{n}$ such that there exists a control scheme $\mathbf{u}(0)$, $\mathbf{u}(1)$, \ldots, $\mathbf{u}(t)$ which drives the BN from $\mathbf{0}_{n}$ (time 0) to $\mathbf{b}$ (time $t+1$). Define
$B_{0} \coloneqq A_{0}$, $B_{1} \coloneqq A_{1} \setminus A_{0}$, \ldots, 
$B_{s} \coloneqq A_{s} \setminus \left( \bigcup^{s-1}_{i = 0} A_{i} \right)$.
Note that $F(\mathbf{0}_{n}) = \mathbf{0}_{n}$. Therefore,
\begin{equation}\label{eq:explicit_B0_aka_A0}
B_{0} = A_{0} = \{ (a_{1}, a_{2}, \ldots, a_{m}, 0, 0, \ldots, 0) \in \{ 0, 1 \}^{n} : a_{1}, a_{2}, \ldots, a_{m} \in \{ 0, 1\} \}.
\end{equation}
So, $|B_{0}| = 2^{m}$.
On the other hand, by the assumption of the theorem and the fact that $\mathbf{0}_{n} \in B_{0}$,
\begin{equation}\label{eq:union_At_union_Bt}
\{ 0, 1 \}^{n} = \bigcup^{s}_{t = 0} A_{t} = \bigcup^{s}_{t = 0} B_{t}.
\end{equation}

Now, we are going to prove by induction that for $t = 1, 2, \ldots, s$,
\begin{equation}\label{eq:induction_Bt_ineq}
|B_{t}| \leq 2^{tm}\left( 2^{m} - \sum^{\left\lceil \frac{k}{2} \right\rceil - 1}_{i = 0} \binom{m}{i} \right).
\end{equation}
Let's call this statement $P(t)$.

We prove the base case first. Fix arbitrary 
$\mathbf{b} = (b_{1}, b_{2}, \ldots, b_{m}, b_{m+1}, \ldots, b_{n}) \in B_{1} = A_{1} \setminus A_{0}$.
Because $\mathbf{b} \notin A_{0}$, some of $b_{m+1}$, $b_{m+2}$, \ldots, $b_{n}$ equal $1$.
Moreover, there exist $\mathbf{a} \in A_{0} = B_{0}$, $c_{1}, c_{2}, \ldots, c_{m} \in \{ 0, 1 \}$ such that
$F(\mathbf{a}) = (c_{1}, \ldots, c_{m}, b_{m+1}, \ldots, b_{n})$.
Note that $\mathbf{a}$ contains at least $\left\lceil \frac{k}{2} \right\rceil$ entries of 1s; otherwise,
$F(\mathbf{a}) = \mathbf{0}_{n}$ because each constituent Boolean function of the BN is the $k$-ary majority function.
Define $G : \{ 0, 1 \}^{n} \to \{ 0, 1 \}^{n-m}$ such that 
$G(y_{1}, y_{2}, \ldots, y_{n}) \coloneqq (y_{m+1}, y_{m+2}, \ldots, y_{n})$.
Let
\begin{equation}\label{}
C_{0} \coloneqq \left\{ \mathbf{y} \in B_{0} : \text{$\mathbf{y}$ contains at least $\left\lceil \frac{k}{2} \right\rceil$ $1$s} \right\}.
\end{equation}
We see that $G(B_{1}) \subseteq G(F(C_{0}))$.
Hence,
\begin{equation}\label{eq:bound_G(B1)_from_above}
|G(B_{1})| \leq |G(F(C_{0}))| \leq |C_{0}| = 2^{m} - \sum^{\left\lceil \frac{k}{2} \right\rceil - 1}_{i = 0} \binom{m}{i},
\end{equation}
which implies
\begin{equation}\label{eq:bound_B1_from_above}
|B_{1}| = 2^{m} |G(B_{1})| \leq 2^{m} \left[ 2^{m} - \sum^{\left\lceil \frac{k}{2} \right\rceil - 1}_{i = 0} \binom{m}{i} \right].
\end{equation}
Therefore, $P(1)$ is true.

Now, let's prove the inductive step. Assume that $P(t)$ is true for some $t \in \{ 1, 2, \ldots, s - 1 \}$.
Hence, $|B_{t}| \leq 2^{tm} \left( 2^{m} - \sum^{\left\lceil \frac{k}{2} \right\rceil - 1}_{i = 0} \binom{m}{i} \right)$.
Fix arbitrary $\mathbf{d} = (d_{1}, d_{2}, \ldots, d_{m}, d_{m+1}, \ldots, d_{n}) \in B_{t+1}$.
Then, there exist $\mathbf{v} \in B_{t}$, $w_{1}, w_{2}, \ldots, w_{m} \in \{ 0, 1 \}$ such that
$F(\mathbf{v}) = (w_{1}, w_{2}, \ldots, w_{m}, d_{m+1}, \ldots, d_{n})$. Hence, $G(B_{t+1}) \subseteq G(F(B_{t}))$.
Then,
\begin{equation}\label{eq:bound_G(B_{t+1})_from_above}
|G(B_{t+1})| \leq |G(F(B_{t}))| \leq |B_{t}| 
\leq 2^{tm} \left( 2^{m} - \sum^{\left\lceil \frac{k}{2} \right\rceil - 1}_{i = 0} \binom{m}{i} \right),
\end{equation}
which implies
\begin{equation}\label{eq:bound_Btp1_from_above}
|B_{t+1}| = 2^{m} |G(B_{t+1})| \leq 2^{(t+1)m} \left( 2^{m} - \sum^{\left\lceil \frac{k}{2} \right\rceil - 1}_{i = 0} \binom{m}{i} \right).
\end{equation}
Therefore, $P(t+1)$ is true. This completes the inductive step. Hence, for $t = 1, 2, \ldots, s$, $P(t)$ is true.

By Eq.\@ (\ref{eq:union_At_union_Bt}),
\begin{align}
2^{n} = |\{ 0, 1 \}^{n}| = \left| \bigcup^{s}_{t = 0} B_{t} \right| \leq \sum^{s}_{t = 0} |B_{t}|
&= |B_{0}| + \sum^{s}_{t = 1} |B_{t}| \\
&\leq 2^{m} + \sum^{s}_{t = 1} 2^{tm}\left( 2^{m} - \sum^{\left\lceil \frac{k}{2} \right\rceil - 1}_{i = 0} \binom{m}{i} \right) \\
&= 2^{m} + \left( \sum^{s+1}_{j = 2} 2^{jm} \right) - \left( \sum^{\left\lceil \frac{k}{2} \right\rceil - 1}_{i = 0} \binom{m}{i} \right) \sum^{s}_{t = 1} 2^{tm} \\
&= 2^{(s+1)m} - \left( \sum^{\left\lceil \frac{k}{2} \right\rceil - 1}_{i = 1} \binom{m}{i} \right) \sum^{s}_{t = 1} 2^{tm}
\end{align}
which gives Eq.\@ (\ref{eq:k_in_MAJOR_BN_ineq_thm}). The proof is complete.
\end{proof}

\begin{corollary}\label{coroll:k_in_MAJOR_BN_lower_bound}
Fix arbitrary positive integers $n \geq k \geq 3$, $s \geq 1$.
Consider any $n$-node controllable $k$-in-MAJORITY-BN such that for all states
$\mathbf{a} \neq \mathbf{b} \in \{ 0, 1 \}^{n}$, there exists a control scheme 
$\mathbf{u}(0)$, $\mathbf{u}(1)$, \ldots, $\mathbf{u}(t)$ (where $t \leq s$) which drives the BN from $\mathbf{a}$ (time $0$) to $\mathbf{b}$ (time $t+1$). Let $U \subseteq \{ x_{1}, x_{2}, \ldots, x_{n} \}$ be the control-node set of the BN and $m \coloneqq |U|$.
Then, $m \geq \max(\left\lceil \frac{k}{2} \right\rceil, \left\lceil \frac{n + 1}{s + 1} \right\rceil)$.
\end{corollary}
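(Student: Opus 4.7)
The plan is to derive the second lower bound $m \geq \lceil (n+1)/(s+1) \rceil$ directly from inequality (\ref{eq:k_in_MAJOR_BN_ineq_thm}) of Theorem \ref{thm:k_in_MAJOR_BN_ineq_thm}, which has already given us $m \geq \lceil k/2 \rceil$. Since the maximum of two quantities is what we want, we just need to push the inequality a little further.

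First, I would observe that the subtracted term on the right-hand side of (\ref{eq:k_in_MAJOR_BN_ineq_thm}), namely $\bigl(\sum_{i=1}^{\lceil k/2\rceil - 1}\binom{m}{i}\bigr)\sum_{t=1}^{s} 2^{tm}$, is \emph{strictly positive} under the standing hypotheses. Indeed, $k \geq 3$ forces $\lceil k/2 \rceil - 1 \geq 1$, so the outer sum includes the term $\binom{m}{1} = m$; and since $m \geq \lceil k/2 \rceil \geq 2$ (from Theorem \ref{thm:k_in_MAJOR_BN_ineq_thm}) while $s \geq 1$, both factors are at least $2$. Therefore (\ref{eq:k_in_MAJOR_BN_ineq_thm}) upgrades to the strict inequality
\begin{equation}
2^{n} < 2^{(s+1)m}.
\end{equation}

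Next, I would take logarithms base $2$ to obtain $n < (s+1)m$, i.e., $(s+1)m \geq n+1$, which rearranges to $m \geq (n+1)/(s+1)$. Since $m$ is an integer, this gives $m \geq \lceil (n+1)/(s+1) \rceil$. Combining this with $m \geq \lceil k/2 \rceil$ from Theorem \ref{thm:k_in_MAJOR_BN_ineq_thm} yields the claimed bound $m \geq \max(\lceil k/2 \rceil, \lceil (n+1)/(s+1) \rceil)$.

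There is really no hard step here; the only subtlety is the $n+1$ rather than $n$ in the numerator, which is precisely why one must argue that the subtracted term is positive (not merely non-negative) before discarding it. Everything else is a one-line consequence of the theorem already proven.
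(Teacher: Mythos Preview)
Your proof is correct and essentially identical to the paper's own argument: both show that the subtracted term in (\ref{eq:k_in_MAJOR_BN_ineq_thm}) is strictly positive, deduce $2^{n} < 2^{(s+1)m}$, and conclude $(s+1)m \geq n+1$. The only cosmetic difference is that the paper phrases the last step as a proof by contradiction (assuming $(s+1)m \leq n$ and deriving $2^{n} < 2^{n}$), whereas you argue directly.
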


\begin{proof}
In Theorem \ref{thm:k_in_MAJOR_BN_ineq_thm}, we have already shown that $m \geq \left\lceil \frac{k}{2} \right\rceil$. Assume for a contradiction that $(s + 1) m \leq n$. By Theorem \ref{thm:k_in_MAJOR_BN_ineq_thm},
\begin{equation}\label{eq:k_in_MAJOR_BN_lower_bound_conrtadiction}
2^{n} 
\leq 2^{(s+1)m} - \left( \sum^{\left\lceil \frac{k}{2} \right\rceil - 1}_{i = 1} \binom{m}{i} \right) \sum^{s}_{t = 1} 2^{tm} 
< 2^{(s + 1)m} \leq 2^{n},
\end{equation}
which gives a contradiction. Hence, $(s + 1) m \geq n + 1$, which implies 
$m \geq \left\lceil \frac{n + 1}{s + 1} \right\rceil$.
\end{proof}

By using the same line of reasoning, we can prove the following result for $2k$-in-MTBI-BNs:

\begin{theorem}\label{thm:2k_in_MTBI_BN_ineq_thm}
Let $n$, $k$, $s$ be positive integers such that $n \geq 2k \geq 4$.
Consider any $n$-node controllable $2k$-in-MTBI-BN such that for all states
$\mathbf{a} \neq \mathbf{b} \in \{ 0, 1 \}^{n}$, there exists a control scheme 
$\mathbf{u}(0)$, $\mathbf{u}(1)$, \ldots, $\mathbf{u}(t)$ (where $t \leq s$) which drives the BN from $\mathbf{a}$ (time $0$) to $\mathbf{b}$ (time $t+1$). Let $U \subseteq \{ x_{1}, x_{2}, \ldots, x_{n} \}$ be the control-node set of the BN and $m \coloneqq |U|$.
Then, $m \geq k$. 
Moreover, $m$ satisfies the inequality
\begin{equation}\label{eq:2k_in_MTBI_BN_ineq_thm}
2^{n} \leq 2^{(s+1)m} - \left( \sum^{k - 1}_{i = 1} \binom{m}{i} \right) \sum^{s}_{t = 1} 2^{tm}.
\end{equation}
Hence, $m \geq \max(k, \left\lceil \frac{n + 1}{s + 1} \right\rceil)$.
\end{theorem}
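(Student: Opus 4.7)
The plan is to mirror the proof of Theorem \ref{thm:k_in_MAJOR_BN_ineq_thm} with the single structural change that the ``activation threshold'' of the $k$-ary majority function, namely $\lceil k/2 \rceil$, is replaced by the activation threshold $k$ of the $2k$-ary MTBI function. First I would establish the baseline bound $m \geq k$ by a reachability argument from $\mathbf{0}_{n}$. If $m < k$, then at every time $t \geq 1$ the non-control coordinates of $\mathbf{x}(t)$ evaluate to $g$ applied to a vector with at most $m < k$ ones, and Eq.\@ (\ref{eq:2k-ary_MTBI_function_intuition}) forces that value to be $0$. Hence the non-control coordinates remain stuck at $0$ forever, so $\mathbf{1}_{n}$ is unreachable and controllability fails.

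Next I would handle the degenerate case $m = n$ by the same algebraic manipulation used in Theorem \ref{thm:k_in_MAJOR_BN_ineq_thm} (rewriting the right-hand side as $2^{n} + \bigl(\sum_{i = k}^{n} \binom{n}{i}\bigr)\sum_{t = 1}^{s} 2^{tn}$, which is visibly $> 2^{n}$), and then assume $k \leq m < n$. WLOG take $U = \{x_{1}, \ldots, x_{m}\}$, define $A_{t}$ as the set of states reachable from $\mathbf{0}_{n}$ within $t+1$ steps, and set $B_{0} := A_{0}$ and $B_{t} := A_{t} \setminus \bigcup_{i < t} A_{i}$. Since $F(\mathbf{0}_{n}) = \mathbf{0}_{n}$ (the MTBI function outputs $0$ on the all-zero input), we again get $B_{0} = \{(a_{1}, \ldots, a_{m}, 0, \ldots, 0)\}$, so $|B_{0}| = 2^{m}$, and $\bigcup_{t = 0}^{s} B_{t} = \{0,1\}^{n}$.

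The crux is the analogue of the set $C_{0}$. By Eq.\@ (\ref{eq:2k-ary_MTBI_function_intuition}) the MTBI function returns $0$ whenever its input has at most $k - 1$ ones, so only predecessors in $B_{0}$ carrying at least $k$ ones can produce a non-control coordinate equal to $1$ under $F$. Set $C_{0} := \{\mathbf{y} \in B_{0} : \mathbf{y} \text{ has at least } k \text{ ones}\}$; then $|C_{0}| = 2^{m} - \sum_{i = 0}^{k - 1} \binom{m}{i}$. With $G(y_{1}, \ldots, y_{n}) := (y_{m+1}, \ldots, y_{n})$, the same projection argument gives $G(B_{1}) \subseteq G(F(C_{0}))$, hence $|B_{1}| = 2^{m}|G(B_{1})| \leq 2^{m}\bigl(2^{m} - \sum_{i = 0}^{k - 1}\binom{m}{i}\bigr)$. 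Induction on $t$ using $G(B_{t+1}) \subseteq G(F(B_{t}))$ and $|B_{t+1}| = 2^{m}|G(B_{t+1})|$ yields $|B_{t}| \leq 2^{tm}\bigl(2^{m} - \sum_{i = 0}^{k - 1}\binom{m}{i}\bigr)$ for $t = 1, \ldots, s$. Summing $2^{n} = |\bigcup_{t} B_{t}| \leq |B_{0}| + \sum_{t = 1}^{s} |B_{t}|$ and simplifying exactly as in the majority case produces Eq.\@ (\ref{eq:2k_in_MTBI_BN_ineq_thm}). Finally, the bound $m \geq \lceil (n+1)/(s+1) \rceil$ follows by the same contradiction used in Corollary \ref{coroll:k_in_MAJOR_BN_lower_bound}: assuming $(s+1)m \leq n$ makes the right-hand side of Eq.\@ (\ref{eq:2k_in_MTBI_BN_ineq_thm}) strictly less than $2^{(s+1)m} \leq 2^{n}$.

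The main obstacle is conceptual rather than computational: verifying that the only property of the $k$-ary majority function used in Theorem \ref{thm:k_in_MAJOR_BN_ineq_thm} is the threshold behaviour ``output $= 0$ whenever the input has fewer than $\lceil k/2 \rceil$ ones,'' so that substituting the MTBI threshold $k$ lets the argument transfer verbatim. The presence of the tie-breaking input $x_{1}$ in the MTBI function is immaterial here, since the counting bound $|C_{0}| \leq 2^{m} - \sum_{i = 0}^{k - 1}\binom{m}{i}$ depends only on the minimum Hamming weight needed to activate the function, not on which particular inputs of weight $k$ activate it.
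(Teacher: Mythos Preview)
Your proposal is correct and takes essentially the same approach as the paper: the paper's own proof for this theorem consists of the single sentence ``By using the same line of reasoning, we can prove the following result for $2k$-in-MTBI-BNs,'' and your write-up is precisely that line of reasoning spelled out, with the activation threshold $\lceil k/2\rceil$ replaced by $k$. Your closing observation that only the minimum-weight-to-activate property (and not the tie-breaking behavior) is used is exactly the point that makes the transfer work.
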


Next, we prove some upper bound theorems for $(2k+1)$-$(2k+1)$-MAJORITY-BNs, $2k$-$2k$-MAJORITY-BNs and $2k$-$2k$-MTBI-BNs. To derive these results, the following proposition is required.

\begin{proposition}\label{prop:special_subset_of_K-K-regular_graph}
Let $n$, $K$, $L$ be integers such that $n \geq K \geq 1$ and $L \in [0, K - 1]$.
Let $G = (V, E)$ be an $n$-node directed graph such that each node has in-degree $K$ and out-degree $K$ and for all nodes $x \in V$, there exists at most one self-loop from $x$ to itself.
Let $\Gamma^{-}(x)$ denote the set of all in-neighbours of the node $x$ in $G$.
If $R \subseteq V$ satisfies that for all $y \in R$, 
$\left| (\Gamma^{-}(y) \cap R) \setminus \{ y \} \right| \leq L$,
then $|R| \leq \left( \frac{K}{2K - L - 1} \right) n$.
\end{proposition}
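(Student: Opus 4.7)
The plan is a straightforward double-counting argument on the edges of $G$. Let $r \coloneqq |R|$, let $S \coloneqq V \setminus R$ (so $|S| = n - r$), and for any $A, B \subseteq V$ write $e(A, B)$ for the number of edges of $G$ going from a vertex of $A$ to a vertex of $B$. First I would use the hypothesis on $R$ to bound the number of edges internal to $R$: for each $y \in R$, the set $\Gamma^{-}(y) \cap R$ consists of the at most $L$ vertices in $(\Gamma^{-}(y) \cap R) \setminus \{ y \}$, together with $y$ itself in case there is a self-loop at $y$. Since the hypothesis allows at most one self-loop per node, this gives $|\Gamma^{-}(y) \cap R| \leq L + 1$, and summing over $y \in R$ yields $e(R, R) \leq r(L + 1)$.

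Next I would exploit the two degree constraints to trap $e(R, S)$ between a lower and an upper bound. Since every node of $R$ has out-degree $K$, the total out-edge count from $R$ is $rK$, so
\begin{equation*}
e(R, S) \;=\; rK - e(R, R) \;\geq\; r(K - L - 1).
\end{equation*}
Since every node of $S$ has in-degree $K$, the total in-edge count to $S$ is $(n - r) K$, giving $e(R, S) \leq (n - r) K$.

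Chaining these two bounds gives $r(K - L - 1) \leq (n - r) K$, which rearranges to $r(2K - L - 1) \leq nK$. The assumption $L \leq K - 1$ forces $2K - L - 1 \geq K \geq 1 > 0$, so division is legitimate and yields the claimed bound $|R| \leq \frac{K n}{2K - L - 1}$. The only delicate bookkeeping point is correctly absorbing the possible self-loop at $y$ into the ``$+1$'' of $L + 1$; after that, everything reduces to an elementary edge count and I foresee no serious obstacle.
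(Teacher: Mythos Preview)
Your proposal is correct and follows essentially the same approach as the paper: bound $e(R,R) \le (L+1)|R|$ via the hypothesis (absorbing the possible self-loop into the ``$+1$''), use the out-degree condition to get $e(R,S) \ge (K-L-1)|R|$, use the in-degree condition to get $e(R,S) \le K(n-|R|)$, and rearrange. The paper's argument is identical up to notation.
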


\begin{proof}
For all $X, Y \subseteq V$, let
$E(X, Y) \coloneqq \{ (x_{i}, x_{j}) \in E : \text{$x_{i} \in X$ and $x_{j} \in Y$} \}$.
Because for all $y \in R$, there exist at most one self-loop and at most $L$ edges from $R \setminus \{ y \}$ to $y$,
\begin{equation}\label{eq:E(R,R)_upper_bound}
|E(R, R)| \leq (L + 1) |R|.
\end{equation}
On the other hand, the set $E(R, V)$ is the disjoint union of $E(R, V \setminus R)$ and $E(R, R)$. Therefore,
\begin{equation}
K|R| = |E(R, V)| = |E(R, V \setminus R)| + |E(R, R)|,
\end{equation}
which implies
\begin{align}
|E(R, V \setminus R)| 
&= K|R| - |E(R, R)| \\
&\geq K|R| - (L + 1)|R| \quad \text{(by (\ref{eq:E(R,R)_upper_bound}))} \\
&= (K - L - 1)|R|. \label{eq:2nd_important_ineq}
\end{align}
Furthermore, each vertex in $V \setminus R$ has exactly $K$ in-neighbours in $G$. Therefore, 
\begin{equation}\label{eq:3rd_important_ineq}
|E(R, V \setminus R)| \leq |E(V, V \setminus R)| = K|V \setminus R| = K(n - |R|).
\end{equation}
By Eq.\@ (\ref{eq:2nd_important_ineq}) and (\ref{eq:3rd_important_ineq}),
\begin{align}
&(K - L - 1)|R| \leq |E(R, V \setminus R)| \leq K(n - |R|) \\
\Rightarrow\ & (2K - L - 1)|R| \leq Kn \\
\Rightarrow\ & |R| \leq \left( \frac{K}{2K - L - 1} \right) n.
\end{align}
The proof is complete.
\end{proof}

We are now ready to present the upper bound theorems.

\begin{theorem}\label{thm:3k2_6k_2_3k2_7k_2}
Fix arbitrary positive integers $k$, $n$ such that $n \geq 2k + 1$.
For any $n$-node $(2k+1)$-$(2k+1)$-MAJORITY-BN, there exists a control-node set 
$U \subseteq \{ x_{1}, x_{2}, \ldots, x_{n} \}$ with size at most 
$\left( \frac{3k^{2} + 6k + 2}{3k^{2} + 7k + 2} \right) n$ such that for all states $\mathbf{a}, \mathbf{b} \in \{ 0, 1 \}^{n}$, there exists a control scheme $\mathbf{u}(0)$, $\mathbf{u}(1)$ which drives the BN from $\mathbf{a}$ (time $0$) to $\mathbf{b}$ (time $2$).
\end{theorem}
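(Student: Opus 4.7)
The plan is to reformulate 2-step controllability as a combinatorial condition on the free set $R \coloneqq V \setminus U$, then combine an explicit construction with Proposition \ref{prop:special_subset_of_K-K-regular_graph} to bound $|U|$ from above. First I would analyze the control requirement at time $2$: for any $\mathbf{a}, \mathbf{b}$ the intermediate state $\mathbf{x}(1)$ must satisfy $x_j(1) = f_j(\mathbf{a})$ for every $j \in R$ while $x_j(1)$ is arbitrary for every $j \in U$, and one more application of the update rule must give $f_i(\mathbf{x}(1)) = b_i$ for all $i \in R$. Since $f_i$ is the $(2k+1)$-ary majority function, forcing $f_i$ to either $0$ or $1$ independently of the worst-case values of the $(2k+1) - r_i$ fixed inputs requires $r_i \coloneqq |\Gamma^-(x_i) \cap U| \geq k+1$, equivalently $|\Gamma^-(x_i) \cap R| \leq k$. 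To decouple the simultaneous satisfaction of all $b_i$'s, I would strengthen this to the existence of $k+1$ \emph{private} control in-neighbors of each $x_i \in R$ (in-neighbors in $U$ that are not in-neighbors of any other free node), enabling independent adjustment of each $f_i$.

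Second, I would construct such an $R$ by a greedy packing: starting with every node unmarked, iteratively choose a node $x_i$ together with $k+1$ of its currently unmarked in-neighbors, declare $x_i$ a free node and its chosen in-neighbors private controls, then mark all $k+2$ selected nodes. When the greedy halts, every unmarked node $x$ fails to admit $k+1$ unmarked in-neighbors, and therefore $|\Gamma^-(x) \cap M| \geq k+1$, where $M$ denotes the marked set and $|M| = (k+2)|R|$.

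Third, I would apply the double-counting inequality underlying Proposition \ref{prop:special_subset_of_K-K-regular_graph} with $K = 2k+1$. The bounds $|E(M, V \setminus M)| \geq (k+1)\bigl(n - (k+2)|R|\bigr)$ and $|E(M, V \setminus M)| \leq K \cdot |M| = (2k+1)(k+2)|R|$, together with a careful accounting of the self-loop and center-versus-private roles inside each star (which is exactly the kind of bookkeeping the proposition performs), yield a lower bound on $|R|$ that simplifies to $|R| \geq \frac{k}{(3k+1)(k+2)} n$; consequently $|U| = n - |R| \leq \frac{3k^2 + 6k + 2}{3k^2 + 7k + 2} n$.

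The main obstacle is to reconcile the local sufficient condition with the global simultaneity requirement: one must rigorously verify that the private-in-neighbor structure is what makes the system of $|R|$ constraints $f_i(\mathbf{x}(1)) = b_i$ simultaneously satisfiable for every $\mathbf{a}, \mathbf{b}$, and then tune the greedy packing and the double-counting so that the constants in the denominator come out to exactly $(3k+1)(k+2)$ rather than a nearby product such as $(3k+2)(k+2)$ that a naive edge count would give. Matching this specific constant is the delicate step and is precisely where the proposition's parameter $L$ must be chosen in concert with the chosen star size.
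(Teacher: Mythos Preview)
Your plan matches the paper's proof: greedily pack disjoint ``stars'' consisting of a free node $y^i$ together with $k+1$ of its in-neighbors (which become control nodes), then bound the residual unmarked set via Proposition~\ref{prop:special_subset_of_K-K-regular_graph} with $K=2k+1$, $L=k$; the paper applies the proposition to the unmarked set directly, which is the complementary formulation of your edge count on $E(M,V\setminus M)$.

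Two small fixes will remove the hesitancy in your last paragraph. First, ``private'' is stronger than what the greedy delivers or what controllability needs: the procedure only guarantees that the assigned $(k+1)$-tuples are pairwise \emph{disjoint} across the $y^i$, not that they fail to be in-neighbors of other $y^j$. Disjointness alone already decouples the constraints, since setting each tuple to the target bit $b^i$ forces a strict majority among the $2k+1$ inputs of $y^i$ regardless of what the remaining $k$ inputs (possibly shared with other stars) carry. Second, the self-loop correction simply turns your $k+1$ into $k$ in the lower bound on $|E(M,V\setminus M)|$; your two displayed inequalities then read $k(n-|M|)\le (2k+1)|M|$, whence $|M|\ge \tfrac{k}{3k+1}n$ and $|R|=|M|/(k+2)\ge \tfrac{k}{(3k+1)(k+2)}n$ immediately. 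There is no delicate tuning: $L=k$ is forced by the greedy's termination condition, and the star size $k+2$ is forced by needing a majority of $2k+1$ inputs.
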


\begin{proof}
Let $G = (V, E)$ be the directed graph corresponding to the structure of the $(2k+1)$-$(2k+1)$-MAJORITY-BN. To form the required control-node set $U$, we carry out the following procedure:
\begin{description}
\item[Step 1.] Set $A$ to an empty list. Set $B$ to an empty list. Set $R \gets V$ and $i \gets 1$.

\item[Step 2.] Check whether there exist distinct nodes $x_{1}, x_{2}, \ldots, x_{k+1}, y \in R$ such that
$\{ x_{1}, x_{2}, \ldots, x_{k+1} \} \subseteq \Gamma^{-}(y)$ in $G$. If so, go to step 3. If not, go to step 7.

\item[Step 3.] Set $x^{i}_{1} \gets x_{1}$, $x^{i}_{2} \gets x_{2}$, \ldots, $x^{i}_{k+1} \gets x_{k+1}$, $y^{i} \gets y$.

\item[Step 4.] Append the list $[x^{i}_{1}, x^{i}_{2}, \ldots, x^{i}_{k+1}]$ to $A$ and append $y^{i}$ to $B$.

\item[Step 5.] Set $R \gets R \setminus \{ x^{i}_{1}, x^{i}_{2}, \ldots, x^{i}_{k+1}, y^{i} \}$.

\item[Step 6.] Set $i \gets i + 1$. Go to step 2.

\item[Step 7.] Output $A$, $B$, $R$.
\end{description}

Write $A = [[x^{1}_{1}, x^{1}_{2}, \ldots, x^{1}_{k+1}], [x^{2}_{1}, x^{2}_{2}, \ldots, x^{2}_{k+1}], \ldots, 
[x^{p}_{1}, x^{p}_{2}, \ldots, x^{p}_{k+1}]]$ and $B = [y^{1}, y^{2}, \ldots, y^{p}]$.
Note that $x^{1}_{1}$, $x^{1}_{2}$, \ldots, $x^{1}_{k+1}$, $x^{2}_{1}$, $x^{2}_{2}$, \ldots, $x^{2}_{k+1}$, \ldots, 
$x^{p}_{1}$, $x^{p}_{2}$, \ldots, $x^{p}_{k+1}$, $y^{1}$, $y^{2}$, \ldots, $y^{p}$ are distinct nodes and 
for $i = 1, 2, \ldots, p$, $\{ x^{i}_{1}, x^{i}_{2}, \ldots, x^{i}_{k+1} \} \subseteq \Gamma^{-}(y^{i})$. Moreover, the output $R$ satisfies that for all $z \in R$, $|(\Gamma^{-}(z) \cap R) \setminus \{ z \}| \leq k$ in $G$.
Furthermore, each node of $G$ has in-degree $2k+1$ and out-degree $2k+1$, and for all $x \in V$, there exists at most one self-loop in $G$ from $x$ to itself. Therefore, we can apply Proposition \ref{prop:special_subset_of_K-K-regular_graph} to deduce that
\begin{equation}\label{eq:(2k+1)_(3k+1)_bound}
|R| \leq \left( \frac{2k+1}{2(2k+1) - k - 1} \right)n = \left( \frac{2k+1}{3k+1} \right) n.
\end{equation}

We define the control-node set 
$U \coloneqq \{ x^{i}_{1}, x^{i}_{2}, \ldots, x^{i}_{k+1} : i = 1, 2, \ldots, p \} \cup R 
= V \setminus \{ y^{1}, y^{2}, \ldots, y^{p} \}$. Note that
\begin{align}
|U| 
= \frac{k+1}{k+2} |V \setminus R| + |R|
&= \frac{k+1}{k+2}(n - |R|) + |R| \\
&= \left( \frac{k+1}{k+2} \right)n + \frac{1}{k+2}|R| \\
&\leq \left( \frac{k+1}{k+2} \right)n + \frac{2k+1}{(k+2)(3k+1)} n  
\quad \text{(by Eq. (\ref{eq:(2k+1)_(3k+1)_bound}))}\\
&= \left(\frac{3k^{2} + 6k + 2}{3k^{2} + 7k + 2}\right) n. \label{eq:proved_3k2+7k+2_bound}
\end{align}

Finally, we show that for all states $\mathbf{a}, \mathbf{b} \in \{ 0, 1 \}^{n}$, there exists a control scheme 
$\mathbf{u}(0)$, $\mathbf{u}(1)$ (based on the control-node set $U$) which drives the BN 
from $\mathbf{a}$ (time $0$) to $\mathbf{b}$ (time $2$). Let $b^{1}$, $b^{2}$, \ldots, $b^{p}$ be the states of the nodes $y^{1}$, $y^{2}$, \ldots, $y^{p}$ in $\mathbf{b}$. Consider the evolution of the BN with initial state
$\mathbf{x}(0) = \mathbf{a}$. We can pick $\mathbf{u}(0)$ in a way that for $i = 1, 2, \ldots, p$,
$x^{i}_{1}(1) = x^{i}_{2}(1) = \ldots = x^{i}_{k+1}(1) = b^{i}$ in $\mathbf{x}(1)$. Then, $F(\mathbf{x}(1))$ satisfies that for $i = 1, 2, \ldots, p$, the node $y^{i}$ takes up the state $b^{i}$ 
because $x^{i}_{1}$, $x^{i}_{2}$, \ldots, $x^{i}_{k+1}$ are distinct nodes in $\Gamma^{-}(y^{i})$ and $|\Gamma^{-}(y^{i})| = 2k+1$. Now, we pick $\mathbf{u}(1)$ in a way that for all $x \in U$, $x(2)$ takes up the state of node $x$ in $\mathbf{b}$. Clearly, $\mathbf{x}(2) = \mathbf{b}$. 
The proof is complete.
\end{proof}

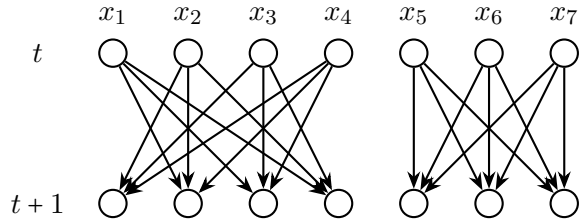
\begin{figure}[h]
\centering
\begin{tikzpicture}
\begin{scope}[every node/.style={circle,thick,draw}]
    \node (x1) at (0,2) {};
    \node (x2) at (1,2) {};
    \node (x3) at (2,2) {};
    \node (x4) at (3,2) {};
    \node (x5) at (4,2) {};
    \node (x6) at (5,2) {};
    \node (x7) at (6,2) {};
    \node (y1) at (0,0) {};
    \node (y2) at (1,0) {};
    \node (y3) at (2,0) {};
    \node (y4) at (3,0) {};
    \node (y5) at (4,0) {};
    \node (y6) at (5,0) {};
    \node (y7) at (6,0) {};
\end{scope}

\begin{scope}
    \node (L1) at (0,2.5) {$x_{1}$};
    \node (L2) at (1,2.5) {$x_{2}$};
    \node (L3) at (2,2.5) {$x_{3}$};
    \node (L4) at (3,2.5) {$x_{4}$};
    \node (L5) at (4,2.5) {$x_{5}$};
    \node (L6) at (5,2.5) {$x_{6}$};
    \node (L7) at (6,2.5) {$x_{7}$};
    \node (T) at (-1,2) {$t$};
    \node (Tp1) at (-1,0) {$t + 1$};
\end{scope}

\begin{scope}[>=Stealth,thick]
    \draw [->] (x1) -- (y2);
    \draw [->] (x1) -- (y3);
    \draw [->] (x1) -- (y4);
    \draw [->] (x2) -- (y1);
    \draw [->] (x2) -- (y2);
    \draw [->] (x2) -- (y4);
    \draw [->] (x3) -- (y1);
    \draw [->] (x3) -- (y3);
    \draw [->] (x3) -- (y4);
    \draw [->] (x4) -- (y1);
    \draw [->] (x4) -- (y2);
    \draw [->] (x4) -- (y3);
    \draw [->] (x5) -- (y5);
    \draw [->] (x5) -- (y6);
    \draw [->] (x5) -- (y7);
    \draw [->] (x6) -- (y5);
    \draw [->] (x6) -- (y6);
    \draw [->] (x6) -- (y7);
    \draw [->] (x7) -- (y5);
    \draw [->] (x7) -- (y6);
    \draw [->] (x7) -- (y7);
\end{scope}
\end{tikzpicture}
\caption{A 7-node 3-3-MAJORITY-BN}\label{fig:7_node_3_3_MAJORITY_BN}
\end{figure}

\FloatBarrier

\begin{example}
Consider the 7-node 3-3-MAJORITY-BN given in Fig.\@ \ref{fig:7_node_3_3_MAJORITY_BN}. After we execute the 7-step procedure in the proof of Theorem \ref{thm:3k2_6k_2_3k2_7k_2}, we get the outputs
$A = [[x_{2}, x_{3}], [x_{5}, x_{6}]]$, $B = [x_{1}, x_{7}]$, $R = \{ x_{4} \}$.
Let $x^{1}_{1} \coloneqq x_{2}$, $x^{1}_{2} \coloneqq x_{3}$, $x^{2}_{1} \coloneqq x_{5}$, 
$x^{2}_{2} \coloneqq x_{6}$, $y^{1} \coloneqq x_{1}$, $y^{2} \coloneqq x_{7}$. Note that $x^{1}_{1}$, $x^{1}_{2}$,
$x^{2}_{1}$, $x^{2}_{2}$, $y^{1}$, $y^{2}$ are distinct nodes of the BN, and that 
$\{ x^{1}_{1}, x^{1}_{2} \} \subseteq \Gamma^{-}(y^{1})$ and
$\{ x^{2}_{1}, x^{2}_{2} \} \subseteq \Gamma^{-}(y^{2})$. The control-node set $U$ is set to
$\{ x^{1}_{1}, x^{1}_{2}, x^{2}_{1}, x^{2}_{2} \} \cup R = \{ x_{2}, x_{3}, x_{4}, x_{5}, x_{6} \}$.

Suppose that we want to drive the BN from the initial state $\mathbf{x}(0) = 1111000$ to the target state 
$\mathbf{x}^{T} = 0101011$. At time $t = 0$, we let $u_{2}(0) = 1$, $u_{3}(0) = 1$, $u_{4}(0) = 0$, $u_{5}(0) = 1$,
$u_{6}(0) = 1$. At time $t = 1$, we let $u_{2}(1) = 0$, $u_{3}(1) = 1$, $u_{4}(1) = 1$, $u_{5}(1) = 1$,
$u_{6}(1) = 0$. Then, we have the following.
\begin{center}
\begin{tabular}{l|lllllll}
\hline
$\mathbf{x}(0)$		& 1 & 1 & 1 & 1 & 0 & 0 & 0 \\
$F(\mathbf{x}(0))$ 	& 1 & 1 & 1 & 1 & 0 & 0 & 0 \\
$\mathbf{x}(1)$ 		& 1 & 0 & 0 & 1 & 1 & 1 & 0 \\
$F(\mathbf{x}(1))$ 	& 0 & 1 & 1 & 0 & 1 & 1 & 1 \\
$\mathbf{x}(2)$ 		& 0 & 1 & 0 & 1 & 0 & 1 & 1 \\
\hline
\end{tabular}
\end{center}
\end{example}

\begin{theorem}\label{thm:3k2_4k_1_3k2_5k_2}
Fix arbitrary positive integers $k$, $n$ such that $n \geq 2k \geq 4$.
For any $n$-node $2k$-$2k$-MAJORITY-BN, there exists a control-node set 
$U \subseteq \{ x_{1}, x_{2}, \ldots, x_{n} \}$ with size at most 
$\left( \frac{3k^{2} + 4k - 1}{3k^{2} + 5k - 2} \right) n$ such that for all states $\mathbf{a}, \mathbf{b} \in \{ 0, 1 \}^{n}$, there exists a control scheme $\mathbf{u}(0)$, $\mathbf{u}(1)$ which drives the BN from $\mathbf{a}$ (time $0$) to $\mathbf{b}$ (time $2$).
This theorem also applies to $2k$-$2k$-MTBI-BNs.
\end{theorem}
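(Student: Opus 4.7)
My plan is to mirror the argument of Theorem~\ref{thm:3k2_6k_2_3k2_7k_2}, the only real change being the size of the greedy ``forcing gadgets.'' The key combinatorial fact I would establish first is that for both the $2k$-ary majority function and the $2k$-ary MTBI function, fixing any $k+1$ of the $2k$ inputs to a common value $b \in \{0,1\}$ forces the output to equal $b$, regardless of the remaining $k-1$ inputs and regardless of whether the tie-breaker is among the controlled inputs. For majority this is immediate, since the sum is either at least $k+1 \geq k$ or at most $k-1 < k$. For MTBI the same holds: with $k+1$ inputs fixed to $1$ the weighted sum is at least $k+1 > k + 0.05$, and with $k+1$ inputs fixed to $0$ the maximum weighted sum is $1.1 + (k-2) = k - 0.9 < k + 0.05$, even when the tie-breaker $x_{1}$ lies among the uncontrolled inputs and happens to equal $1$. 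This uniformity lets the two cases be handled by a single argument.

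Next I would run the greedy procedure from Theorem~\ref{thm:3k2_6k_2_3k2_7k_2} verbatim, but extracting $(k+2)$-tuples: repeatedly pick distinct nodes $x_{1}^{i}, \ldots, x_{k+1}^{i}, y^{i}$ in the current working set $R$ with $\{x_{1}^{i}, \ldots, x_{k+1}^{i}\} \subseteq \Gamma^{-}(y^{i})$, remove them from $R$, and terminate when no such tuple remains. By construction each $z \in R$ satisfies $|(\Gamma^{-}(z) \cap R) \setminus \{z\}| \leq k$, since otherwise a fresh tuple with target $z$ could be extracted. Each node of the BN graph has in-degree and out-degree $2k$ with at most one self-loop, so Proposition~\ref{prop:special_subset_of_K-K-regular_graph} applied with $K = 2k$ and $L = k$ yields $|R| \leq \frac{2k}{3k-1}\, n$.

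Define $U \coloneqq \{x_{j}^{i} : 1 \leq i \leq p,\, 1 \leq j \leq k+1\} \cup R = V \setminus \{y^{1}, \ldots, y^{p}\}$. Because $V \setminus R$ is partitioned into $p$ groups of size $k+2$, of which $k+1$ enter $U$, an arithmetic identical in form to that of Theorem~\ref{thm:3k2_6k_2_3k2_7k_2} gives
\[
|U| = \frac{k+1}{k+2}(n - |R|) + |R| = \frac{k+1}{k+2}\, n + \frac{|R|}{k+2} \leq \frac{(k+1)(3k-1) + 2k}{(k+2)(3k-1)}\, n = \frac{3k^{2} + 4k - 1}{3k^{2} + 5k - 2}\, n.
\]
For 2-step control from $\mathbf{a}$ to $\mathbf{b}$, I would choose $\mathbf{u}(0)$ so that at time $1$ every $x_{j}^{i}$ equals the value $b_{y^{i}}$ that $\mathbf{b}$ assigns to $y^{i}$; the forcing lemma then ensures $y^{i}(2) = b_{y^{i}}$. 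Then I would choose $\mathbf{u}(1)$ so that every $x \in U$ equals its target value in $\mathbf{b}$, giving $\mathbf{x}(2) = \mathbf{b}$. The only step that requires genuine care is the MTBI forcing analysis, in particular verifying that $k+1$ controlled inputs really do override the tie-breaker in both directions; once that is in place, the remainder is a direct port of the proof of Theorem~\ref{thm:3k2_6k_2_3k2_7k_2} with the parameters $K = 2k$ and group size $k+2$.
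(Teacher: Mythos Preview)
Your proposal is correct and follows essentially the same approach as the paper: the same greedy extraction of $(k+2)$-tuples, the same application of Proposition~\ref{prop:special_subset_of_K-K-regular_graph} with $K=2k$ and $L=k$, the same arithmetic for $|U|$, and the same two-step control scheme. In fact you supply the explicit MTBI forcing verification (that $k+1$ controlled inputs override the tie-breaker in both directions) which the paper simply asserts by saying ``the proof for $2k$-$2k$-MTBI-BN is exactly the same.''
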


\begin{proof}
We will focus on the case of $2k$-$2k$-MAJORITY-BN, since the proof for $2k$-$2k$-MTBI-BN is exactly the same.

Let $G = (V, E)$ be the directed graph corresponding to the structure of the $2k$-$2k$-MAJORITY-BN. To form the required control-node set $U$, we carry out the following procedure:
\begin{description}
\item[Step 1.] Set $A$ to an empty list. Set $B$ to an empty list. Set $R \gets V$ and $i \gets 1$.

\item[Step 2.] Check whether there exist distinct nodes $x_{1}, x_{2}, \ldots, x_{k+1}, y \in R$ such that
$\{ x_{1}, x_{2}, \ldots, x_{k+1} \} \subseteq \Gamma^{-}(y)$ in $G$. If so, go to step 3. If not, go to step 7.

\item[Step 3.] Set $x^{i}_{1} \gets x_{1}$, $x^{i}_{2} \gets x_{2}$, \ldots, $x^{i}_{k+1} \gets x_{k+1}$, $y^{i} \gets y$.

\item[Step 4.] Append the list $[x^{i}_{1}, x^{i}_{2}, \ldots, x^{i}_{k+1}]$ to $A$ and append $y^{i}$ to $B$.

\item[Step 5.] Set $R \gets R \setminus \{ x^{i}_{1}, x^{i}_{2}, \ldots, x^{i}_{k+1}, y^{i} \}$.

\item[Step 6.] Set $i \gets i + 1$. Go to step 2.

\item[Step 7.] Output $A$, $B$, $R$.
\end{description}

Write $A = [[x^{1}_{1}, x^{1}_{2}, \ldots, x^{1}_{k+1}], [x^{2}_{1}, x^{2}_{2}, \ldots, x^{2}_{k+1}], \ldots, 
[x^{p}_{1}, x^{p}_{2}, \ldots, x^{p}_{k+1}]]$ and $B = [y^{1}, y^{2}, \ldots, y^{p}]$.
Note that $x^{1}_{1}$, $x^{1}_{2}$, \ldots, $x^{1}_{k+1}$, $x^{2}_{1}$, $x^{2}_{2}$, \ldots, $x^{2}_{k+1}$, \ldots, 
$x^{p}_{1}$, $x^{p}_{2}$, \ldots, $x^{p}_{k+1}$, $y^{1}$, $y^{2}$, \ldots, $y^{p}$ are distinct nodes and 
for $i = 1, 2, \ldots, p$, $\{ x^{i}_{1}, x^{i}_{2}, \ldots, x^{i}_{k+1} \} \subseteq \Gamma^{-}(y^{i})$. Moreover, the output $R$ satisfies that for all $z \in R$, $|(\Gamma^{-}(z) \cap R) \setminus \{ z \}| \leq k$ in $G$.
Furthermore, each node of $G$ has in-degree $2k$ and out-degree $2k$, and for all $x \in V$, there exists at most one self-loop in $G$ from $x$ to itself. Therefore, we can apply Proposition \ref{prop:special_subset_of_K-K-regular_graph} to deduce that
\begin{equation}\label{eq:(2k)_(3k-1)_bound}
|R| \leq \left( \frac{2k}{2(2k) - k - 1} \right)n = \left( \frac{2k}{3k - 1} \right) n.
\end{equation}

We define the control-node set 
$U \coloneqq \{ x^{i}_{1}, x^{i}_{2}, \ldots, x^{i}_{k+1} : i = 1, 2, \ldots, p \} \cup R 
= V \setminus \{ y^{1}, y^{2}, \ldots, y^{p} \}$.
Using Eq.\@ (\ref{eq:(2k)_(3k-1)_bound}), we can show that
\begin{equation}\label{eq:proved_3k2+5k-2_bound}
|U| 
= \frac{k+1}{k+2} |V \setminus R| + |R|
= \frac{k+1}{k+2}(n - |R|) + |R|
\leq \left(\frac{3k^{2} + 4k - 1}{3k^{2} + 5k - 2}\right) n. 
\end{equation}

The proof that for all states $\mathbf{a}, \mathbf{b} \in \{ 0, 1 \}^{n}$, there exists a control scheme 
$\mathbf{u}(0)$, $\mathbf{u}(1)$ (based on the control-node set $U$) driving the BN 
from $\mathbf{a}$ (time $0$) to $\mathbf{b}$ (time $2$) is the same as the corresponding justification in the proof of Theorem \ref{thm:3k2_6k_2_3k2_7k_2}.
\end{proof}

\begin{example}
Let $f : \{ 0, 1 \}^{4} \to \{ 0, 1 \}$ be the 4-ary majority function. Consider the $8$-node 4-4-MAJORITY-BN defined  as:
\begin{align}
x_{1}(t+1) &= f(x_{1}(t), x_{2}(t), x_{3}(t), x_{4}(t)) \\
x_{2}(t+1) &= f(x_{2}(t), x_{3}(t), x_{4}(t), x_{5}(t)) \\
x_{3}(t+1) &= f(x_{3}(t), x_{4}(t), x_{5}(t), x_{6}(t)) \\
x_{4}(t+1) &= f(x_{4}(t), x_{5}(t), x_{6}(t), x_{7}(t)) \\
x_{5}(t+1) &= f(x_{5}(t), x_{6}(t), x_{7}(t), x_{8}(t)) \\
x_{6}(t+1) &= f(x_{6}(t), x_{7}(t), x_{8}(t), x_{1}(t)) \\
x_{7}(t+1) &= f(x_{7}(t), x_{8}(t), x_{1}(t), x_{2}(t)) \\
x_{8}(t+1) &= f(x_{8}(t), x_{1}(t), x_{2}(t), x_{3}(t))
\end{align}
After we execute the 7-step procedure in the proof of Theorem \ref{thm:3k2_4k_1_3k2_5k_2}, we get the outputs
$A = [[x_{2}, x_{3}, x_{4}], [x_{6}, x_{7}, x_{8}]]$, $B = [x_{1}, x_{5}]$, $R = \varnothing$.
Let 
$x^{1}_{1} \coloneqq x_{2}$, 
$x^{1}_{2} \coloneqq x_{3}$, 
$x^{1}_{3} \coloneqq x_{4}$, 
$x^{2}_{1} \coloneqq x_{6}$, 
$x^{2}_{2} \coloneqq x_{7}$,
$x^{2}_{3} \coloneqq x_{8}$,  
$y^{1} \coloneqq x_{1}$, 
$y^{2} \coloneqq x_{5}$. 
Note that $x^{1}_{1}$, $x^{1}_{2}$, $x^{1}_{3}$, $x^{2}_{1}$, $x^{2}_{2}$, $x^{2}_{3}$, $y^{1}$, $y^{2}$ are distinct nodes of the BN, and that 
$\{ x^{1}_{1}, x^{1}_{2}, x^{1}_{3} \} \subseteq \Gamma^{-}(y^{1})$ and
$\{ x^{2}_{1}, x^{2}_{2}, x^{2}_{3} \} \subseteq \Gamma^{-}(y^{2})$. 
The control-node set $U$ is set to
$\{ x^{1}_{1}, x^{1}_{2}, x^{1}_{3}, x^{2}_{1}, x^{2}_{2}, x^{2}_{3} \} \cup R = \{ x_{2}, x_{3}, x_{4}, x_{6}, x_{7}, x_{8} \}$.

Suppose that we want to drive the BN from the initial state $\mathbf{x}(0) = 10010010$ to the target state 
$\mathbf{x}^{T} = 10100001$. At time $t = 0$, we let $u_{2}(0) = 1$, $u_{3}(0) = 1$, $u_{4}(0) = 0$, $u_{6}(0) = 1$,
$u_{7}(0) = 1$, $u_{8}(0) = 0$. At time $t = 1$, we let $u_{2}(1) = 1$, $u_{3}(1) = 0$, $u_{4}(1) = 0$, $u_{6}(1) = 0$,
$u_{7}(1) = 1$, $u_{8}(1) = 0$. Then, we have the following.
\begin{center}
\begin{tabular}{l|llllllll}
\hline
$\mathbf{x}(0)$		& 1 & 0 & 0 & 1 & 0 & 0 & 1 & 0 \\
$F(\mathbf{x}(0))$ 	& 1 & 0 & 0 & 1 & 0 & 1 & 1 & 0 \\
$\mathbf{x}(1)$ 		& 1 & 1 & 1 & 1 & 0 & 0 & 0 & 0 \\
$F(\mathbf{x}(1))$ 	& 1 & 1 & 1 & 0 & 0 & 0 & 1 & 1 \\
$\mathbf{x}(2)$ 		& 1 & 0 & 1 & 0 & 0 & 0 & 0 & 1 \\
\hline
\end{tabular}
\end{center}
\end{example}

\subsection{Best-Case Upper Bounds for Majority-Type Boolean Networks}\label{section:majority_BNs_bcub}

In this section, we provide three best-case upper bounds for $(2k+1)$-$(2k+1)$-MAJORITY-BNs, $2k$-$2k$-MAJORITY-BNs and $2k$-$2k$-MTBI-BNs.

\subsubsection{$(2k+1)$-$(2k+1)$-MAJORITY-BNs}\label{section:bcub_2k1_2k1_majority_BNs}

Let $k$ be a positive integer. Denote the $(2k+1)$-ary majority function by
$f_{k} : \{ 0, 1 \}^{2k+1} \to \{ 0, 1 \}$. In the following, we construct a $(2k+1)$-$(2k+1)$-MAJORITY-BN with 
$n  =m(2k + 2)$ nodes, where $m$ is a positive integer. We label the nodes as $x^{1}_{1}$, $x^{1}_{2}$, \ldots, $x^{1}_{2k+2}$, $x^{2}_{1}$, $x^{2}_{2}$, \ldots, $x^{2}_{2k+2}$, \ldots, $x^{m}_{1}$, $x^{m}_{2}$, \ldots, $x^{m}_{2k+2}$. The dependencies among the nodes are defined below.
\begin{align}
x^{2}_{1}(t + 1) &=
f_{k}(x^{1}_{2}(t), x^{1}_{3}(t), x^{1}_{4}(t), \ldots, x^{1}_{2k+2}(t)) \\
x^{2}_{2}(t + 1) &=
f_{k}(x^{1}_{1}(t), x^{1}_{3}(t), x^{1}_{4}(t), \ldots, x^{1}_{2k+2}(t)) \\
& \quad \vdots \nonumber \\
x^{2}_{2k+2}(t + 1) &=
f_{k}(x^{1}_{1}(t), x^{1}_{2}(t), x^{1}_{3}(t), \ldots, x^{1}_{2k+1}(t)) \\
x^{3}_{1}(t + 1) &=
f_{k}(x^{2}_{2}(t), x^{2}_{3}(t), x^{2}_{4}(t), \ldots, x^{2}_{2k+2}(t)) \\
x^{3}_{2}(t + 1) &=
f_{k}(x^{2}_{1}(t), x^{2}_{3}(t), x^{2}_{4}(t), \ldots, x^{2}_{2k+2}(t)) \\
& \quad \vdots \nonumber  \\ 
x^{3}_{2k+2}(t + 1) &=
f_{k}(x^{2}_{1}(t), x^{2}_{2}(t), x^{2}_{3}(t), \ldots, x^{2}_{2k+1}(t)) \\
& \quad \vdots \nonumber  \\ 
x^{1}_{1}(t + 1) &=
f_{k}(x^{m}_{2}(t), x^{m}_{3}(t), x^{m}_{4}(t), \ldots, x^{m}_{2k+2}(t)) \\
x^{1}_{2}(t + 1) &=
f_{k}(x^{m}_{1}(t), x^{m}_{3}(t), x^{m}_{4}(t), \ldots, x^{m}_{2k+2}(t)) \\
& \quad \vdots \nonumber \\ 
x^{1}_{2k+2}(t + 1) &=
f_{k}(x^{m}_{1}(t), x^{m}_{2}(t), x^{m}_{3}(t), \ldots, x^{m}_{2k+1}(t))
\end{align}

Let $G_{1} : \{ 0, 1 \}^{2k+2} \to \{ 0, 1 \}^{2k+2}$ be the function such that for all $\mathbf{y} \in \{ 0, 1 \}^{2k+2}$,
\begin{equation}\label{eq:def_of_G1}
G_{1}(\mathbf{y}) = (f_{k}(\mathbf{y}_{-1}), f_{k}(\mathbf{y}_{-2}), \ldots, f_{k}(\mathbf{y}_{-(2k+2)})).
\end{equation}
Then, the dependencies of this BN can be succinctly represented as
\begin{align}
\mathbf{x}^{2}(t + 1) &= 
G_{1}(\mathbf{x}^{1}(t)) \\
\mathbf{x}^{3}(t + 1) &=
G_{1}(\mathbf{x}^{2}(t)) \\
& \vdots \nonumber \\ 
\mathbf{x}^{1}(t + 1) &= 
G_{1}(\mathbf{x}^{m}(t))
\end{align}

The graphical representation of this BN is given in Figure \ref{fig:general_2k1_2k1_majority_BN}.

\begin{figure}[h]
\caption{Graphical Representation of the $(2k+1)$-$(2k+1)$-MAJORITY-BN}\label{fig:general_2k1_2k1_majority_BN}
\centering
\includegraphics[width=0.7\textwidth]{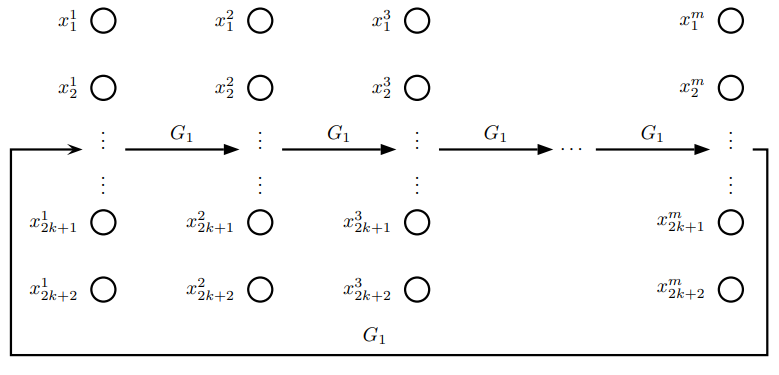}
\end{figure}

\FloatBarrier

The control-node set is defined to be
\begin{equation}\label{eq:2k1_2k1_MAJORITY_BN_set_U}
U \coloneqq 
\{ x^{1}_{1}, x^{1}_{2}, \ldots, x^{1}_{2k+2} \} \cup 
\{ x^{2}_{1}, \ldots, x^{2}_{k} \} \cup
\{ x^{3}_{1}, \ldots, x^{3}_{k} \} \cup \ldots \cup
\{ x^{m}_{1}, \ldots, x^{m}_{k} \}.
\end{equation}
Note that $|U| = (2k + 2) + (m - 1)k$ and $\frac{|U|}{n} = \frac{(2k + 2) + (m - 1)k}{m(2k + 2)}$.

The following proposition is the key to understanding why this BN is controllable.

\begin{proposition}\label{prop:applying_G1}
For all $\mathbf{y} \in \{ 0, 1 \}^{2k+2}$,
if the number of $1$-entries of $\mathbf{y}$ is at least $k + 2$, then $G_{1}(\mathbf{y}) = \mathbf{1}_{2k + 2}$;
if the number of $1$-entries of $\mathbf{y}$ is at most $k$, then $G_{1}(\mathbf{y}) = \mathbf{0}_{2k + 2}$;
if the number of $1$-entries of $\mathbf{y}$ is equal to $k + 1$, then 
$G_{1}(\mathbf{y}) = (1 - y_{1}, 1 - y_{2}, \ldots, 1 - y_{2k+2})$.
\end{proposition}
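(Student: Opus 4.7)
The plan is to analyze the $i$-th coordinate of $G_{1}(\mathbf{y})$, namely $f_{k}(\mathbf{y}_{-i})$, by counting how many $1$s remain after deleting $y_{i}$. The key elementary observation is that if $\mathbf{y}$ contains $s$ ones total, then $\mathbf{y}_{-i}$ contains $s$ ones when $y_{i} = 0$ and $s - 1$ ones when $y_{i} = 1$. Since $f_{k}$ is the $(2k+1)$-ary majority function, $f_{k}(\mathbf{y}_{-i}) = 1$ if and only if $\mathbf{y}_{-i}$ has at least $\lceil (2k+1)/2 \rceil = k + 1$ ones. So the whole proposition reduces to comparing $k + 1$ against the count of ones in $\mathbf{y}_{-i}$ in each of the three regimes.

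First I would handle the case $s \geq k + 2$. Regardless of whether $y_{i}$ is $0$ or $1$, the vector $\mathbf{y}_{-i}$ has at least $s - 1 \geq k + 1$ ones, so every coordinate of $G_{1}(\mathbf{y})$ evaluates to $1$, yielding $G_{1}(\mathbf{y}) = \mathbf{1}_{2k+2}$. Next, I would handle $s \leq k$: here $\mathbf{y}_{-i}$ has at most $s \leq k < k + 1$ ones irrespective of $y_{i}$, so every coordinate of $G_{1}(\mathbf{y})$ is $0$, giving $G_{1}(\mathbf{y}) = \mathbf{0}_{2k+2}$.

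Finally, for the boundary case $s = k + 1$ I would split on the value of $y_{i}$. If $y_{i} = 0$, deletion does not change the count, so $\mathbf{y}_{-i}$ has exactly $k + 1$ ones and $f_{k}(\mathbf{y}_{-i}) = 1 = 1 - y_{i}$; if $y_{i} = 1$, the count drops to $k$, so $f_{k}(\mathbf{y}_{-i}) = 0 = 1 - y_{i}$. Either way, the $i$-th entry of $G_{1}(\mathbf{y})$ equals $1 - y_{i}$, which is exactly the asserted complementation.

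There is no real obstacle here: the argument is a straightforward case analysis once one notes the count shift under coordinate deletion. The only thing to be careful about is that the threshold $k + 1$ for the $(2k+1)$-ary majority is tight, which is precisely what makes the third case produce the complement rather than a constant.
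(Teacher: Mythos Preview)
Your proof is correct and follows essentially the same approach as the paper's own proof: both arguments fix an arbitrary coordinate $i$, count the number of $1$-entries in $\mathbf{y}_{-i}$ depending on $y_i$, and compare that count to the majority threshold $k+1$ in each of the three regimes. The only cosmetic difference is that you introduce the variable $s$ for the total number of ones and state the deletion-count shift explicitly at the outset, whereas the paper handles each case directly.
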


\begin{proof}
Fix arbitrary $\mathbf{y} \in \{ 0, 1 \}^{2k+2}$. Let $\mathbf{z} \coloneqq G_{1}(\mathbf{y})$.

Suppose that the first case is true. Then, for any $i \in \{ 1, 2, \ldots, 2k + 2 \}$, at least $k + 1$ entries of $\mathbf{y}_{-i}$ equal $1$. Therefore, $z_{i} = f_{k}(\mathbf{y}_{-i}) = 1$. Hence, $\mathbf{z} = \mathbf{1}_{2k+2}$.

Suppose that the second case is true. Then, for any $i \in \{ 1, 2, \ldots, 2k + 2 \}$, at most $k$ entries of $\mathbf{y}_{-i}$ equal $1$. Therefore, $z_{i} = f_{k}(\mathbf{y}_{-i}) = 0$. Hence, $\mathbf{z} = \mathbf{0}_{2k+2}$.

Suppose that the third case is true. Fix arbitrary $i \in \{ 1, 2, \ldots, 2k + 2 \}$. If $y_{i} = 1$, then exactly $k$ entries of $\mathbf{y}_{-i}$ equal $1$. Hence, $z_{i} = f_{k}(\mathbf{y}_{-i}) = 0 = 1 - y_{i}$, 
On the other hand, if $y_{i} = 0$, then exactly $k+1$ entries of $\mathbf{y}_{-i}$ equal $1$. Hence, 
$z_{i} = f_{k}(\mathbf{y}_{-i}) = 1 = 1 - y_{i}$. Therefore, 
$\mathbf{z} = (1 - y_{1}, 1 - y_{2}, \ldots, 1 - y_{2k+2})$.

The proof is complete.
\end{proof}

By considering Proposition \ref{prop:applying_G1}, we can intuitively understand the evolution of the BN. Suppose that we do not exert any control signals, and that at time $t$, the states of the nodes $x^{1}_{1}$, $x^{1}_{2}$, \ldots, $x^{1}_{2k+1}$, $x^{1}_{2k+2}$ are given by some $\mathbf{y}$ in $\{ 0, 1 \}^{2k+2}$. Consider the following three cases:
\begin{enumerate}
\item if the number of 1-entries in $\mathbf{y}$ is at least $k+2$, then for $s = 1, 2, \ldots, m-1$, the states of the nodes $x^{s+1}_{1}$, $x^{s+1}_{2}$, \ldots, $x^{s+1}_{2k+1}$, $x^{s+1}_{2k+2}$ at time $t + s$ will be $\mathbf{1}_{2k+2}$;

\item if the number of 1-entries in $\mathbf{y}$ is at most $k$, then for $s = 1, 2, \ldots, m-1$, the states of the nodes $x^{s+1}_{1}$, $x^{s+1}_{2}$, \ldots, $x^{s+1}_{2k+1}$, $x^{s+1}_{2k+2}$ at time $t + s$ will be $\mathbf{0}_{2k+2}$;

\item if the number of 1-entries in $\mathbf{y}$ equals $k+1$, then for $s = 1, 3, 5, \ldots$, the states of the nodes $x^{s+1}_{1}$, $x^{s+1}_{2}$, \ldots, $x^{s+1}_{2k+1}$, $x^{s+1}_{2k+2}$ at time $t + s$ will be
$\mathbf{1}_{2k+2} - \mathbf{y}$, and for $s = 2, 4 \ldots$, the states of the nodes $x^{s+1}_{1}$, $x^{s+1}_{2}$, \ldots, $x^{s+1}_{2k+1}$, $x^{s+1}_{2k+2}$ at time $t + s$ will be $\mathbf{y}$.
\end{enumerate}

The idea behind our method of controlling the BN is that for $l = 2, 3, \ldots, m$, if we want the nodes $x^{l}_{k+1}$, $x^{l}_{k+2}$, \ldots, $x^{l}_{2k+2}$ in the $l$-th layer to be in some particular states at time $m$, then we appropriately modify the states of the nodes $x^{1}_{1}$, $x^{1}_{2}$, \ldots, $x^{1}_{2k+1}$, $x^{1}_{2k+2}$ at time $m - l$ and let such information pass along the layers of nodes and reach the $l$-th layer.

Before presenting our control strategy of this $(2k+1)$-$(2k+1)$-MAJORITY-BN, we need to provide the following definition.
\begin{definition}\label{def:alpha_one_v}
Let $\mathbf{v}$ be a vector in $\{ 0, 1 \}^{k+2}$ which contains at least one $0$-entry and at least one $1$-entry. Let $d$ be the number of $1$-entries in $\mathbf{v}$. Clearly, $1 \leq d \leq k+1$. We define the length-$k$ vector
\begin{equation}\label{eq:alpha_one_v}
\alpha_{1}(\mathbf{v}) \coloneqq 
\underbrace{11\ldots1}_{\substack{\text{$k + 1 - d$} \\ \text{times}}}\overbrace{00\ldots0}^{\substack{\text{$d - 1$} \\ \text{times}}}
\end{equation}
\end{definition}

\begin{theorem}\label{thm:(2k+1)_bcub}
The $(2k+2)m$-node $(2k+1)$-$(2k+1)$-MAJORITY-BN given in Figure \ref{fig:general_2k1_2k1_majority_BN} with control-node set $U$ of size $2k + 2 + (m - 1)k$ defined in Eq.\@ (\ref{eq:2k1_2k1_MAJORITY_BN_set_U}) is controllable. More specifically, for all 
$\mathbf{a}, \mathbf{b} \in \{ 0, 1 \}^{(2k+2)m}$, there exists a control scheme which drives the BN from $\mathbf{a}$ (time $0$) to $\mathbf{b}$ (time $m$).
\end{theorem}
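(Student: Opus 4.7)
The plan is to construct an explicit control scheme $\mathbf{u}(0), \ldots, \mathbf{u}(m-1)$ that drives $\mathbf{a}$ to $\mathbf{b}$ by handling each target layer $l = 1, 2, \ldots, m$ of $\mathbf{b}$ independently along a ``diagonal'' chain of states. For $l = 1$ the entire layer is under control, so I pick $\mathbf{u}^1(m-1)$ to set $\mathbf{x}^1(m) = \mathbf{b}^1$ directly. For $l \geq 2$ I focus on the chain $\mathbf{x}^1(m-l+1) \to \mathbf{x}^2(m-l+2) \to \cdots \to \mathbf{x}^{l-1}(m-1) \to \mathbf{x}^l(m)$, whose $j$-th state is $G_1(\mathbf{x}^{j-1}(\cdot)) \oplus \mathbf{u}^{j}(\cdot)$. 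Because $\mathbf{u}^l(m-1)$ freely overwrites the first $k$ coordinates of $\mathbf{x}^l(m)$, I match those with $(b^l_1, \ldots, b^l_k)$, so it remains only to guarantee that the last $k+2$ coordinates of $G_1(\mathbf{x}^{l-1}(m-1))$ equal $\mathbf{c}_l := (b^l_{k+1}, \ldots, b^l_{2k+2})$.

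By Proposition~\ref{prop:applying_G1} this is purely a Hamming-weight condition on $\mathbf{x}^{l-1}(m-1)$, and I split into three cases. If $\mathbf{c}_l = \mathbf{1}_{k+2}$, I start the chain with $\mathbf{x}^1(m-l+1) = \mathbf{1}_{2k+2}$ and use controls to keep every chain state of weight $\geq k+2$, so $G_1$ outputs $\mathbf{1}_{2k+2}$ at every step; the case $\mathbf{c}_l = \mathbf{0}_{k+2}$ is symmetric. When $\mathbf{c}_l$ is mixed I apply an alternating ansatz based on Definition~\ref{def:alpha_one_v}: with $\mathbf{v}_l := \mathbf{1}_{k+2} - \mathbf{c}_l$, for $j' = 1, 2, \ldots, l - 1$ I set $\mathbf{x}^{j'}(m - l + j')$ to $(\alpha_1(\mathbf{c}_l), \mathbf{c}_l)$ when $l - j'$ is even and to $(\alpha_1(\mathbf{v}_l), \mathbf{v}_l)$ when $l - j'$ is odd. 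Both templates have weight exactly $k + 1$ because $|\alpha_1(\mathbf{w})| = k + 1 - |\mathbf{w}|$ by construction, so the third case of Proposition~\ref{prop:applying_G1} forces $G_1$ to swap $\mathbf{c}_l$ and $\mathbf{v}_l$ in the last $k + 2$ coordinates at each step of the chain, and the first $k$ coordinates of each intermediate state are realized by XOR-ing $G_1$'s output with the appropriate control bits.

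The remaining task is to verify that the chains for different target layers do not clash. The chain for target $l$ consumes the control $\mathbf{u}^{j}(m - l + j - 1)$ and prescribes the state $\mathbf{x}^{j}(m - l + j)$ only for $j = 1, 2, \ldots, l$; since both indexings are injective in $l$ for fixed $j$, no control slot and no state slot is written by two different chains. Controls not used by any chain may be set arbitrarily, and the initial data $\mathbf{a}$ plays no role beyond $\mathbf{x}(0)$ because every $\mathbf{x}^{1}(\cdot)$ the construction depends on is overwritten by some $\mathbf{u}^{1}$. The main obstacle is the mixed case: the whole construction hinges on the single identity that $G_{1}$ maps $(\alpha_{1}(\mathbf{w}), \mathbf{w})$ to a vector whose last $k + 2$ coordinates form $\mathbf{1}_{k+2} - \mathbf{w}$, which follows from the weight equality $|\alpha_{1}(\mathbf{w})| + |\mathbf{w}| = k + 1$ and the third bullet of Proposition~\ref{prop:applying_G1}. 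Once this identity is checked, the scheme above produces $\mathbf{x}(m) = \mathbf{b}$ from every $\mathbf{a}$, completing the proof.
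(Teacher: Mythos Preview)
Your proposal is correct and follows essentially the same approach as the paper: both arguments use the diagonal chains $\mathbf{x}^{1}(m-l+1)\to\cdots\to\mathbf{x}^{l}(m)$, Proposition~\ref{prop:applying_G1}, and the weight-$(k+1)$ templates $(\alpha_{1}(\cdot),\cdot)$ to force the uncontrolled last $k+2$ coordinates of each target layer, then fix the first $k$ coordinates with the final control. The only cosmetic difference is that the paper's Algorithm~\ref{alg:control_scheme_(2k+1)_MAJORITY} leaves the layer-$j$ controls ($j\geq 2$) idle until time $m-1$ and lets $G_{1}$'s automatic complementation carry the signal along each chain, whereas you additionally spend those intermediate controls to normalise the first $k$ coordinates to the specific pattern $\alpha_{1}(\cdot)$ at every step; this extra normalisation is harmless but unnecessary, since any weight-$(k+1)$ vector with the correct tail would propagate just as well.
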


\begin{proof}
The control strategy is shown in Algorithm \ref{alg:control_scheme_(2k+1)_MAJORITY}.

\begin{algorithm}
\caption{The Control Strategy for a $(2k+1)$-$(2k+1)$-MAJORITY-BN}
\label{alg:control_scheme_(2k+1)_MAJORITY}
\begin{algorithmic}[1]
\Require An initial state $\mathbf{a} \in \{ 0, 1 \}^{m(2k+2)}$ and a target state $\mathbf{b} \in \{ 0, 1 \}^{m(2k+2)}$.

\State Set $\mathbf{x}(0) \gets \mathbf{a}$.

\LComment{Iteration phase}

\For{$t = 0, 1, \ldots, m - 2$}
\State Use the update rules of the BN to update the BN from state $\mathbf{x}(t)$. Denote the resulting state by $\mathbf{y}(t) = F(\mathbf{x}(t))$.

\If{$b^{m-t}_{k+1} = b^{m-t}_{k+2} = \cdots = b^{m-t}_{2k+2} = 1$}

\State Overwrite the states of the nodes $x^{1}_{1}$, $x^{1}_{2}$, \ldots, $x^{1}_{2k+2}$ in $\mathbf{y}(t)$ to form $\mathbf{x}(t+1)$ which satisfies $x^{1}_{1}(t + 1) = x^{1}_{2}(t + 1) = \cdots = x^{1}_{2k+2}(t + 1) = 1$.

\ElsIf{$b^{m-t}_{k+1} = b^{m-t}_{k+2} = \cdots = b^{m-t}_{2k+2} = 0$}

\State Overwrite the states of the nodes $x^{1}_{1}$, $x^{1}_{2}$, \ldots, $x^{1}_{2k+2}$ in $\mathbf{y}(t)$ to form $\mathbf{x}(t+1)$ which satisfies $x^{1}_{1}(t + 1) = x^{1}_{2}(t + 1) = \cdots = x^{1}_{2k+2}(t + 1) = 0$.

\ElsIf{$m - t$ is odd and $\mathbf{p} \coloneqq (b^{m-t}_{k+1}, b^{m-t}_{k+2}, \ldots, b^{m-t}_{2k+2})$ contains both 0-entry and 1-entry}

\State Overwrite the states of the nodes $x^{1}_{1}$, $x^{1}_{2}$, \ldots, $x^{1}_{2k+2}$ in $\mathbf{y}(t)$ to form 
$\mathbf{x}(t+1)$ which satisfies 
$(x^{1}_{1}(t + 1), x^{1}_{2}(t + 1), \ldots, x^{1}_{2k+2}(t + 1)) = (\alpha_{1}(\mathbf{p}), \mathbf{p})$.
Note that
$(\alpha_{1}(\mathbf{p}), \mathbf{p})$
contains exactly $k+1$ 1-entries.

\ElsIf{$m - t$ is even and $\mathbf{p} \coloneqq (b^{m-t}_{k+1}, b^{m-t}_{k+2}, \ldots, b^{m-t}_{2k+2})$ contains both 0-entry and 1-entry}

\State Overwrite the states of the nodes $x^{1}_{1}$, $x^{1}_{2}$, \ldots, $x^{1}_{2k+2}$ in $\mathbf{y}(t)$ to form 
$\mathbf{x}(t+1)$ which satisfies 
$(x^{1}_{1}(t + 1), x^{1}_{2}(t + 1), \ldots, x^{1}_{2k+2}(t + 1)) = \mathbf{1}_{2k+2} - (\alpha_{1}(\mathbf{p}), \mathbf{p})$.
Note that
$\mathbf{1}_{2k+2} - (\alpha_{1}(\mathbf{p}), \mathbf{p})$
contains exactly $k+1$ 1-entries.
\EndIf
\EndFor

\LComment{Termination phase}
\State Use the update rules of the BN to update the BN from state $\mathbf{x}(m-1)$.
Denote the resulting state by $\mathbf{y}(m-1) = F(\mathbf{x}(m-1))$.

\State Overwrite the states of all control nodes in $\mathbf{y}(m-1)$ to form $\mathbf{x}(m)$ which satisfies 
$\mathbf{x}(m) = \mathbf{b}$.
\end{algorithmic}
\end{algorithm}

\FloatBarrier
\end{proof}

\begin{example}
Let $k = 2$ and $m = 4$.
The corresponding $5$-$5$-MAJORITY-BN has $(2k+2)m = 24$ nodes and is shown in Figure \ref{fig:example_5_5_majority_BN}.
\begin{figure}[h]
\caption{Structure of the $24$-node $5$-$5$-MAJORITY-BN. The blue nodes are the control nodes.}\label{fig:example_5_5_majority_BN}
\centering
\includegraphics[width=0.6\textwidth]{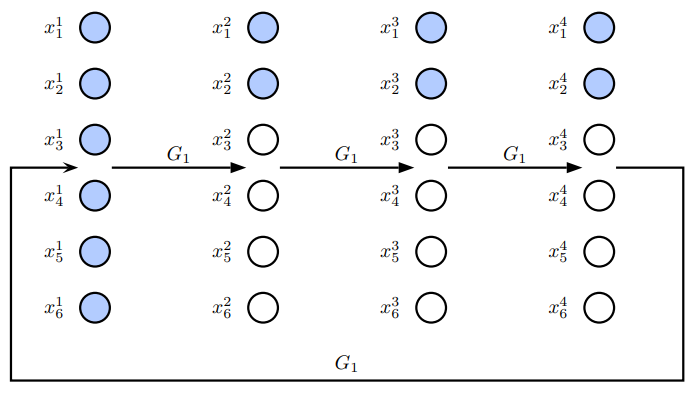}
\end{figure}

\FloatBarrier

Suppose that we want to drive the $5$-$5$-MAJORITY-BN from the initial state $\mathbf{a} \in \{ 0, 1 \}^{24}$ to the target state $\mathbf{b} \in \{ 0, 1 \}^{24}$ defined below.
\begin{table*}[!h]
\centering
\begin{tabular}{ccc}  
  \begin{tabular}{l|llll}
  \multirow{6}{*}{$\mathbf{a}$}	& 1 & 1 & 1 & 0 \\
						& 0 & 1 & 1 & 1 \\
						& 0 & 0 & 1 & 1 \\
						& 1 & 1 & 0 & 0 \\
						& 0 & 0 & 0 & 0 \\
						& 0 & 1 & 0 & 1 \\
  \end{tabular}
  & \phantom{hello} &
  \begin{tabular}{l|llll}
  \multirow{6}{*}{$\mathbf{b}$}	& 0 & 1 & 1 & 1 \\
						& 1 & 0 & 0 & 0 \\
						& 0 & 1 & 1 & 0 \\
						& 1 & 1 & 0 & 1 \\
						& 0 & 1 & 0 & 1 \\
						& 0 & 1 & 1 & 1 \\
  \end{tabular}
  \\
\end{tabular}  
\end{table*}

\FloatBarrier

We can apply our control strategy in Algorithm \ref{alg:control_scheme_(2k+1)_MAJORITY} to drive the BN from $\mathbf{a}$ (time $0$) to $\mathbf{b}$ (time $m = 4$), which is illustrated below.
\begin{table*}[!h]
\centering
\begin{tabular}{rrr}  
  \begin{tabular}{l|llll}
  \multirow{6}{*}{$\mathbf{x}(0)$}	& 1 & 1 & 1 & 0 \\
						& 0 & 1 & 1 & 1 \\
						& 0 & 0 & 1 & 1 \\
						& 1 & 1 & 0 & 0 \\
						& 0 & 0 & 0 & 0 \\
						& 0 & 1 & 0 & 1 \\
  \end{tabular}
  &
  \begin{tabular}{l|llll}
  \multirow{6}{*}{$F(\mathbf{x}(0))$}	& 1 & 0 & 1 & 0 \\
						& 0 & 0 & 1 & 0 \\
						& 0 & 0 & 1 & 0 \\
						& 1 & 0 & 1 & 1 \\
						& 1 & 0 & 1 & 1 \\
						& 0 & 0 & 1 & 1 \\
  \end{tabular}
  &
  \begin{tabular}{l|llll}
  \multirow{6}{*}{$\mathbf{x}(1)$}	& 1 & 0 & 1 & 0 \\
						& 1 & 0 & 1 & 0 \\
						& 1 & 0 & 1 & 0 \\
						& 0 & 0 & 1 & 1 \\
						& 0 & 0 & 1 & 1 \\
						& 0 & 0 & 1 & 1 \\
  \end{tabular}
  \\
  \phantom{a}  & \phantom{a}  & \phantom{a} \\
  \begin{tabular}{l|llll}
  \multirow{6}{*}{$F(\mathbf{x}(1))$}	& 1 & 0 & 0 & 1 \\
						& 1 & 0 & 0 & 1 \\
						& 1 & 0 & 0 & 1 \\
						& 0 & 1 & 0 & 1 \\
						& 0 & 1 & 0 & 1 \\
						& 0 & 1 & 0 & 1 \\
  \end{tabular}
  &
  \begin{tabular}{l|llll}
  \multirow{6}{*}{$\mathbf{x}(2)$}	& 1 & 0 & 0 & 1 \\
						& 0 & 0 & 0 & 1 \\
						& 1 & 0 & 0 & 1 \\
						& 0 & 1 & 0 & 1 \\
						& 0 & 1 & 0 & 1 \\
						& 1 & 1 & 0 & 1 \\
  \end{tabular}
  &
  \begin{tabular}{l|llll}
  \multirow{6}{*}{$F(\mathbf{x}(2))$}	& 1 & 0 & 1 & 0 \\
						& 1 & 1 & 1 & 0 \\
						& 1 & 0 & 1 & 0 \\
						& 1 & 1 & 0 & 0 \\
						& 1 & 1 & 0 & 0 \\
						& 1 & 0 & 0 & 0 \\
  \end{tabular}
  \\
  \phantom{a}  & \phantom{a}  & \phantom{a} \\
  \begin{tabular}{l|llll}
  \multirow{6}{*}{$\mathbf{x}(3)$}	& 1 & 0 & 1 & 0 \\
						& 1 & 1 & 1 & 0 \\
						& 1 & 0 & 1 & 0 \\
						& 1 & 1 & 0 & 0 \\
						& 1 & 1 & 0 & 0 \\
						& 1 & 0 & 0 & 0 \\
  \end{tabular}
  &
  \begin{tabular}{l|llll}
  \multirow{6}{*}{$F(\mathbf{x}(3))$}	& 0 & 1 & 1 & 0 \\
						& 0 & 1 & 0 & 0 \\
						& 0 & 1 & 1 & 0 \\
						& 0 & 1 & 0 & 1 \\
						& 0 & 1 & 0 & 1 \\
						& 0 & 1 & 1 & 1 \\
  \end{tabular}
  &
  \begin{tabular}{l|llll}
  \multirow{6}{*}{$\mathbf{x}(4)$}	& 0 & 1 & 1 & 1 \\
						& 1 & 0 & 0 & 0 \\
						& 0 & 1 & 1 & 0 \\
						& 1 & 1 & 0 & 1 \\
						& 0 & 1 & 0 & 1 \\
						& 0 & 1 & 1 & 1 \\
  \end{tabular}
  \\
\end{tabular}  
\end{table*}

\FloatBarrier
\end{example}

\subsubsection{$2k$-$2k$-MAJORITY-BNs}\label{section:bcub_2k_2k_majority_BNs}

Let $k$ be a positive integer. Denote the $2k$-ary majority function by
$h_{k} : \{ 0, 1 \}^{2k} \to \{ 0, 1 \}$. In the following, we construct a $2k$-$2k$-MAJORITY-BN with 
$n  =m(2k + 1)$ nodes, where $m$ is a positive integer. We remark that although this BN is similar to the $(2k+1)$-$(2k+1)$-MAJORITY-BN constructed in Section \ref{section:bcub_2k1_2k1_majority_BNs}, the control strategies for the two BNs are fairly different. We label the $m(2k + 1)$ nodes as $x^{1}_{1}$, $x^{1}_{2}$, \ldots, $x^{1}_{2k+1}$, $x^{2}_{1}$, $x^{2}_{2}$, \ldots, $x^{2}_{2k+1}$, \ldots, $x^{m}_{1}$, $x^{m}_{2}$, \ldots, $x^{m}_{2k+1}$. The dependencies among the nodes are defined below.
\begin{align}
x^{2}_{1}(t + 1) &=
h_{k}(x^{1}_{2}(t), x^{1}_{3}(t), x^{1}_{4}(t), \ldots, x^{1}_{2k+1}(t)) \\
x^{2}_{2}(t + 1) &=
h_{k}(x^{1}_{1}(t), x^{1}_{3}(t), x^{1}_{4}(t), \ldots, x^{1}_{2k+1}(t)) \\
& \quad \vdots \nonumber \\
x^{2}_{2k+1}(t + 1) &=
h_{k}(x^{1}_{1}(t), x^{1}_{2}(t), x^{1}_{3}(t), \ldots, x^{1}_{2k}(t)) \\
x^{3}_{1}(t + 1) &=
h_{k}(x^{2}_{2}(t), x^{2}_{3}(t), x^{2}_{4}(t), \ldots, x^{2}_{2k+1}(t)) \\
x^{3}_{2}(t + 1) &=
h_{k}(x^{2}_{1}(t), x^{2}_{3}(t), x^{2}_{4}(t), \ldots, x^{2}_{2k+1}(t)) \\
& \quad \vdots \nonumber  \\ 
x^{3}_{2k+1}(t + 1) &=
h_{k}(x^{2}_{1}(t), x^{2}_{2}(t), x^{2}_{3}(t), \ldots, x^{2}_{2k}(t)) \\
& \quad \vdots \nonumber  \\ 
x^{1}_{1}(t + 1) &=
h_{k}(x^{m}_{2}(t), x^{m}_{3}(t), x^{m}_{4}(t), \ldots, x^{m}_{2k+1}(t)) \\
x^{1}_{2}(t + 1) &=
h_{k}(x^{m}_{1}(t), x^{m}_{3}(t), x^{m}_{4}(t), \ldots, x^{m}_{2k+1}(t)) \\
& \quad \vdots \nonumber \\ 
x^{1}_{2k+1}(t + 1) &=
h_{k}(x^{m}_{1}(t), x^{m}_{2}(t), x^{m}_{3}(t), \ldots, x^{m}_{2k}(t))
\end{align}

Let $G_{2} : \{ 0, 1 \}^{2k+1} \to \{ 0, 1 \}^{2k+1}$ be the function such that for all $\mathbf{y} \in \{ 0, 1 \}^{2k+1}$,
\begin{equation}\label{eq:def_of_G2}
G_{2}(\mathbf{y}) = (h_{k}(\mathbf{y}_{-1}), h_{k}(\mathbf{y}_{-2}), \ldots, h_{k}(\mathbf{y}_{-(2k+1)})).
\end{equation}
Then, the dependencies of this BN can be succinctly represented as
\begin{align}
\mathbf{x}^{2}(t + 1) &= 
G_{2}(\mathbf{x}^{1}(t)) \\
\mathbf{x}^{3}(t + 1) &=
G_{2}(\mathbf{x}^{2}(t)) \\
& \vdots \nonumber \\ 
\mathbf{x}^{1}(t + 1) &= 
G_{2}(\mathbf{x}^{m}(t))
\end{align}

The graphical representation of this BN is given in Figure \ref{fig:general_2k_2k_majority_BN}.

\begin{figure}[h]
\caption{Graphical Representation of the $2k$-$2k$-MAJORITY-BN}\label{fig:general_2k_2k_majority_BN}
\centering
\includegraphics[width=0.7\textwidth]{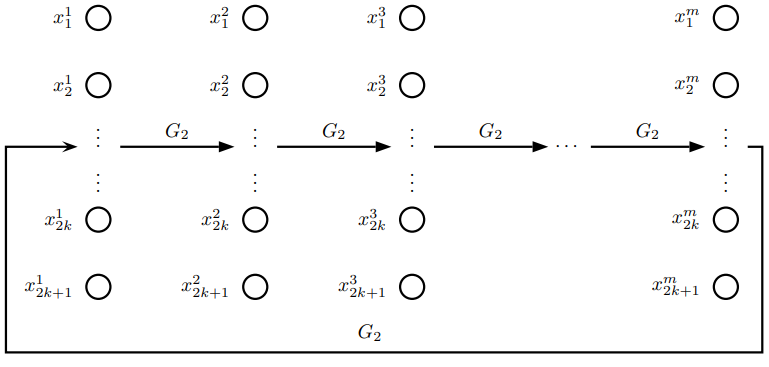}
\end{figure}

\FloatBarrier

The control-node set is defined to be
\begin{equation}\label{eq:2k_2k_MAJORITY_BN_set_U}
U \coloneqq 
\{ x^{1}_{1}, x^{1}_{2}, \ldots, x^{1}_{2k+1} \} \cup 
\{ x^{2}_{1}, \ldots, x^{2}_{k} \} \cup
\{ x^{3}_{1}, \ldots, x^{3}_{k} \} \cup \ldots \cup
\{ x^{m}_{1}, \ldots, x^{m}_{k} \}.
\end{equation}
Note that $|U| = (2k + 1) + (m - 1)k$ and $\frac{|U|}{n} = \frac{(2k + 1) + (m - 1)k}{m(2k + 1)}$.

The following proposition and definition are the key to understanding why this BN is controllable.

\begin{proposition}\label{prop:applying_G2}
For all $\mathbf{y} \in \{ 0, 1 \}^{2k+1}$,
if the number of $1$-entries of $\mathbf{y}$ is at least $k + 1$, then $G_{2}(\mathbf{y}) = \mathbf{1}_{2k + 1}$;
if the number of $1$-entries of $\mathbf{y}$ is at most $k - 1$, then $G_{2}(\mathbf{y}) = \mathbf{0}_{2k + 1}$;
if the number of $1$-entries of $\mathbf{y}$ is equal to $k$, then 
$G_{2}(\mathbf{y}) = (1 - y_{1}, 1 - y_{2}, \ldots, 1 - y_{2k+1})$.
\end{proposition}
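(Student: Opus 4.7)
The plan is to mirror the proof of Proposition \ref{prop:applying_G1}, adjusting only the arithmetic to reflect that $h_{k}$ acts on $2k$ bits with threshold $k$ (rather than $2k+1$ bits with threshold $k+1$, as was the case for $f_{k}$). Concretely, I would fix an arbitrary $\mathbf{y} \in \{0,1\}^{2k+1}$, let $d$ denote the number of $1$-entries of $\mathbf{y}$, and write $\mathbf{z} \coloneqq G_{2}(\mathbf{y})$. The central observation is that for every $i \in \{1, 2, \ldots, 2k+1\}$, the vector $\mathbf{y}_{-i} \in \{0,1\}^{2k}$ has exactly $d - y_{i}$ entries equal to $1$, so by Definition \ref{def:k-ary_majority_function} the value $z_{i} = h_{k}(\mathbf{y}_{-i})$ is determined purely by whether $d - y_{i} \geq k$.

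I would then split into the three hypothesized cases. If $d \geq k+1$, then $d - y_{i} \geq d - 1 \geq k$ for every $i$, so $z_{i} = 1$ for all $i$ and $\mathbf{z} = \mathbf{1}_{2k+1}$. If $d \leq k-1$, then $d - y_{i} \leq d \leq k-1 < k$ for every $i$, so $z_{i} = 0$ for all $i$ and $\mathbf{z} = \mathbf{0}_{2k+1}$. If $d = k$, then for each $i$ I would split on the value of $y_{i}$: when $y_{i} = 1$ the vector $\mathbf{y}_{-i}$ contains $k-1$ ones, giving $z_{i} = 0 = 1 - y_{i}$; when $y_{i} = 0$ it contains $k$ ones, giving $z_{i} = 1 = 1 - y_{i}$. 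In either subcase $z_{i} = 1 - y_{i}$, so $\mathbf{z} = (1 - y_{1}, 1 - y_{2}, \ldots, 1 - y_{2k+1})$.

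There is no real obstacle here, since the argument is a direct counting exercise relying only on the definition of the $2k$-ary majority function. The only point that deserves attention is to be careful with the threshold for $h_{k}$: because $h_{k}$ takes $2k$ inputs, the decisive cardinality is exactly $k$, which is why the boundary case $d = k$ of Proposition \ref{prop:applying_G2} corresponds to (and plays the same role as) the boundary case $d = k+1$ in Proposition \ref{prop:applying_G1}. Everything else carries over verbatim from the earlier proof.
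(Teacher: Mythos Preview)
Your proposal is correct and takes exactly the same approach as the paper, which simply states that the proof follows the same line of reasoning as Proposition \ref{prop:applying_G1}. In fact you have spelled out the counting details more carefully than the paper does.
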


\begin{proof}
The proof of this proposition follows the same line of reasoning as the proof of Proposition \ref{prop:applying_G1}.
\end{proof}

\begin{definition}\label{def:alpha_two_v}
Let $\mathbf{v}$ be a vector in $\{ 0, 1 \}^{k+1}$ which contains at least one $0$-entry and at least one $1$-entry. Let $d$ be the number of $1$-entries in $\mathbf{v}$. Clearly, $1 \leq d \leq k$. We define the length-$k$ vector
\begin{equation}\label{eq:alpha_two_v}
\alpha_{2}(\mathbf{v}) \coloneqq 
\underbrace{11\ldots1}_{\substack{\text{$k - d$} \\ \text{times}}}\overbrace{00\ldots0}^{\substack{\text{$d$} \\ \text{times}}}
\end{equation}
\end{definition}

By considering Proposition \ref{prop:applying_G2}, we can intuitively understand the evolution of the $2k$-$2k$-MAJORITY-BN. Suppose that we do not exert any control signals, and that at time $t$, the states of the nodes $x^{1}_{1}$, $x^{1}_{2}$, \ldots, $x^{1}_{2k}$, $x^{1}_{2k+1}$ are given by some $\mathbf{y}$ in $\{ 0, 1 \}^{2k+1}$. Consider the following three cases:
\begin{enumerate}
\item if the number of 1-entries in $\mathbf{y}$ is at least $k+1$, then for $s = 1, 2, \ldots, m-1$, the states of the nodes $x^{s+1}_{1}$, $x^{s+1}_{2}$, \ldots, $x^{s+1}_{2k}$, $x^{s+1}_{2k+1}$ at time $t + s$ will be $\mathbf{1}_{2k+1}$;

\item if the number of 1-entries in $\mathbf{y}$ is at most $k - 1$, then for $s = 1, 2, \ldots, m-1$, the states of the nodes $x^{s+1}_{1}$, $x^{s+1}_{2}$, \ldots, $x^{s+1}_{2k}$, $x^{s+1}_{2k+1}$ at time $t + s$ will be $\mathbf{0}_{2k+1}$;

\item if the number of 1-entries in $\mathbf{y}$ equals $k$, then the states of the nodes $x^{2}_{1}$, $x^{2}_{2}$, \ldots, $x^{2}_{2k}$, $x^{2}_{2k+1}$ at time $t + 1$ will be $\mathbf{1}_{2k+1} - \mathbf{y}$. 
Note that $\mathbf{1}_{2k+1} - \mathbf{y}$ has $k+1$ 1-entries. Therefore, for $s = 2, 3, \ldots, m - 1$, 
the states of the nodes $x^{s+1}_{1}$, $x^{s+1}_{2}$, \ldots, $x^{s+1}_{2k}$, $x^{s+1}_{2k+1}$ at time $t + s$ will be $\mathbf{1}_{2k+1}$.
\end{enumerate}

Now, we explain a key idea behind how we control the $2k$-$2k$-MAJORITY-BN. 

Fix arbitrary $l \in \{ 2, 3, \ldots, m \}$ and $b_{k+1}, b_{k+2}, \ldots, b_{2k+1} \in \{ 0, 1 \}$ such that $l$ is odd and 
$b_{k+1}$, $b_{k+2}$, \ldots, $b_{2k+1}$ contain both $0$ and $1$. Let $d \in [1, k]$ be the number of $1$-entries in $\mathbf{b} \coloneqq (b_{k+1}, b_{k+2}, \ldots, b_{2k+1})$. Suppose that we want to exert control signals to
achieve $x^{l}_{k+1}(m) = b_{k+1}$, $x^{l}_{k+2}(m) = b_{k+2}$, \ldots, $x^{l}_{2k+1}(m) = b_{2k+1}$.
To do so, first we exert control signals on the control nodes $x^{1}_{1}$, $x^{1}_{2}$, \ldots, $x^{1}_{2k+1}$ at time $m - l$ so that $(x^{1}_{1}(m - l + 1), x^{1}_{2}(m - l + 1), \ldots, x^{1}_{2k+1}(m - l + 1)) = 
(\alpha_{2}(\mathbf{b}), \mathbf{b})$, which has $k$ $1$-entries. Then, the BN evolves so that the nodes
$x^{2}_{1}$, $x^{2}_{2}$, \ldots, $x^{2}_{2k+1}$ are in the state
\begin{equation}
G_{2}(\alpha_{2}(\mathbf{b}), \mathbf{b})
= \mathbf{1}_{2k+1} - (\alpha_{2}(\mathbf{b}), \mathbf{b})
= (\mathbf{1}_{k} - \alpha_{2}(\mathbf{b}), \mathbf{1}_{k+1} - \mathbf{b}).
\end{equation}
Note that $(\mathbf{1}_{k} - \alpha_{2}(\mathbf{b}), \mathbf{1}_{k+1} - \mathbf{b})$ has $k+1$ $1$-entries.
We want to reduce the number of $1$-entries to $k$. To do so, we overwrite the states of the control nodes 
$x^{2}_{1}$, $x^{2}_{2}$, \ldots, $x^{2}_{k}$ at time $m - l + 1$ so that
$(x^{2}_{1}(m - l + 2), x^{2}_{2}(m - l + 2), \ldots, x^{2}_{2k+1}(m - l + 2))
= (\alpha_{2}(\mathbf{1}_{k+1} - \mathbf{b}), \mathbf{1}_{k+1} - \mathbf{b})$, which clearly has $k$ $1$-entries.
Afterwards, the BN evolves so that the nodes $x^{3}_{1}$, $x^{3}_{2}$, \ldots, $x^{3}_{2k+1}$ are in the state
\begin{equation}
G_{2}(\alpha_{2}(\mathbf{1}_{k+1} - \mathbf{b}), \mathbf{1}_{k+1} - \mathbf{b})
= \mathbf{1}_{2k+1} - (\alpha_{2}(\mathbf{1}_{k+1} - \mathbf{b}), \mathbf{1}_{k+1} - \mathbf{b})
= (\mathbf{1}_{k} - \alpha_{2}(\mathbf{1}_{k+1} - \mathbf{b}), \mathbf{b}),
\end{equation}
which has $k + 1$ $1$-entries. To reduce the number of $1$-entries to $k$, we can overwrite the states of the control nodes $x^{3}_{1}$, $x^{3}_{2}$, \ldots, $x^{3}_{k}$ at time $m - l + 2$ so that
$(x^{3}_{1}(m - l + 3), x^{3}_{2}(m - l + 3), \ldots, x^{3}_{2k+1}(m - l + 3))
= (\alpha_{2}(\mathbf{b}), \mathbf{b})$, which has $k$ $1$-entries.
We repeat this process iteratively until the nodes $x^{l}_{k+1}$, $x^{l}_{k+2}$, \ldots, $x^{l}_{2k+1}$ are in states 
$b_{k+1}$, $b_{k+2}$, \ldots, $b_{2k+1}$ respectively at time $m$.

As for the case that $l \in \{ 2, 3, \ldots, m \}$ is even, we exert control signals in a similar fashion, 
but at time $m - l$, we exert control signals on the nodes $x^{1}_{1}$, $x^{1}_{2}$, \ldots, $x^{1}_{2k+1}$ so that
$(x^{1}_{1}(m - l + 1), x^{1}_{2}(m - l + 1), \ldots, x^{1}_{2k+1}(m - l + 1))
= (\alpha_{2}(\mathbf{1}_{k+1} - \mathbf{b}), \mathbf{1}_{k+1} - \mathbf{b})$.

We can now present the theorem on the control of this $2k$-$2k$-MAJORITY-BN and the associated control strategy.

\begin{theorem}\label{thm:2k_MAJOR_bcub}
The constructed $(2k+1)m$-node $2k$-$2k$-MAJORITY-BN with control-node set $U$ 
of size $2k + 1 + (m - 1)k$ defined in Eq.\@ (\ref{eq:2k_2k_MAJORITY_BN_set_U}) is controllable. More specifically, for all 
$\mathbf{a}, \mathbf{b} \in \{ 0, 1 \}^{(2k+1)m}$, there exists a control scheme which drives the BN from 
$\mathbf{a}$ (time $0$) to $\mathbf{b}$ (time $m$).
\end{theorem}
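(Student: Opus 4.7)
The plan is to give an explicit control scheme, formalizing the strategy sketched between Proposition \ref{prop:applying_G2} and the theorem. I would run in parallel $m-1$ independent ``chains,'' one for each target layer $l \in \{2, 3, \ldots, m\}$; chain-$l$ is responsible for driving the tail $(x^l_{k+1}(m), \ldots, x^l_{2k+1}(m))$ to equal $\mathbf{b}^l_{\text{tail}} := (b^l_{k+1}, \ldots, b^l_{2k+1})$. Layer $1$ at time $m$, together with the first $k$ entries of every layer $l \geq 2$ at time $m$, consists entirely of control nodes and will simply be overwritten by $\mathbf{u}(m-1)$ to match $\mathbf{b}$.

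Chain-$l$ is injected at time $\tau = m-l$ through layer $1$ and propagates to layer $l$ at time $m$. In the generic case where $\mathbf{b}^l_{\text{tail}}$ is neither $\mathbf{0}_{k+1}$ nor $\mathbf{1}_{k+1}$, I set $\mathbf{v}_1 := \mathbf{b}^l_{\text{tail}}$ when $l$ is odd and $\mathbf{v}_1 := \mathbf{1}_{k+1} - \mathbf{b}^l_{\text{tail}}$ when $l$ is even, and force layer $1$ at time $m-l+1$ to equal $(\alpha_2(\mathbf{v}_1), \mathbf{v}_1)$, which has exactly $k$ ones. At each subsequent time $\tau \in \{m-l+1, \ldots, m-2\}$, chain-$l$ acts on layer $s := \tau - m + l + 1 \in \{2, \ldots, l-1\}$ by overwriting its first $k$ entries (which lie in $U$) so that layer $s$ at time $\tau + 1$ equals $(\alpha_2(\mathbf{v}_s), \mathbf{v}_s)$, where $\mathbf{v}_s = \mathbf{v}_1$ if $s$ is odd and $\mathbf{v}_s = \mathbf{1}_{k+1} - \mathbf{v}_1$ if $s$ is even. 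Crucially, at any fixed time $\tau+1$ the chains with distinct target layers $l$ act on distinct working layers $s = \tau - m + l + 1$, so different chains never compete for the same control nodes. The edge cases $\mathbf{b}^l_{\text{tail}} = \mathbf{0}_{k+1}$ or $\mathbf{1}_{k+1}$ are handled by injecting $\mathbf{0}_{2k+1}$ or $\mathbf{1}_{2k+1}$ into layer $1$; by the first two cases of Proposition \ref{prop:applying_G2} these states are absorbing and propagate down the chain without any further overwrite.

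The core step is a layer-by-layer induction along each chain. Using Proposition \ref{prop:applying_G2} together with the fact that every state of the form $(\alpha_2(\mathbf{v}), \mathbf{v})$ has exactly $k$ ones, one verifies that if layer $s$ at time $\tau + 1$ equals $(\alpha_2(\mathbf{v}_s), \mathbf{v}_s)$ then $G_2$ maps it to $\mathbf{1}_{2k+1} - (\alpha_2(\mathbf{v}_s), \mathbf{v}_s)$, whose tail is $\mathbf{1}_{k+1} - \mathbf{v}_s = \mathbf{v}_{s+1}$; the prescribed overwrite of the first $k$ entries then restores the invariant form $(\alpha_2(\mathbf{v}_{s+1}), \mathbf{v}_{s+1})$ at layer $s+1$, as long as $s+1 \leq l-1$. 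At the terminal step $s = l-1$ no overwrite is applied, so layer $l$ at time $m$ has tail $\mathbf{1}_{k+1} - \mathbf{v}_{l-1}$, and the parity choice in the definition of $\mathbf{v}_1$ is arranged precisely so that this equals $\mathbf{b}^l_{\text{tail}}$.

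Finally, I apply $\mathbf{u}(m-1)$ so as to set every control node of $\mathbf{x}(m)$ (all of layer $1$ and the first $k$ entries of each layer $l \geq 2$) equal to the corresponding entries of $\mathbf{b}$; the tails of layers $l \geq 2$ are already correct by the chain argument, hence $\mathbf{x}(m) = \mathbf{b}$. The main obstacle is the book-keeping showing that all $m-1$ chains really can be executed simultaneously: one must check that at each time $\tau$ the layer-$1$ injection for chain-$(m-\tau)$ and the intermediate overwrites for all chains $l > m-\tau$ act on pairwise disjoint sets of control nodes, and that at the moment a chain's overwrite is applied the layer it acts on is indeed in the state predicted by the induction, rather than having been perturbed by a different chain.
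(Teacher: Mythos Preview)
Your proposal is correct and follows essentially the same approach as the paper: the chain-propagation idea you describe is precisely the strategy the paper sketches before the theorem and then encodes as Algorithm~\ref{alg:control_scheme_2k_MAJORITY} (the only cosmetic difference is that lines 13--15 of the algorithm normalize \emph{every} layer $l\ge 2$ with a mixed tail at each step, whereas you overwrite only the active-chain layers, but on those layers the two prescriptions coincide). Your explicit layer-by-layer induction, the non-interference observation that chain-$l$ occupies layer $s=\tau-m+l+1$ at time $\tau$, and the parity choice of $\mathbf{v}_1$ are exactly the correctness argument that the paper leaves implicit.
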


\begin{proof}
The control strategy is shown in Algorithm \ref{alg:control_scheme_2k_MAJORITY}.

\begin{algorithm}
\caption{The Control Strategy for a $2k$-$2k$-MAJORITY-BN}
\label{alg:control_scheme_2k_MAJORITY}
\begin{algorithmic}[1]
\Require An initial state $\mathbf{a} \in \{ 0, 1 \}^{m(2k+1)}$ and a target state $\mathbf{b} \in \{ 0, 1 \}^{m(2k+1)}$.

\State Set $\mathbf{x}(0) \gets \mathbf{a}$.

\LComment{Iteration phase}

\For{$t = 0, 1, \ldots, m - 2$}
\State Use the update rules of the BN to update the BN from state $\mathbf{x}(t)$. Denote the resulting state by $\mathbf{y}(t) = F(\mathbf{x}(t))$.

\If{$b^{m-t}_{k+1} = b^{m-t}_{k+2} = \cdots = b^{m-t}_{2k+1} = 1$}

\State Overwrite the states of the nodes $x^{1}_{1}$, $x^{1}_{2}$, \ldots, $x^{1}_{2k+1}$ in $\mathbf{y}(t)$ so that $x^{1}_{1}(t + 1) = x^{1}_{2}(t + 1) = \cdots = x^{1}_{2k+1}(t + 1) = 1$.

\ElsIf{$b^{m-t}_{k+1} = b^{m-t}_{k+2} = \cdots = b^{m-t}_{2k+1} = 0$}

\State Overwrite the states of the nodes $x^{1}_{1}$, $x^{1}_{2}$, \ldots, $x^{1}_{2k+1}$ in $\mathbf{y}(t)$ so that $x^{1}_{1}(t + 1) = x^{1}_{2}(t + 1) = \cdots = x^{1}_{2k+1}(t + 1) = 0$.

\ElsIf{$m - t$ is odd and $\mathbf{p} \coloneqq (b^{m-t}_{k+1}, b^{m-t}_{k+2}, \ldots, b^{m-t}_{2k+1})$ contains both 0-entry and 1-entry}

\State Overwrite the states of the nodes $x^{1}_{1}$, $x^{1}_{2}$, \ldots, $x^{1}_{2k+1}$ in $\mathbf{y}(t)$ so that
$(x^{1}_{1}(t + 1), x^{1}_{2}(t + 1), \ldots, x^{1}_{2k+1}(t + 1)) = (\alpha_{2}(\mathbf{p}), \mathbf{p})$.
Note that
$(\alpha_{2}(\mathbf{p}), \mathbf{p})$
contains exactly $k$ 1-entries.

\ElsIf{$m - t$ is even and $\mathbf{p} \coloneqq (b^{m-t}_{k+1}, b^{m-t}_{k+2}, \ldots, b^{m-t}_{2k+1})$ contains both 0-entry and 1-entry}

\State Overwrite the states of the nodes $x^{1}_{1}$, $x^{1}_{2}$, \ldots, $x^{1}_{2k+1}$ in $\mathbf{y}(t)$ so that
$(x^{1}_{1}(t + 1), x^{1}_{2}(t + 1), \ldots, x^{1}_{2k+1}(t + 1)) 
= (\alpha_{2}(\mathbf{1}_{k+1} - \mathbf{p}), \mathbf{1}_{k+1} - \mathbf{p})$.
Note that
$(\alpha_{2}(\mathbf{1}_{k+1} - \mathbf{p}), \mathbf{1}_{k+1} - \mathbf{p})$
contains exactly $k$ 1-entries.
\EndIf

\For{$l = 2, 3, \ldots, m$}
\If{$\mathbf{q} \coloneqq (y^{l}_{k+1}(t), y^{l}_{k+2}(t), \ldots, y^{l}_{2k+1}(t))$ contains both 0-entry and 1-entry}

\State Overwrite the states of the nodes $x^{l}_{1}$, $x^{l}_{2}$, \ldots, $x^{l}_{k}$ in $\mathbf{y}(t)$ so that 
$(x^{l}_{1}(t + 1), x^{l}_{2}(t + 1), \ldots, x^{l}_{k}(t + 1)) = \alpha_{2}(\mathbf{q})$.
\EndIf
\EndFor 

\LComment{We use $\mathbf{x}(t + 1)$ to denote the state of the whole BN after the overwriting in lines 4--15.}
\EndFor 

\LComment{Termination phase}
\State Use the update rules of the BN to update the BN from state $\mathbf{x}(m-1)$.
Denote the resulting state by $\mathbf{y}(m-1) = F(\mathbf{x}(m-1))$.

\State Overwrite the states of all control nodes in $\mathbf{y}(m-1)$ to form $\mathbf{x}(m)$ which satisfies 
$\mathbf{x}(m) = \mathbf{b}$.
\end{algorithmic}
\end{algorithm}

\FloatBarrier
\end{proof}

\begin{remark}\label{rmk:diff_between_odd_control_even_control}
The major difference between the control strategies in Algorithms \ref{alg:control_scheme_(2k+1)_MAJORITY} and \ref{alg:control_scheme_2k_MAJORITY} lies in lines 13--15 of Algorithm \ref{alg:control_scheme_2k_MAJORITY}.
\end{remark}

\begin{example}
Let $k = 2$ and $m = 4$. The corresponding $4$-$4$-MAJORITY-BN has $(2k + 1)m = 20$ nodes and is shown in Figure \ref{fig:example_4_4_majority_BN}.

\begin{figure}[h]
\caption{Structure of the $20$-node $4$-$4$-MAJORITY-BN. The blue nodes are the control nodes.}\label{fig:example_4_4_majority_BN}
\centering
\includegraphics[width=0.6\textwidth]{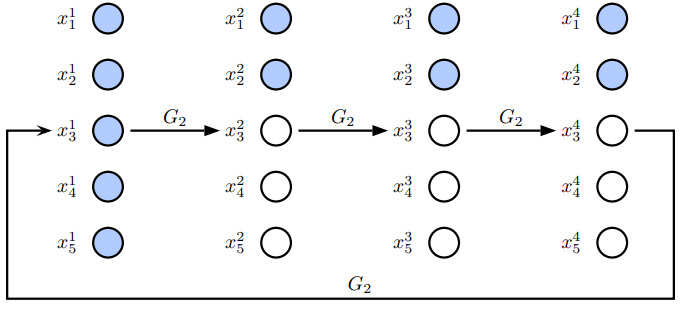}
\end{figure}

\FloatBarrier

Suppose that we want to drive the $4$-$4$-MAJORITY-BN from the initial state $\mathbf{a} \in \{ 0, 1 \}^{20}$ to the target state $\mathbf{b} \in \{ 0, 1 \}^{20}$ defined below.
\begin{table*}[!h]
\centering
\begin{tabular}{ccc}  
  \begin{tabular}{l|llll}
  \multirow{5}{*}{$\mathbf{a}$}	& 0 & 1 & 1 & 0 \\
						& 0 & 1 & 0 & 0 \\
						& 1 & 0 & 0 & 0 \\
						& 0 & 0 & 1 & 0 \\
						& 0 & 0 & 0 & 1 \\
  \end{tabular}
  & \phantom{hello} &
  \begin{tabular}{l|llll}
  \multirow{5}{*}{$\mathbf{b}$}	& 1 & 1 & 1 & 0 \\
						& 1 & 0 & 0 & 0 \\
						& 0 & 1 & 1 & 0 \\
						& 0 & 1 & 0 & 1 \\
						& 1 & 1 & 1 & 0 \\
  \end{tabular}
  \\
\end{tabular}  
\end{table*}

\FloatBarrier

We can apply our control strategy in Algorithm \ref{alg:control_scheme_2k_MAJORITY} to drive the BN from 
$\mathbf{a}$ (time $0$) to $\mathbf{b}$ (time $m = 4$), which is illustrated below.
\begin{table*}[!h]
\centering
\begin{tabular}{rrr}  
  \begin{tabular}{l|llll}
  \multirow{5}{*}{$\mathbf{x}(0)$}	& 0 & 1 & 1 & 0 \\
						& 0 & 1 & 0 & 0 \\
						& 1 & 0 & 0 & 0 \\
						& 0 & 0 & 1 & 0 \\
						& 0 & 0 & 0 & 1 \\
  \end{tabular}
  &
  \begin{tabular}{l|llll}
  \multirow{5}{*}{$F(\mathbf{x}(0))$}	& 0 & 0 & 0 & 0 \\
						& 0 & 0 & 0 & 1 \\
						& 0 & 0 & 1 & 1 \\
						& 0 & 0 & 1 & 0 \\
						& 0 & 0 & 1 & 1 \\
  \end{tabular}
  &
  \begin{tabular}{l|llll}
  \multirow{5}{*}{$\mathbf{x}(1)$}	& 0 & 0 & 0 & 0 \\
						& 0 & 0 & 0 & 0 \\
						& 1 & 0 & 1 & 1 \\
						& 0 & 0 & 1 & 0 \\
						& 1 & 0 & 1 & 1 \\
  \end{tabular}
  \\
  \phantom{a}  & \phantom{a}  & \phantom{a} \\
  \begin{tabular}{l|llll}
  \multirow{5}{*}{$F(\mathbf{x}(1))$}	& 1 & 1 & 0 & 1 \\
						& 1 & 1 & 0 & 1 \\
						& 0 & 0 & 0 & 1 \\
						& 1 & 1 & 0 & 1 \\
						& 0 & 0 & 0 & 1 \\
  \end{tabular}
  &
  \begin{tabular}{l|llll}
  \multirow{5}{*}{$\mathbf{x}(2)$}	& 0 & 1 & 0 & 1 \\
						& 0 & 0 & 0 & 1 \\
						& 1 & 0 & 0 & 1 \\
						& 0 & 1 & 0 & 1 \\
						& 1 & 0 & 0 & 1 \\
  \end{tabular}
  &
  \begin{tabular}{l|llll}
  \multirow{5}{*}{$F(\mathbf{x}(2))$}	& 1 & 1 & 0 & 0 \\
						& 1 & 1 & 1 & 0 \\
						& 1 & 0 & 1 & 0 \\
						& 1 & 1 & 0 & 0 \\
						& 1 & 0 & 1 & 0 \\
  \end{tabular}
  \\
  \phantom{a}  & \phantom{a}  & \phantom{a} \\
  \begin{tabular}{l|llll}
  \multirow{5}{*}{$\mathbf{x}(3)$}	& 1 & 1 & 0 & 0 \\
						& 1 & 0 & 0 & 0 \\
						& 1 & 0 & 1 & 0 \\
						& 1 & 1 & 0 & 0 \\
						& 1 & 0 & 1 & 0 \\
  \end{tabular}
  &
  \begin{tabular}{l|llll}
  \multirow{5}{*}{$F(\mathbf{x}(3))$}	& 0 & 1 & 0 & 1 \\
						& 0 & 1 & 1 & 1 \\
						& 0 & 1 & 1 & 0 \\
						& 0 & 1 & 0 & 1 \\
						& 0 & 1 & 1 & 0 \\
  \end{tabular}
  &
  \begin{tabular}{l|llll}
  \multirow{5}{*}{$\mathbf{x}(4)$}	& 1 & 1 & 1 & 0 \\
						& 1 & 0 & 0 & 0 \\
						& 0 & 1 & 1 & 0 \\
						& 0 & 1 & 0 & 1 \\
						& 1 & 1 & 1 & 0 \\
  \end{tabular}
  \\
\end{tabular}  
\end{table*}

\FloatBarrier
\end{example}

\subsubsection{$2k$-$2k$-MTBI-BNs}\label{section:bcub_2k_2k_MTBI_BNs}

Let $k$ be a positive integer. Denote the $2k$-ary MTBI function by
$g_{k} : \{ 0, 1 \}^{2k} \to \{ 0, 1 \}$. In the following, we construct a $2k$-$2k$-MTBI-BN with 
$n  =2mk$ nodes, where $m$ is a positive integer. We label the nodes as $x^{1}_{1}$, $x^{1}_{2}$, \ldots, $x^{1}_{2k}$, $x^{2}_{1}$, $x^{2}_{2}$, \ldots, $x^{2}_{2k}$, \ldots, $x^{m}_{1}$, $x^{m}_{2}$, \ldots, $x^{m}_{2k}$. The dependencies among the nodes are defined below.
\begin{align}
x^{2}_{1}(t + 1) &=
g_{k}(x^{1}_{1}(t), x^{1}_{2}(t), x^{1}_{3}(t), \ldots, x^{1}_{2k}(t)) \\
x^{2}_{2}(t + 1) &=
g_{k}(x^{1}_{2}(t), x^{1}_{1}(t), x^{1}_{3}(t), \ldots, x^{1}_{2k}(t)) \\
& \quad \vdots \nonumber \\
x^{2}_{2k}(t + 1) &=
g_{k}(x^{1}_{2k}(t), x^{1}_{1}(t), x^{1}_{2}(t), \ldots, x^{1}_{2k-1}(t)) \\
x^{3}_{1}(t + 1) &=
g_{k}(x^{2}_{1}(t), x^{2}_{2}(t), x^{2}_{3}(t), \ldots, x^{2}_{2k}(t)) \\
x^{3}_{2}(t + 1) &=
g_{k}(x^{2}_{2}(t), x^{2}_{1}(t), x^{2}_{3}(t), \ldots, x^{2}_{2k}(t)) \\
& \quad \vdots \nonumber  \\ 
x^{3}_{2k}(t + 1) &=
g_{k}(x^{2}_{2k}(t), x^{2}_{1}(t), x^{2}_{2}(t), \ldots, x^{2}_{2k-1}(t)) \\
& \quad \vdots \nonumber  \\ 
x^{1}_{1}(t + 1) &=
g_{k}(x^{m}_{1}(t), x^{m}_{2}(t), x^{m}_{3}(t), \ldots, x^{m}_{2k}(t)) \\
x^{1}_{2}(t + 1) &=
g_{k}(x^{m}_{2}(t), x^{m}_{1}(t), x^{m}_{3}(t), \ldots, x^{m}_{2k}(t)) \\
& \quad \vdots \nonumber \\ 
x^{1}_{2k}(t + 1) &=
g_{k}(x^{m}_{2k}(t), x^{m}_{1}(t), x^{m}_{2}(t), \ldots, x^{m}_{2k-1}(t))
\end{align}

Denote the function $\mathbf{x}^{1}(t) \mapsto \mathbf{x}^{2}(t + 1)$ by $G_{3} : \{ 0, 1 \}^{2k} \to \{ 0, 1 \}^{2k}$. 
Then, the dependencies of this BN can be succinctly represented as
\begin{align}
\mathbf{x}^{2}(t + 1) &= 
G_{3}(\mathbf{x}^{1}(t)) \\
\mathbf{x}^{3}(t + 1) &=
G_{3}(\mathbf{x}^{2}(t)) \\
& \vdots \nonumber \\ 
\mathbf{x}^{1}(t + 1) &= 
G_{3}(\mathbf{x}^{m}(t))
\end{align}

The graphical representation of this BN is given in Figure \ref{fig:general_2k_2k_MTBI_BN}.

\begin{figure}[h]
\caption{Graphical Representation of the $2k$-$2k$-MTBI-BN}\label{fig:general_2k_2k_MTBI_BN}
\centering
\includegraphics[width=0.7\textwidth]{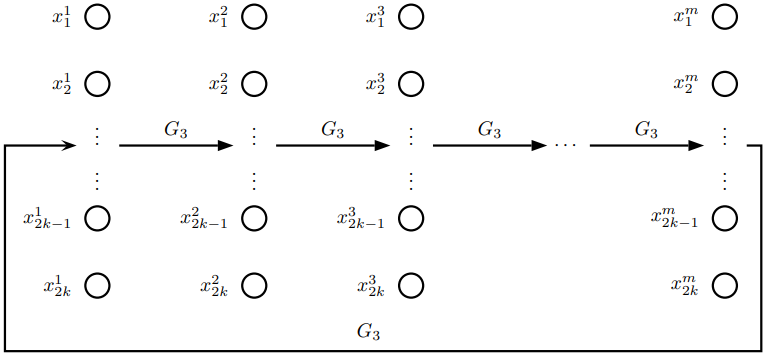}
\end{figure}

\FloatBarrier

The control-node set is defined to be
\begin{equation}\label{eq:2k_2k_MTBI_BN_set_U}
U \coloneqq 
\{ x^{1}_{1}, x^{1}_{2}, \ldots, x^{1}_{2k} \} \cup 
\{ x^{2}_{1}, \ldots, x^{2}_{k-1} \} \cup
\{ x^{3}_{1}, \ldots, x^{3}_{k-1} \} \cup \ldots \cup
\{ x^{m}_{1}, \ldots, x^{m}_{k-1} \}.
\end{equation}
Note that $|U| = 2k + (m - 1)(k - 1)$ and $\frac{|U|}{n} = \frac{2k + (m - 1)(k - 1)}{2mk}$.

The following proposition is the key to understanding why this BN is controllable.

\begin{proposition}\label{prop:applying_G3}
For all $\mathbf{y} \in \{ 0, 1 \}^{2k}$,
if the number of $1$-entries of $\mathbf{y}$ is at least $k + 1$, then $G_{3}(\mathbf{y}) = \mathbf{1}_{2k}$;
if the number of $1$-entries of $\mathbf{y}$ is at most $k - 1$, then $G_{3}(\mathbf{y}) = \mathbf{0}_{2k}$;
if the number of $1$-entries of $\mathbf{y}$ is equal to $k$, then 
$G_{3}(\mathbf{y}) = \mathbf{y}$.
\end{proposition}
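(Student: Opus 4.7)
The argument parallels the proofs of Propositions \ref{prop:applying_G1} and \ref{prop:applying_G2}, but now exploits a feature of $g_{k}$ that did not appear before: the tie-breaking input. The plan is to fix an arbitrary $\mathbf{y}\in\{0,1\}^{2k}$, set $\mathbf{z}\coloneqq G_{3}(\mathbf{y})$, and inspect each coordinate $z_{i}$. The key structural observation from the construction preceding the proposition is that $(G_{3}(\mathbf{y}))_{i}=g_{k}(y_{i},\,y_{1},\ldots,y_{i-1},y_{i+1},\ldots,y_{2k})$, i.e.\ the argument feeding into the tie-breaking slot of $g_{k}$ when computing $z_{i}$ is exactly $y_{i}$ itself. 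Splitting on the Hamming weight $w(\mathbf{y})\coloneqq\sum_{j=1}^{2k}y_{j}$ and applying Eq.\ (\ref{eq:2k-ary_MTBI_function_intuition}) then handles all three cases.

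First I would dispose of the two easy cases. If $w(\mathbf{y})\geq k+1$, then for every $i$ the $2k$-tuple supplied to $g_{k}$ is a permutation of $\mathbf{y}$ and hence also has at least $k+1$ ones, so the first clause of Eq.\ (\ref{eq:2k-ary_MTBI_function_intuition}) forces $z_{i}=1$; thus $\mathbf{z}=\mathbf{1}_{2k}$. If $w(\mathbf{y})\leq k-1$, the permuted input has at most $k-1$ ones, so the fourth clause yields $z_{i}=0$ for every $i$, giving $\mathbf{z}=\mathbf{0}_{2k}$. Both of these are essentially verbatim copies of the corresponding cases in Proposition \ref{prop:applying_G1}.

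The only case requiring the new ingredient is $w(\mathbf{y})=k$. Here the permuted input to $g_{k}$ in the computation of $z_{i}$ has exactly $k$ ones, so we land in the middle two clauses of Eq.\ (\ref{eq:2k-ary_MTBI_function_intuition}), which are distinguished by the value of the tie-breaking input. By the structural observation above, that tie-breaking input is $y_{i}$. Hence $y_{i}=1$ gives $z_{i}=1$ and $y_{i}=0$ gives $z_{i}=0$; in both sub-cases $z_{i}=y_{i}$, so $\mathbf{z}=\mathbf{y}$, which is exactly what we wanted.

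There is no real obstacle here; the content of the proposition is almost purely bookkeeping once one notices where the tie-breaking input sits, and this placement is precisely what differentiates the conclusion $G_{3}(\mathbf{y})=\mathbf{y}$ from the analogous $G_{2}(\mathbf{y})=\mathbf{1}_{2k+1}-\mathbf{y}$ of Proposition \ref{prop:applying_G2}. The mild risk is simply to double-check the indexing of the dependencies $x^{l+1}_{i}(t+1)=g_{k}(x^{l}_{i}(t),\ldots)$ so that one correctly identifies $y_{i}$ as the tie-breaking argument in $(G_{3}(\mathbf{y}))_{i}$; once that is pinned down, the three-case split above completes the proof.
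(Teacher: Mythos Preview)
Your proposal is correct and follows the same approach as the paper, which simply asserts that the proposition ``follows directly from the definition of $2k$-ary MTBI function (Definition \ref{def:2k-ary_MTBI_function}).'' You have merely unpacked that one-line justification into the explicit three-case analysis, correctly identifying that in the construction of $G_{3}$ the tie-breaking input of $g_{k}$ for the $i$-th coordinate is $y_{i}$ itself, which is exactly the structural fact that yields $G_{3}(\mathbf{y})=\mathbf{y}$ in the balanced case.
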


\begin{proof}
This proposition follows directly from the definition of $2k$-ary MTBI function (Definition \ref{def:2k-ary_MTBI_function}).
\end{proof}

By considering Proposition \ref{prop:applying_G3}, we can intuitively understand the evolution of the BN. Suppose that we do not exert any control signals, and that at time $t$, the states of the nodes $x^{1}_{1}$, $x^{1}_{2}$, \ldots, $x^{1}_{2k-1}$, $x^{1}_{2k}$ are given by some $\mathbf{y}$ in $\{ 0, 1 \}^{2k}$. Consider the following three cases:
\begin{enumerate}
\item if the number of 1-entries in $\mathbf{y}$ is at least $k+1$, then for $s = 1, 2, \ldots, m-1$, the states of the nodes $x^{s+1}_{1}$, $x^{s+1}_{2}$, \ldots, $x^{s+1}_{2k-1}$, $x^{s+1}_{2k}$ at time $t + s$ will be $\mathbf{1}_{2k}$;

\item if the number of 1-entries in $\mathbf{y}$ is at most $k-1$, then for $s = 1, 2, \ldots, m-1$, the states of the nodes $x^{s+1}_{1}$, $x^{s+1}_{2}$, \ldots, $x^{s+1}_{2k-1}$, $x^{s+1}_{2k}$ at time $t + s$ will be $\mathbf{0}_{2k}$;

\item if the number of 1-entries in $\mathbf{y}$ equals $k$, then for $s = 1, 2, \ldots, m-1$, the states of the nodes $x^{s+1}_{1}$, $x^{s+1}_{2}$, \ldots, $x^{s+1}_{2k-1}$, $x^{s+1}_{2k}$ at time $t + s$ will be $\mathbf{y}$.
\end{enumerate}

Before presenting our control strategy of this $2k$-$2k$-MTBI-BN, we need to provide the following definition.
\begin{definition}\label{def:alpha_three_v}
Let $\mathbf{v}$ be a vector in $\{ 0, 1 \}^{k+1}$ which contains at least one $0$-entry and at least one $1$-entry. Let $d$ be the number of $1$-entries in $\mathbf{v}$. Clearly, $1 \leq d \leq k$. We define the length-$(k-1)$ vector
\begin{equation}\label{eq:alpha_three_v}
\alpha_{3}(\mathbf{v}) \coloneqq 
\underbrace{11\ldots1}_{\substack{\text{$k - d$} \\ \text{times}}}\overbrace{00\ldots0}^{\substack{\text{$d - 1$} \\ \text{times}}}
\end{equation}
\end{definition}

\begin{theorem}\label{thm:2k_MTBI_bcub}
The constructed $2mk$-node $2k$-$2k$-MTBI-BN with control-node set $U$ 
of size $2k + (m - 1)(k - 1)$ defined in Eq.\@ (\ref{eq:2k_2k_MTBI_BN_set_U}) is controllable. More specifically, for all 
$\mathbf{a}, \mathbf{b} \in \{ 0, 1 \}^{2mk}$, there exists a control scheme which drives the BN from $\mathbf{a}$ (time $0$) to $\mathbf{b}$ (time $m$).
\end{theorem}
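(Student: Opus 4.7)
The plan is to follow the same template as the proofs of Theorems \ref{thm:(2k+1)_bcub} and \ref{thm:2k_MAJOR_bcub}: present an explicit control algorithm and verify its correctness by tracing how information propagates through the layers. The MTBI case is in fact cleaner than the majority case treated by Algorithm \ref{alg:control_scheme_2k_MAJORITY}, because Proposition \ref{prop:applying_G3} tells us that $G_{3}$ is the identity on inputs with exactly $k$ ones and collapses inputs with more (resp.\ fewer) ones to $\mathbf{1}_{2k}$ (resp.\ $\mathbf{0}_{2k}$); in particular, no state is ever flipped during propagation, so there is no odd/even parity distinction.

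For each $l\in\{2,3,\ldots,m\}$ I would first define a \emph{propagation seed} $\mathbf{w}^{l}\in\{0,1\}^{2k}$ from $\mathbf{q}^{l}\coloneqq(b^{l}_{k},b^{l}_{k+1},\ldots,b^{l}_{2k})$: set $\mathbf{w}^{l}=\mathbf{1}_{2k}$ if $\mathbf{q}^{l}=\mathbf{1}_{k+1}$, set $\mathbf{w}^{l}=\mathbf{0}_{2k}$ if $\mathbf{q}^{l}=\mathbf{0}_{k+1}$, and otherwise set $\mathbf{w}^{l}=(\alpha_{3}(\mathbf{q}^{l}),\mathbf{q}^{l})$, which by Definition \ref{def:alpha_three_v} has exactly $k$ ones. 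The algorithm iterates for $t=0,1,\ldots,m-2$: compute $\mathbf{y}(t)=F(\mathbf{x}(t))$; then (Step A) overwrite all of layer $1$ to obtain $\mathbf{x}^{1}(t+1)=\mathbf{w}^{m-t}$; then (Step B) for each $l=2,\ldots,m$, if $(y^{l}_{k}(t),\ldots,y^{l}_{2k}(t))$ is neither $\mathbf{0}_{k+1}$ nor $\mathbf{1}_{k+1}$, overwrite the first $k-1$ positions of layer $l$ to $\alpha_{3}(y^{l}_{k}(t),\ldots,y^{l}_{2k}(t))$, thereby forcing the layer's ones-count to be exactly $k$. At $t=m-1$ the termination phase computes $\mathbf{y}(m-1)=F(\mathbf{x}(m-1))$ and overwrites every control node so that $\mathbf{x}(m)=\mathbf{b}$.

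The verification is the main content. I would establish the invariant $\mathbf{x}^{i}(m-l+i)=\mathbf{w}^{l}$ by induction on $i\in\{1,2,\ldots,l\}$ for each fixed $l\in\{2,\ldots,m\}$. The base case $i=1$ is immediate from Step A at time $t=m-l$. For the inductive step, apply Proposition \ref{prop:applying_G3} to $\mathbf{y}^{i}(m-l+i-1)=G_{3}(\mathbf{x}^{i-1}(m-l+i-1))=G_{3}(\mathbf{w}^{l})$: in the two constant cases this already equals $\mathbf{1}_{2k}$ or $\mathbf{0}_{2k}$, and Step B leaves it untouched because its non-control suffix is uniform; in the mixed case $G_{3}(\mathbf{w}^{l})=\mathbf{w}^{l}$ by the identity property, and Step B then rewrites the first $k-1$ entries back to $\alpha_{3}(\mathbf{q}^{l})$, restoring $\mathbf{w}^{l}$ exactly. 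Taking $i=l$ yields $\mathbf{x}^{l}(m)=\mathbf{w}^{l}$, whose non-control entries are $\mathbf{q}^{l}=(b^{l}_{k},\ldots,b^{l}_{2k})$; together with the termination phase, which directly fixes $\mathbf{x}^{1}(m)$ and the first $k-1$ entries of each $\mathbf{x}^{l}(m)$, this forces $\mathbf{x}(m)=\mathbf{b}$.

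The main obstacle is that at any given time $\tau$, multiple propagation waves coexist in the network, since Step A injects a fresh seed $\mathbf{w}^{m-\tau}$ into layer $1$ at every iteration. I would verify non-interference by observing that at time $\tau$ the wave for target $l$ sits at layer $i=\tau-m+l$, so distinct waves occupy distinct layers; moreover the Step-B correction for layer $i$ consumes only $\mathbf{y}^{i}(\tau-1)=G_{3}(\mathbf{x}^{i-1}(\tau-1))$, and by the induction hypothesis $\mathbf{x}^{i-1}(\tau-1)=\mathbf{w}^{l}$, so the $l$-th wave's propagation uses exactly the data it needs; any arbitrary earlier state of layer $i$ (prior to the wave arriving there) is irrelevant, because it is overwritten at this very step. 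Once this non-interference is checked, the argument closes along the same lines as the proofs of Theorems \ref{thm:(2k+1)_bcub} and \ref{thm:2k_MAJOR_bcub}.
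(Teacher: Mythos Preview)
Your approach is correct and follows the paper's template, but it includes one piece the paper omits: the paper's Algorithm \ref{alg:control_scheme_2k_MTBI} has no analog of your Step B. During the iteration phase it overwrites \emph{only} layer~1, never the first $k-1$ positions of layers $2,\ldots,m$. This works because of the very observation you already make in your inductive step: $G_{3}(\mathbf{w}^{l})=\mathbf{w}^{l}$ in all three cases (the seed carries $0$, $k$, or $2k$ ones, and Proposition~\ref{prop:applying_G3} fixes each), so once a seed is injected it propagates unchanged and no per-layer correction is ever required. Your Step B is idempotent on the wave and therefore harmless, but it is precisely the simplification you announced in your opening paragraph (``no state is ever flipped during propagation'') and then did not carry through. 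The paper's control strategy is thus closer in shape to Algorithm~\ref{alg:control_scheme_(2k+1)_MAJORITY} than to Algorithm~\ref{alg:control_scheme_2k_MAJORITY}.

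One small bookkeeping point: your induction on $i$ strictly runs only to $i=l-1$, since the iteration phase stops at $t=m-2$. The termination step then gives $\mathbf{y}^{l}(m-1)=G_{3}(\mathbf{x}^{l-1}(m-1))=\mathbf{w}^{l}$, whose non-control suffix is $\mathbf{q}^{l}$, and the control-node overwrite yields $\mathbf{x}^{l}(m)=\mathbf{b}^{l}$ (not $\mathbf{w}^{l}$). Your conclusion is right; only the phrasing ``taking $i=l$ yields $\mathbf{x}^{l}(m)=\mathbf{w}^{l}$'' should be adjusted.
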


\begin{proof}
The control strategy is shown in Algorithm \ref{alg:control_scheme_2k_MTBI}.

\begin{algorithm}
\caption{The Control Strategy for a $2k$-$2k$-MTBI-BN}
\label{alg:control_scheme_2k_MTBI}
\begin{algorithmic}[1]
\Require An initial state $\mathbf{a} \in \{ 0, 1 \}^{2mk}$ and a target state $\mathbf{b} \in \{ 0, 1 \}^{2mk}$.

\State Set $\mathbf{x}(0) \gets \mathbf{a}$.

\LComment{Iteration phase}

\For{$t = 0, 1, \ldots, m - 2$}
\State Use the update rules of the BN to update the BN from state $\mathbf{x}(t)$. Denote the resulting state by $\mathbf{y}(t) = F(\mathbf{x}(t))$.

\If{$b^{m-t}_{k} = b^{m-t}_{k+1} = \cdots = b^{m-t}_{2k} = 1$}

\State Overwrite the states of the nodes $x^{1}_{1}$, $x^{1}_{2}$, \ldots, $x^{1}_{2k}$ in $\mathbf{y}(t)$ to form $\mathbf{x}(t+1)$ which satisfies $x^{1}_{1}(t + 1) = x^{1}_{2}(t + 1) = \cdots = x^{1}_{2k}(t + 1) = 1$.

\ElsIf{$b^{m-t}_{k} = b^{m-t}_{k+1} = \cdots = b^{m-t}_{2k} = 0$}

\State Overwrite the states of the nodes $x^{1}_{1}$, $x^{1}_{2}$, \ldots, $x^{1}_{2k}$ in $\mathbf{y}(t)$ to form $\mathbf{x}(t+1)$ which satisfies $x^{1}_{1}(t + 1) = x^{1}_{2}(t + 1) = \cdots = x^{1}_{2k}(t + 1) = 0$.

\ElsIf{$\mathbf{p} \coloneqq (b^{m-t}_{k}, b^{m-t}_{k+1}, \ldots, b^{m-t}_{2k})$ contains both 0-entry and 1-entry}

\State Overwrite the states of the nodes $x^{1}_{1}$, $x^{1}_{2}$, \ldots, $x^{1}_{2k}$ in $\mathbf{y}(t)$ to form 
$\mathbf{x}(t+1)$ which satisfies 
$(x^{1}_{1}(t + 1), x^{1}_{2}(t + 1), \ldots, x^{1}_{2k}(t + 1)) = (\alpha_{3}(\mathbf{p}), \mathbf{p})$.
Note that
$(\alpha_{3}(\mathbf{p}), \mathbf{p})$
contains exactly $k$ 1-entries.
\EndIf
\EndFor

\LComment{Termination phase}
\State Use the update rules of the BN to update the BN from state $\mathbf{x}(m-1)$.
Denote the resulting state by $\mathbf{y}(m-1) = F(\mathbf{x}(m-1))$.

\State Overwrite the states of all control nodes in $\mathbf{y}(m-1)$ to form $\mathbf{x}(m)$ which satisfies 
$\mathbf{x}(m) = \mathbf{b}$.
\end{algorithmic}
\end{algorithm}

\FloatBarrier
\end{proof}

\begin{example}
Let $k = 3$ and $m = 4$.
The corresponding $6$-$6$-MTBI-BN has $2mk = 24$ nodes and is shown in Figure \ref{fig:example_6_6_MTBI_BN}.
\begin{figure}[h]
\caption{Structure of the $24$-node $6$-$6$-MTBI-BN. The blue nodes are the control nodes.}\label{fig:example_6_6_MTBI_BN}
\centering
\includegraphics[width=0.6\textwidth]{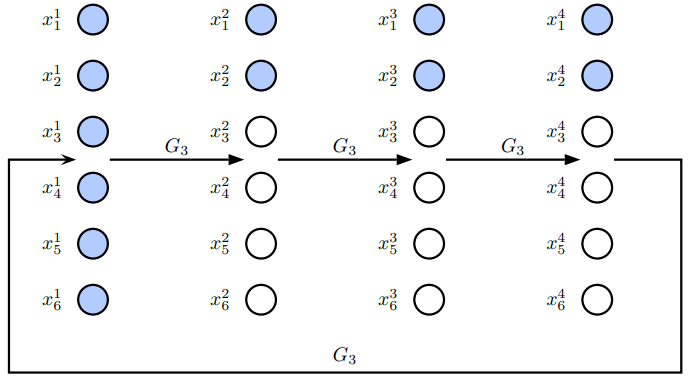}
\end{figure}

\FloatBarrier

Suppose that we want to drive the $6$-$6$-MTBI-BN from the initial state $\mathbf{a} \in \{ 0, 1 \}^{24}$ to the target state $\mathbf{b} \in \{ 0, 1 \}^{24}$ defined below.
\begin{table*}[!h]
\centering
\begin{tabular}{ccc}  
  \begin{tabular}{l|llll}
  \multirow{6}{*}{$\mathbf{a}$}	& 1 & 1 & 1 & 1 \\
						& 0 & 0 & 0 & 0 \\
						& 1 & 0 & 1 & 0 \\
						& 1 & 0 & 0 & 1 \\
						& 0 & 0 & 1 & 1 \\
						& 1 & 1 & 0 & 0 \\
  \end{tabular}
  & \phantom{hello} &
  \begin{tabular}{l|llll}
  \multirow{6}{*}{$\mathbf{b}$}	& 0 & 0 & 0 & 1 \\
						& 1 & 1 & 1 & 0 \\
						& 1 & 0 & 0 & 1 \\
						& 1 & 0 & 1 & 0 \\
						& 0 & 0 & 0 & 1 \\
						& 0 & 0 & 1 & 1 \\
  \end{tabular}
  \\
\end{tabular}  
\end{table*}

\FloatBarrier

We can apply our control strategy in Algorithm \ref{alg:control_scheme_2k_MTBI} to drive the BN from $\mathbf{a}$ (time $0$) to $\mathbf{b}$ (time $m = 4$), which is illustrated below.
\begin{table*}[!h]
\centering
\begin{tabular}{rrr}  
  \begin{tabular}{l|llll}
  \multirow{6}{*}{$\mathbf{x}(0)$}	& 1 & 1 & 1 & 1 \\
						& 0 & 0 & 0 & 0 \\
						& 1 & 0 & 1 & 0 \\
						& 1 & 0 & 0 & 1 \\
						& 0 & 0 & 1 & 1 \\
						& 1 & 1 & 0 & 0 \\
  \end{tabular}
  &
  \begin{tabular}{l|llll}
  \multirow{6}{*}{$F(\mathbf{x}(0))$}	& 1 & 1 & 0 & 1 \\
						& 0 & 1 & 0 & 0 \\
						& 0 & 1 & 0 & 1 \\
						& 1 & 1 & 0 & 0 \\
						& 1 & 1 & 0 & 1 \\
						& 0 & 1 & 0 & 0 \\
  \end{tabular}
  &
  \begin{tabular}{l|llll}
  \multirow{6}{*}{$\mathbf{x}(1)$}	& 0 & 1 & 0 & 1 \\
						& 0 & 1 & 0 & 0 \\
						& 1 & 1 & 0 & 1 \\
						& 0 & 1 & 0 & 0 \\
						& 1 & 1 & 0 & 1 \\
						& 1 & 1 & 0 & 0 \\
  \end{tabular}
  \\
  \phantom{a}  & \phantom{a}  & \phantom{a} \\
  \begin{tabular}{l|llll}
  \multirow{6}{*}{$F(\mathbf{x}(1))$}	& 1 & 0 & 1 & 0 \\
						& 0 & 0 & 1 & 0 \\
						& 1 & 1 & 1 & 0 \\
						& 0 & 0 & 1 & 0 \\
						& 1 & 1 & 1 & 0 \\
						& 0 & 1 & 1 & 0 \\
  \end{tabular}
  &
  \begin{tabular}{l|llll}
  \multirow{6}{*}{$\mathbf{x}(2)$}	& 1 & 0 & 1 & 0 \\
						& 0 & 0 & 1 & 0 \\
						& 0 & 1 & 1 & 0 \\
						& 1 & 0 & 1 & 0 \\
						& 0 & 1 & 1 & 0 \\
						& 1 & 1 & 1 & 0 \\
  \end{tabular}
  &
  \begin{tabular}{l|llll}
  \multirow{6}{*}{$F(\mathbf{x}(2))$}	& 0 & 1 & 0 & 1 \\
						& 0 & 0 & 0 & 1 \\
						& 0 & 0 & 1 & 1 \\
						& 0 & 1 & 0 & 1 \\
						& 0 & 0 & 1 & 1 \\
						& 0 & 1 & 1 & 1 \\
  \end{tabular}
  \\
  \phantom{a}  & \phantom{a}  & \phantom{a} \\
  \begin{tabular}{l|llll}
  \multirow{6}{*}{$\mathbf{x}(3)$}	& 0 & 1 & 0 & 1 \\
						& 0 & 0 & 0 & 1 \\
						& 0 & 0 & 1 & 1 \\
						& 0 & 1 & 0 & 1 \\
						& 0 & 0 & 1 & 1 \\
						& 0 & 1 & 1 & 1 \\
  \end{tabular}
  &
  \begin{tabular}{l|llll}
  \multirow{6}{*}{$F(\mathbf{x}(3))$}	& 1 & 0 & 1 & 0 \\
						& 1 & 0 & 0 & 0 \\
						& 1 & 0 & 0 & 1 \\
						& 1 & 0 & 1 & 0 \\
						& 1 & 0 & 0 & 1 \\
						& 1 & 0 & 1 & 1 \\
  \end{tabular}
  &
  \begin{tabular}{l|llll}
  \multirow{6}{*}{$\mathbf{x}(4)$}	& 0 & 0 & 0 & 1 \\
						& 1 & 1 & 1 & 0 \\
						& 1 & 0 & 0 & 1 \\
						& 1 & 0 & 1 & 0 \\
						& 0 & 0 & 0 & 1 \\
						& 0 & 0 & 1 & 1 \\
  \end{tabular}
  \\
\end{tabular}  
\end{table*}

\FloatBarrier
\end{example}

\subsection{A Best-Case Upper Bound Related to Boolean Threshold Functions with Both Positive and Negative Coefficients}\label{section:threshold_BNs_pos_neg_coef}

Let $k \geq 3$ be a positive integer, and $k_{1}$, $k_{2}$ be positive integers such that $k_{1} + k_{2} = k$. Consider any Boolean function $\varphi_{k} : \{ 0, 1 \}^{k} \to \{ 0, 1 \}$ such that
\begin{enumerate}
\item all inputs of $\varphi_{k}$ are relevant/essential, which means that for $i = 1, 2, \ldots, k$, there exist $a_{1}, \ldots, a_{i-1}, a_{i+1}, \ldots, a_{k} \in \{ 0, 1 \}$ satisfying 
$\varphi_{k}(a_{1}, \ldots, a_{i-1}, 0, a_{i+1}, \ldots, a_{k}) \neq 
\varphi_{k}(a_{1}, \ldots, a_{i-1}, 1, a_{i+1}, \ldots, a_{k})$; and

\item $\varphi_{k}$ is a Boolean threshold function of the form
\begin{equation}\label{eq:phi_k_threshold}
\varphi_{k}(\mathbf{x}) \coloneqq
\begin{cases}
1	& \text{if $a_{1}x_{1} + \cdots + a_{k}x_{k} \geq b$} \\
0	& \text{otherwise}
\end{cases}
\end{equation}
where exactly $k_{1}$ of $a_{1}$, $a_{2}$, \ldots, $a_{k}$ are positive and exactly $k_{2}$ of $a_{1}$, $a_{2}$, \ldots, $a_{k}$ are negative.
\end{enumerate}

If an $n$-node BN satisfies that each of its Boolean functions is $\varphi_{k}$ and each node has $k$ input nodes and $k$ output nodes, then the BN is said to be an $n$-node $k$-$k$-$\varphi_{k}$-BN.

In this section, we show that for all $k \geq 3$, there exists a Boolean function 
$\varphi_{k} : \{ 0, 1 \}^{k} \to \{ 0, 1 \}$ such that $\varphi_{k}$ satisfies points (1) and (2) above and for all positive integers $m$, there exists a controllable $mk$-node $k$-$k$-$\varphi_{k}$-BN with $k$ control nodes.
We remark that majority functions and MTBI functions given in Definitions \ref{def:k-ary_majority_function} and \ref{def:2k-ary_MTBI_function} satisfy point (1) but not point (2).

Let $k \geq 3$ be a positive integer. Consider the Boolean function $\varphi_{k} : \{ 0, 1 \}^{k} \to \{ 0, 1 \}$ such that for all $\mathbf{x} \in \{ 0, 1 \}^{k}$,
\begin{equation}\label{eq:special_phi_k}
\varphi_{k}(\mathbf{x}) \coloneqq
\begin{cases}
1	& \text{if $(k - 2)x_{1} - x_{2} - \cdots - x_{k} \geq 0$} \\
0	& \text{otherwise}
\end{cases}
\end{equation}
We call $x_{1}$ \textit{the key input of $\varphi_{k}$}.

The following proposition directly follows from the definition of $\varphi_{k}$ (Eq.\@ (\ref{eq:special_phi_k})).

\begin{proposition}\label{prop:phi_k_characterization}
$\varphi_{k}(\mathbf{1}_{k}) = 0$, $\varphi_{k}(\mathbf{0}_{k}) = 1$, and
for all $\mathbf{x} \in \{ 0, 1 \}^{k} \setminus \{ \mathbf{1}_{k}, \mathbf{0}_{k} \}$, 
$\varphi_{k}(\mathbf{x}) = x_{1}$.
\end{proposition}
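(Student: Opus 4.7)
The plan is a direct case analysis based on the definition of $\varphi_{k}$ in Eq.\ (\ref{eq:special_phi_k}). The statement has three parts: the values at the two constant vectors $\mathbf{1}_{k}$ and $\mathbf{0}_{k}$, and the identification of $\varphi_{k}(\mathbf{x})$ with $x_{1}$ on the remaining inputs.

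First, I would compute $\varphi_{k}$ at the two extreme inputs. Plugging $\mathbf{x} = \mathbf{1}_{k}$ into the threshold expression gives $(k-2) \cdot 1 - (k-1) = -1 < 0$, so $\varphi_{k}(\mathbf{1}_{k}) = 0$. Plugging $\mathbf{x} = \mathbf{0}_{k}$ gives $0 \geq 0$, so $\varphi_{k}(\mathbf{0}_{k}) = 1$.

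Next, for $\mathbf{x} \in \{0,1\}^{k} \setminus \{\mathbf{1}_{k}, \mathbf{0}_{k}\}$, I would write $s \coloneqq x_{2} + x_{3} + \cdots + x_{k}$ and split on the value of $x_{1}$. If $x_{1} = 1$, then since $\mathbf{x} \neq \mathbf{1}_{k}$ we have $s \leq k - 2$, so $(k-2)x_{1} - s = (k-2) - s \geq 0$, giving $\varphi_{k}(\mathbf{x}) = 1 = x_{1}$. If $x_{1} = 0$, then since $\mathbf{x} \neq \mathbf{0}_{k}$ we have $s \geq 1$, so $(k-2)x_{1} - s = -s < 0$, giving $\varphi_{k}(\mathbf{x}) = 0 = x_{1}$. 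This covers all remaining cases.

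There is no real obstacle here; the result is essentially a one-line verification once the cases are laid out. The only thing to be careful about is ensuring the bound $s \leq k-2$ in the $x_{1} = 1$ case (which uses that $\mathbf{x}$ has at least one zero among $x_{2}, \ldots, x_{k}$) and $s \geq 1$ in the $x_{1} = 0$ case (which uses that $\mathbf{x}$ has at least one one among $x_{2}, \ldots, x_{k}$). Both conditions follow from excluding $\mathbf{1}_{k}$ and $\mathbf{0}_{k}$ respectively.
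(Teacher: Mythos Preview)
Your proof is correct and matches the paper's approach: the paper simply states that the proposition ``directly follows from the definition of $\varphi_{k}$'' without spelling out any details, and your case analysis is exactly the direct verification one would do to unpack that claim.
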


Now, we show that every input of $\varphi_{k}$ is relevant.

\begin{proposition}\label{prop:every_input_is_relevant}
Every input of $\varphi_{k} : \{ 0, 1 \}^{k} \to \{ 0, 1 \}$ is relevant.
\end{proposition}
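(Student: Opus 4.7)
The plan is to exploit the characterization of $\varphi_k$ given by Proposition \ref{prop:phi_k_characterization}, which tells us that $\varphi_k$ agrees with the first coordinate $x_1$ on every input except the two constant vectors $\mathbf{0}_k$ and $\mathbf{1}_k$, where it takes the opposite value. So for each coordinate $i$, my task reduces to exhibiting a pair of inputs that differ only in position $i$ and that $\varphi_k$ separates.

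For the key input $i = 1$, I would choose $\mathbf{a} = (0, 1, 0, \ldots, 0)$ and $\mathbf{a}' = (1, 1, 0, \ldots, 0)$. Since $k \geq 3$, both vectors lie in $\{0,1\}^k \setminus \{\mathbf{0}_k, \mathbf{1}_k\}$, so Proposition \ref{prop:phi_k_characterization} gives $\varphi_k(\mathbf{a}) = 0 \neq 1 = \varphi_k(\mathbf{a}')$, showing that $x_1$ is relevant.

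For a non-key input $i \in \{2, 3, \ldots, k\}$, I would use the transition into the exceptional vector $\mathbf{1}_k$: take $\mathbf{a}$ to be the all-ones vector with the $i$-th coordinate flipped to $0$, and $\mathbf{a}' = \mathbf{1}_k$. Then $\mathbf{a} \notin \{\mathbf{0}_k, \mathbf{1}_k\}$ (since $k \geq 3$ guarantees $\mathbf{a} \neq \mathbf{0}_k$), so by Proposition \ref{prop:phi_k_characterization} we get $\varphi_k(\mathbf{a}) = a_1 = 1$, whereas $\varphi_k(\mathbf{a}') = \varphi_k(\mathbf{1}_k) = 0$. Hence the $i$-th input is relevant.

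There is really no substantive obstacle here; the whole argument is essentially bookkeeping on top of Proposition \ref{prop:phi_k_characterization}. The only point that requires a moment's care is verifying the witnesses avoid the two exceptional vectors, which is exactly why the hypothesis $k \geq 3$ is invoked (for $k = 2$ the vector $(0,1,0,\ldots,0)$ used in the $i = 1$ case would coincide with some constant-like configurations and the argument would need a different choice).
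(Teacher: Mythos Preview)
Your proof is correct and follows essentially the same approach as the paper: both invoke Proposition~\ref{prop:phi_k_characterization} and exhibit, for each coordinate, an explicit pair of inputs differing only in that coordinate on which $\varphi_k$ disagrees. The only cosmetic difference is the choice of witnesses for the non-key inputs $i \geq 2$: the paper flips the $i$-th coordinate of $\mathbf{0}_k$ (using the exceptional value $\varphi_k(\mathbf{0}_k)=1$), whereas you flip the $i$-th coordinate of $\mathbf{1}_k$ (using the exceptional value $\varphi_k(\mathbf{1}_k)=0$)---these are dual and equally valid.
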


\begin{proof}
Consider the input $x_{1}$ of $\varphi_{k}$. Let $\mathbf{v} \in \{ 0, 1 \}^{k-1}$ be any vector such that $\mathbf{v}$ contains both $0$-entry and $1$-entry. By Proposition \ref{prop:phi_k_characterization}, we have $\varphi_{k}(0, \mathbf{v}) = 0$ and $\varphi_{k}(1, \mathbf{v}) = 1$. Therefore, the input $x_{1}$ is relevant.

Next, we show that for $i = 2, 3, \ldots, k$, the input $x_{i}$ of $\varphi_{k}$ is relevant. For all $j \in \{ 1, 2, \ldots, k\} \setminus \{ i \}$, let $a_{j} \coloneqq 0$. By Proposition \ref{prop:phi_k_characterization}, we have
$\varphi_{k}(a_{1}, \ldots, a_{i-1}, 0, a_{i+1}, \ldots, a_{k}) = \varphi(\mathbf{0}_{k}) = 1$ and
$\varphi_{k}(a_{1}, \ldots, a_{i-1}, 1, a_{i+1}, \ldots, a_{k}) = a_{1} = 0$. Therefore, the input $x_{i}$ is relevant.

The proof is complete.
\end{proof}

Now, we define the function $G_{4} : \{ 0, 1 \}^{k} \to \{ 0, 1 \}^{k}$ such that for all $\mathbf{x} \in \{ 0, 1 \}^{k}$,
\begin{equation}\label{eq:G4_def}
G_{4}(\mathbf{x}) = 
(\varphi_{k}(x_{1}, x_{2}, \ldots, x_{k}), 
\varphi_{k}(x_{2}, x_{3}, \ldots, x_{k}, x_{1}),
\ldots,
\varphi_{k}(x_{k}, x_{1}, \ldots, x_{k-2}, x_{k-1})).
\end{equation}
The following proposition follows directly from Proposition \ref{prop:phi_k_characterization}.

\begin{proposition}\label{prop:G4_characterization}
$G_{4}(\mathbf{1}_{k}) = \mathbf{0}_{k}$, $G_{4}(\mathbf{0}_{k}) = \mathbf{1}_{k}$,
and for all $\mathbf{x} \in \{ 0, 1 \}^{k} \setminus \{ \mathbf{1}_{k}, \mathbf{0}_{k} \}$, $G_{4}(\mathbf{x}) = \mathbf{x}$.
\end{proposition}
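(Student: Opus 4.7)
The plan is a direct coordinate-by-coordinate application of Proposition \ref{prop:phi_k_characterization}, exploiting the crucial fact that a cyclic shift of $\mathbf{x}$ has the same multiset of entries as $\mathbf{x}$ itself. Concretely, for each $i \in \{1,2,\ldots,k\}$ let $\sigma_i(\mathbf{x}) \coloneqq (x_i, x_{i+1}, \ldots, x_k, x_1, \ldots, x_{i-1})$, so that the $i$-th component of $G_4(\mathbf{x})$ equals $\varphi_k(\sigma_i(\mathbf{x}))$ and the first entry of $\sigma_i(\mathbf{x})$ is $x_i$.

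First I will handle the two extreme cases. If $\mathbf{x} = \mathbf{1}_k$, then every $\sigma_i(\mathbf{x}) = \mathbf{1}_k$, so Proposition \ref{prop:phi_k_characterization} gives $\varphi_k(\sigma_i(\mathbf{x})) = 0$ for every $i$, hence $G_4(\mathbf{1}_k) = \mathbf{0}_k$. The case $\mathbf{x} = \mathbf{0}_k$ is symmetric and yields $G_4(\mathbf{0}_k) = \mathbf{1}_k$.

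Next I will handle $\mathbf{x} \in \{0,1\}^k \setminus \{\mathbf{0}_k, \mathbf{1}_k\}$. The key observation is that $\sigma_i(\mathbf{x})$ is a cyclic permutation of $\mathbf{x}$, so it contains the same number of $0$-entries and $1$-entries as $\mathbf{x}$; in particular, $\sigma_i(\mathbf{x}) \notin \{\mathbf{0}_k, \mathbf{1}_k\}$. Therefore Proposition \ref{prop:phi_k_characterization} applies in its third case, giving $\varphi_k(\sigma_i(\mathbf{x})) = (\sigma_i(\mathbf{x}))_1 = x_i$. Collecting these coordinates yields $G_4(\mathbf{x}) = \mathbf{x}$, completing the argument.

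There is no real obstacle here; the only subtlety is noticing that cyclic shifts preserve the multiset of entries, so the ``all-ones'' and ``all-zeros'' cases of $\varphi_k$ never get accidentally triggered by a shift when $\mathbf{x}$ itself is mixed. Once that is stated, the proposition reduces to reading off $\varphi_k(\sigma_i(\mathbf{x})) = x_i$ from Proposition \ref{prop:phi_k_characterization}.
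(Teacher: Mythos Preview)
Your proof is correct and is precisely the argument the paper has in mind: the paper simply asserts that the proposition ``follows directly from Proposition \ref{prop:phi_k_characterization}'', and what you have written is the natural unpacking of that sentence, namely that each cyclic shift $\sigma_i(\mathbf{x})$ has the same multiset of entries as $\mathbf{x}$ and has first coordinate $x_i$, so Proposition \ref{prop:phi_k_characterization} applied to $\sigma_i(\mathbf{x})$ gives the $i$-th component of $G_4(\mathbf{x})$ in each of the three cases.
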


We are now ready to present our controllable $k$-$k$-$\varphi_{k}$-BN with a small control-node set. 
Let $m$ be a positive integer. Our BN consists of $n = mk$ nodes, which we label as 
$x^{1}_{1}$, $x^{1}_{2}$, \ldots, $x^{1}_{k}$, $x^{2}_{1}$, $x^{2}_{2}$, \ldots, $x^{2}_{k}$, \ldots,
$x^{m}_{1}$, $x^{m}_{2}$, \ldots, $x^{m}_{k}$. The dependencies among the nodes are defined below.
\begin{align}
(x^{2}_{1}(t + 1), x^{2}_{2}(t + 1), \ldots, x^{2}_{k}(t + 1)) 	&= G_{4}(x^{1}_{1}(t), x^{1}_{2}(t), \ldots, x^{1}_{k}(t)) \\
(x^{3}_{1}(t + 1), x^{3}_{2}(t + 1), \ldots, x^{3}_{k}(t + 1)) 	&= G_{4}(x^{2}_{1}(t), x^{2}_{2}(t), \ldots, x^{2}_{k}(t)) \\
														& \vdots \nonumber \\
(x^{1}_{1}(t + 1), x^{1}_{2}(t + 1), \ldots, x^{1}_{k}(t + 1)) &= G_{4}(x^{m}_{1}(t), x^{m}_{2}(t), \ldots, x^{m}_{k}(t))
\end{align}

The graphical representation of our BN is given in Figure \ref{fig:general_k_k_phi_k_BN}.

\begin{figure}[h]
\caption{Graphical Representation of the $k$-$k$-$\varphi_{k}$-BN. The blue nodes are the control nodes.}\label{fig:general_k_k_phi_k_BN}
\centering
\includegraphics[width=0.7\textwidth]{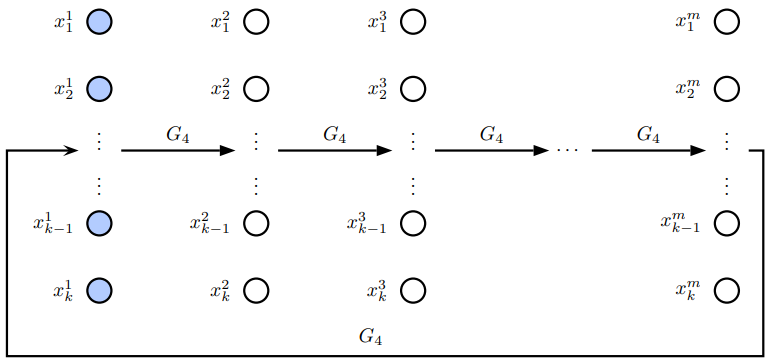}
\end{figure}

\FloatBarrier

The control-node set $U$ is $\{ x^{1}_{1}, x^{1}_{2}, \ldots, x^{1}_{k} \}$. 
Hence, $|U| = k$ and $\frac{|U|}{n} = \frac{1}{m}$.

\begin{theorem}\label{thm:phi_k_BN_bcub}
The constructed $mk$-node $k$-$k$-$\varphi_{k}$-BN with control-node set 
$U \coloneqq \{ x^{1}_{1}, x^{1}_{2}, \ldots, x^{1}_{k} \}$ is controllable.
More specifically, for all 
$\mathbf{a}, \mathbf{b} \in \{ 0, 1 \}^{mk}$, there exists a control scheme which drives the BN from $\mathbf{a}$ (time $0$) to $\mathbf{b}$ (time $m$).
\end{theorem}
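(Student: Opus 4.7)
The plan is to exploit the structure of $G_{4}$: by Proposition \ref{prop:G4_characterization}, $G_{4}$ swaps $\mathbf{0}_{k}$ and $\mathbf{1}_{k}$ and fixes every other element of $\{0,1\}^{k}$, and in particular $G_{4}$ is an involution on $\{0,1\}^{k}$. Combined with the fact that the entire first layer is the control-node set, this should let us drive every layer independently to any prescribed target at time $m$.

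First I would record that $G_{4} \circ G_{4} = \mathrm{id}$ on $\{0,1\}^{k}$, which is immediate from Proposition \ref{prop:G4_characterization} by case-splitting on whether the input equals $\mathbf{0}_{k}$, $\mathbf{1}_{k}$, or something else. Then, by a straightforward induction on $l$ using the update rule $\mathbf{x}^{l+1}(s+1) = G_{4}(\mathbf{x}^{l}(s))$, I would establish the propagation identity
\[
\mathbf{x}^{l}(m) = G_{4}^{l-1}\bigl(\mathbf{x}^{1}(m - l + 1)\bigr) \quad \text{for } l = 1, 2, \ldots, m,
\]
valid regardless of any control signals exerted on Layer 1 between times $m - l + 1$ and $m - 1$, since those signals only affect Layer 1 at intermediate times and not the other layers. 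By the involution property, $G_{4}^{l-1}$ equals $\mathrm{id}$ when $l$ is odd and equals $G_{4}$ when $l$ is even.

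Next, since every node in Layer 1 is a control node, at each time $\tau \in \{1, 2, \ldots, m\}$ the XOR with $\mathbf{u}(\tau - 1)$ allows us to overwrite $\mathbf{x}^{1}(\tau)$ to any value in $\{0,1\}^{k}$ irrespective of the initial state $\mathbf{a}$. Writing $\mathbf{b}^{l} \coloneqq (b^{l}_{1}, \ldots, b^{l}_{k})$ for the sub-state of $\mathbf{b}$ on Layer $l$, the control strategy is therefore to force, at each $\tau$, the Layer 1 state to
\[
\mathbf{x}^{1}(\tau) = \mathbf{b}^{m - \tau + 1} \text{ if $m - \tau + 1$ is odd,} \qquad \mathbf{x}^{1}(\tau) = G_{4}(\mathbf{b}^{m - \tau + 1}) \text{ if $m - \tau + 1$ is even.}
\]
Since $\tau \mapsto m - \tau + 1$ is a bijection on $\{1, \ldots, m\}$, combining this with the propagation identity and the involution property yields $\mathbf{x}^{l}(m) = \mathbf{b}^{l}$ for every $l$, hence $\mathbf{x}(m) = \mathbf{b}$.

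The conceptual content is already packaged into Proposition \ref{prop:G4_characterization}, so the main obstacle I anticipate is mostly bookkeeping: carefully justifying the propagation identity and tracking the index substitution $\tau = m - l + 1$ so that the parity-dependent case distinction is applied consistently. Writing down the explicit control signals $\mathbf{u}(\tau - 1)$ that realise the prescribed $\mathbf{x}^{1}(\tau)$ --- by taking $\mathbf{u}(\tau - 1)$ to be the XOR of the desired Layer 1 state with $G_{4}(\mathbf{x}^{m}(\tau - 1))$ --- is then routine.
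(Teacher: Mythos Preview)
Your proposal is correct and takes essentially the same approach as the paper. The paper's proof consists only of presenting Algorithm~\ref{alg:control_scheme_phi_k}, which encodes exactly the strategy you describe: at step $t$ it overwrites $\mathbf{x}^{1}(t+1)$ depending on $\mathbf{b}^{m-t}$ and the parity of $m-t$, using five explicit cases that are precisely your two cases ``set $\mathbf{x}^{1}(\tau)=\mathbf{b}^{m-\tau+1}$ when $m-\tau+1$ is odd, set $\mathbf{x}^{1}(\tau)=G_{4}(\mathbf{b}^{m-\tau+1})$ when $m-\tau+1$ is even'' with the definition of $G_{4}$ unpacked. Your write-up is arguably cleaner, since you isolate the involution property $G_{4}\circ G_{4}=\mathrm{id}$ and the propagation identity $\mathbf{x}^{l}(m)=G_{4}^{l-1}(\mathbf{x}^{1}(m-l+1))$ explicitly, whereas the paper leaves the correctness of the algorithm to be inferred from Proposition~\ref{prop:G4_characterization} and the surrounding discussion.
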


\begin{proof}
The control strategy is shown in Algorithm \ref{alg:control_scheme_phi_k}.

\begin{algorithm}
\caption{The Control Strategy for a  $k$-$k$-$\varphi_{k}$-BN}
\label{alg:control_scheme_phi_k}
\begin{algorithmic}[1]
\Require An initial state $\mathbf{a} \in \{ 0, 1 \}^{mk}$ and a target state $\mathbf{b} \in \{ 0, 1 \}^{mk}$.

\State Set $\mathbf{x}(0) \gets \mathbf{a}$.

\For{$t = 0, 1, \ldots, m - 1$}
\State Use the update rules of the BN to update the BN from state $\mathbf{x}(t)$. Denote the resulting state by $\mathbf{y}(t) = F(\mathbf{x}(t))$.

\If{$m - t$ is odd and $\mathbf{b}^{m-t} = \mathbf{1}_{k}$}

\State Overwrite the states of the nodes $x^{1}_{1}$, $x^{1}_{2}$, \ldots, $x^{1}_{k}$ in $\mathbf{y}(t)$ to form 
$\mathbf{x}(t+1)$ which satisfies $\mathbf{x}^{1}(t+1) = \mathbf{1}_{k}$.

\ElsIf{$m - t$ is odd and $\mathbf{b}^{m-t} = \mathbf{0}_{k}$}

\State Overwrite the states of the nodes $x^{1}_{1}$, $x^{1}_{2}$, \ldots, $x^{1}_{k}$ in $\mathbf{y}(t)$ to form 
$\mathbf{x}(t+1)$ which satisfies $\mathbf{x}^{1}(t+1) = \mathbf{0}_{k}$.

\ElsIf{$m - t$ is even and $\mathbf{b}^{m-t} = \mathbf{1}_{k}$}

\State Overwrite the states of the nodes $x^{1}_{1}$, $x^{1}_{2}$, \ldots, $x^{1}_{k}$ in $\mathbf{y}(t)$ to form 
$\mathbf{x}(t+1)$ which satisfies $\mathbf{x}^{1}(t+1) = \mathbf{0}_{k}$.

\ElsIf{$m - t$ is even and $\mathbf{b}^{m-t} = \mathbf{0}_{k}$}

\State Overwrite the states of the nodes $x^{1}_{1}$, $x^{1}_{2}$, \ldots, $x^{1}_{k}$ in $\mathbf{y}(t)$ to form 
$\mathbf{x}(t+1)$ which satisfies $\mathbf{x}^{1}(t+1) = \mathbf{1}_{k}$.

\Else

\State Overwrite the states of the nodes $x^{1}_{1}$, $x^{1}_{2}$, \ldots, $x^{1}_{k}$ in $\mathbf{y}(t)$ to form 
$\mathbf{x}(t+1)$ which satisfies $\mathbf{x}^{1}(t+1) = \mathbf{b}^{m-t}$.

\EndIf
\EndFor
\end{algorithmic}
\end{algorithm}

\FloatBarrier
\end{proof}

\begin{example}
Let $k = 5$ and $m = 4$. The corresponding $5$-$5$-$\varphi_{5}$-BN has $mk = 20$ nodes and is shown in Figure \ref{fig:example_5_5_phi_BN}.

\begin{figure}[h]
\caption{Structure of the $20$-node $5$-$5$-$\varphi_{5}$-BN. The blue nodes are the control nodes.}\label{fig:example_5_5_phi_BN}
\centering
\includegraphics[width=0.6\textwidth]{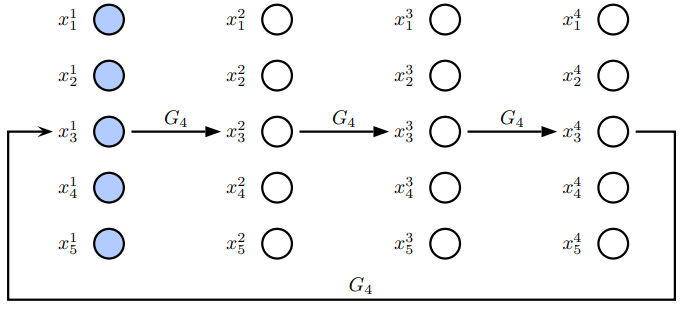}
\end{figure}

\FloatBarrier

Suppose that we want to drive the $5$-$5$-$\varphi_{5}$-BN from the initial state $\mathbf{a} \in \{ 0, 1 \}^{20}$ to the target state $\mathbf{b} \in \{ 0, 1 \}^{20}$ defined below.
\begin{table*}[!h]
\centering
\begin{tabular}{ccc}  
  \begin{tabular}{l|llll}
  \multirow{5}{*}{$\mathbf{a}$}	& 0 & 1 & 1 & 0 \\
						& 0 & 1 & 0 & 0 \\
						& 1 & 0 & 0 & 0 \\
						& 0 & 0 & 1 & 0 \\
						& 0 & 0 & 0 & 1 \\
  \end{tabular}
  & \phantom{hello} &
  \begin{tabular}{l|llll}
  \multirow{5}{*}{$\mathbf{b}$}	& 1 & 1 & 0 & 0 \\
						& 1 & 1 & 0 & 0 \\
						& 0 & 1 & 0 & 0 \\
						& 0 & 1 & 0 & 1 \\
						& 1 & 1 & 0 & 1 \\
  \end{tabular}
  \\
\end{tabular}  
\end{table*}

\FloatBarrier

We can apply our control strategy in Algorithm \ref{alg:control_scheme_phi_k} to drive the BN from 
$\mathbf{a}$ (time $0$) to $\mathbf{b}$ (time $m = 4$), which is illustrated below.
\begin{table*}[!h]
\centering
\begin{tabular}{rrr}  
  \begin{tabular}{l|llll}
  \multirow{5}{*}{$\mathbf{x}(0)$}	& 0 & 1 & 1 & 0 \\
						& 0 & 1 & 0 & 0 \\
						& 1 & 0 & 0 & 0 \\
						& 0 & 0 & 1 & 0 \\
						& 0 & 0 & 0 & 1 \\
  \end{tabular}
  &
  \begin{tabular}{l|llll}
  \multirow{5}{*}{$F(\mathbf{x}(0))$}	& 0 & 0 & 1 & 1 \\
						& 0 & 0 & 1 & 0 \\
						& 0 & 1 & 0 & 0 \\
						& 0 & 0 & 0 & 1 \\
						& 1 & 0 & 0 & 0 \\
  \end{tabular}
  &
  \begin{tabular}{l|llll}
  \multirow{5}{*}{$\mathbf{x}(1)$}	& 0 & 0 & 1 & 1 \\
						& 0 & 0 & 1 & 0 \\
						& 0 & 1 & 0 & 0 \\
						& 1 & 0 & 0 & 1 \\
						& 1 & 0 & 0 & 0 \\
  \end{tabular}
  \\
  \phantom{a}  & \phantom{a}  & \phantom{a} \\
  \begin{tabular}{l|llll}
  \multirow{5}{*}{$F(\mathbf{x}(1))$}	& 1 & 0 & 0 & 1 \\
						& 0 & 0 & 0 & 1 \\
						& 0 & 0 & 1 & 0 \\
						& 1 & 1 & 0 & 0 \\
						& 0 & 1 & 0 & 0 \\
  \end{tabular}
  &
  \begin{tabular}{l|llll}
  \multirow{5}{*}{$\mathbf{x}(2)$}	& 0 & 0 & 0 & 1 \\
						& 0 & 0 & 0 & 1 \\
						& 0 & 0 & 1 & 0 \\
						& 0 & 1 & 0 & 0 \\
						& 0 & 1 & 0 & 0 \\
  \end{tabular}
  &
  \begin{tabular}{l|llll}
  \multirow{5}{*}{$F(\mathbf{x}(2))$}	& 1 & 1 & 0 & 0 \\
						& 1 & 1 & 0 & 0 \\
						& 0 & 1 & 0 & 1 \\
						& 0 & 1 & 1 & 0 \\
						& 0 & 1 & 1 & 0 \\
  \end{tabular}
  \\
  \phantom{a}  & \phantom{a}  & \phantom{a} \\
  \begin{tabular}{l|llll}
  \multirow{5}{*}{$\mathbf{x}(3)$}	& 0 & 1 & 0 & 0 \\
						& 0 & 1 & 0 & 0 \\
						& 0 & 1 & 0 & 1 \\
						& 0 & 1 & 1 & 0 \\
						& 0 & 1 & 1 & 0 \\
  \end{tabular}
  &
  \begin{tabular}{l|llll}
  \multirow{5}{*}{$F(\mathbf{x}(3))$}	& 0 & 1 & 0 & 0 \\
						& 0 & 1 & 0 & 0 \\
						& 1 & 1 & 0 & 0 \\
						& 0 & 1 & 0 & 1 \\
						& 0 & 1 & 0 & 1 \\
  \end{tabular}
  &
  \begin{tabular}{l|llll}
  \multirow{5}{*}{$\mathbf{x}(4)$}	& 1 & 1 & 0 & 0 \\
						& 1 & 1 & 0 & 0 \\
						& 0 & 1 & 0 & 0 \\
						& 0 & 1 & 0 & 1 \\
						& 1 & 1 & 0 & 1 \\
  \end{tabular}
  \\
\end{tabular}  
\end{table*}

\FloatBarrier
\end{example}

\section{Boolean Networks Consisting of XOR Functions}\label{section:XOR_BNs}

In this section, we study the controllability of XOR-BNs whose definition is given below. 

\begin{definition}\label{def:XOR-BNs}
An $n$-node BN is said to be an XOR-BN if for all nodes $x_{i} \in \{ x_{1}, x_{2}, \ldots, x_{n} \}$, there exist distinct nodes $x_{i_{1}}, x_{i_{2}}, \ldots, x_{i_{l}} \in \{ x_{1}, x_{2}, \ldots, x_{n} \}$ such that
\begin{equation}\label{eq:XOR-BN_update_rules}
x_{i}(t + 1) = x_{i_{1}}(t) \oplus x_{i_{2}}(t) \oplus \ldots \oplus x_{i_{l}}(t)
\end{equation}
where $\oplus$ is the XOR logical operator.
\end{definition}

In Section \ref{section:lin_alg_theory}, we use linear algebra to establish some important theorems and algorithms related to the control of XOR-BNs. More specifically, we prove a useful necessary and sufficient condition (Theorem \ref{thm:Wu_spans_Fn_2}) for checking whether a subset of nodes of an XOR-BN can render the BN controllable, and provide algorithms for computing a control-node set of any XOR-BN (Algorithm \ref{alg:construct_control_node_set}) and producing control signals (Algorithm \ref{alg:control_scheme}).
In Section \ref{section:XOR_BNs_bcub}, we provide best-case upper bounds for XOR-BNs with in-degree and out-degree constraints.

\subsection{Linear Algebra Theory of the Control of XOR-BNs}\label{section:lin_alg_theory}

We begin this section by introducing a new perspective to look at XOR-BNs.
Since the set $\{ 0, 1 \}$ can be thought of as the finite field $F_{2}$ of order $2$ and the XOR operator $\oplus$ can be thought of as the addition operation on $F_{2}$, the functional dependencies of an $n$-node XOR-BN can be viewed as a linear map from $F^{n}_{2}$ to $F^{n}_{2}$, or equivalently an $n \times n$ matrix over $F_{2}$.

\begin{example}\label{ex:3_node_XOR_BN}
Consider the $3$-node XOR-BN defined as
\begin{align}
x_{1}(t + 1) &= x_{2}(t) \oplus x_{3}(t) \\
x_{2}(t + 1) &= x_{1}(t) \oplus x_{2}(t) \oplus x_{3}(t) \\
x_{3}(t + 1) &= x_{1}(t)
\end{align}
Then, the matrix $A \in F^{3 \times 3}_{2}$ associated with this XOR-BN is given by
\begin{equation}
A \coloneqq
\left(
\begin{array}{ccc}
0 & 1 & 1 \\
1 & 1 & 1 \\
1 & 0 & 0
\end{array}
\right)
\end{equation}
and the associated linear map $T : F^{3}_{2} \to F^{3}_{2}$ is given by $T(\mathbf{x}) \coloneqq A\mathbf{x}$.

Note that the functional dependencies of the XOR-BN can be succinctly represented as
\begin{equation}
\mathbf{x}(t + 1) = T(\mathbf{x}(t)) = A\mathbf{x}(t).
\end{equation}
\end{example}

Now, we prove an important necessary and sufficient condition for checking whether a subset of nodes of an XOR-BN can render the XOR-BN controllable. This theorem is part of the essential theoretical foundation for our algorithms for computing control-node sets of XOR-BNs and producing control signals.

\begin{theorem}\label{thm:Wu_spans_Fn_2}
Let $\mathcal{B}$ be an $n$-node XOR-BN with control-node set $U$.
Let $A \in F^{n \times n}_{2}$ be the matrix associated with $\mathcal{B}$.
Then, $\mathcal{B}$ is controllable if and only if
\begin{equation}
W_{U} \coloneqq 
\text{Span}\left( \bigcup_{i\, :\, x_{i} \in U} \{ A^{k}\mathbf{e}_{i} : k \in \mathbb{Z}_{\geq 0} \} \right)
= F^{n}_{2}.
\end{equation}
Here, $\mathbf{e}_{1}, \mathbf{e}_{2}, \ldots, \mathbf{e}_{n} \in F^{n}_{2}$ denote the standard unit vectors.
\end{theorem}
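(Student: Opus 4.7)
The plan is to translate the controllability question into a classical reachability question for a linear time-invariant system, and then use the Cayley--Hamilton theorem over $F_{2}$ to identify the reachable subspace with $W_{U}$.

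First I would encode the controlled dynamics algebraically. Let $B \in F_{2}^{n \times |U|}$ be the matrix whose columns are the standard unit vectors $\mathbf{e}_{i}$ for $x_{i} \in U$, and collect the control inputs $u_{i}(t)$ ($x_{i} \in U$) into a vector $\mathbf{u}(t) \in F_{2}^{|U|}$. Since for $x_{i} \notin U$ the update rule is $x_{i}(t+1) = f_{i}(\mathbf{x}(t))$ and for $x_{i} \in U$ it is $x_{i}(t+1) = u_{i}(t) \oplus f_{i}(\mathbf{x}(t))$, and since $\oplus$ is addition in $F_{2}$, the full dynamics becomes
\begin{equation}
\mathbf{x}(t + 1) \;=\; A\,\mathbf{x}(t) \;+\; B\,\mathbf{u}(t) \qquad \text{over } F_{2}.
\end{equation}
Unrolling this recursion from $\mathbf{x}(0) = \mathbf{a}$ yields
\begin{equation}
\mathbf{x}(T) \;=\; A^{T} \mathbf{a} \;+\; \sum_{t = 0}^{T - 1} A^{T - 1 - t} B\, \mathbf{u}(t),
\end{equation}
so the set of states reachable from $\mathbf{a}$ at time $T$ is the affine set $A^{T}\mathbf{a} + W_{U}^{(T)}$, where $W_{U}^{(T)} \coloneqq \mathrm{Span}\bigl( \{ A^{k}\mathbf{e}_{i} : x_{i} \in U,\ 0 \leq k \leq T-1 \} \bigr)$.

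Next I would connect these finite-horizon reachable subspaces to $W_{U}$. Clearly $W_{U}^{(T)} \subseteq W_{U}$ for every $T$, and the chain $W_{U}^{(1)} \subseteq W_{U}^{(2)} \subseteq \cdots$ is an ascending chain of subspaces of the finite-dimensional space $F_{2}^{n}$, so it stabilizes; in fact, by the Cayley--Hamilton theorem applied to $A$ over $F_{2}$, $A^{n}$ is an $F_{2}$-linear combination of $I, A, \ldots, A^{n-1}$, so $W_{U}^{(T)} = W_{U}^{(n)} = W_{U}$ for all $T \geq n$. With this in hand, both implications fall out cleanly. For the ``if'' direction, assuming $W_{U} = F_{2}^{n}$, fix any $\mathbf{a},\mathbf{b} \in F_{2}^{n}$ and pick any $T \geq n$; then $\mathbf{b} + A^{T}\mathbf{a} \in F_{2}^{n} = W_{U}^{(T)}$, so some choice of $\mathbf{u}(0), \ldots, \mathbf{u}(T-1)$ drives $\mathbf{a}$ to $\mathbf{b}$ in $T$ steps. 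For the ``only if'' direction, suppose $\mathcal{B}$ is controllable and take $\mathbf{a} = \mathbf{0}_{n}$; for every $\mathbf{b} \in F_{2}^{n}$ there is some $T$ such that $\mathbf{b} = \mathbf{b} + A^{T}\mathbf{0}_{n} \in W_{U}^{(T)} \subseteq W_{U}$, hence $W_{U} = F_{2}^{n}$.

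The main substantive step is the second paragraph: recognizing that the reachable set from $\mathbf{a}$ at time $T$ is the translate of a subspace that agrees with $W_{U}$ for $T$ large. The only mildly delicate point is confirming that additive inverses disappear because of characteristic $2$ (so that $\mathbf{b} - A^{T}\mathbf{a}$ and $\mathbf{b} + A^{T}\mathbf{a}$ are the same), and that Cayley--Hamilton applies verbatim over $F_{2}$; beyond that the argument is just the standard linear-systems reachability calculation transplanted to the finite field $F_{2}$.
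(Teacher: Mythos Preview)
Your proposal is correct and follows essentially the same linear-systems reachability argument as the paper: both unroll the recursion, handle the ``only if'' direction by starting at $\mathbf{0}_n$, and handle the ``if'' direction by expressing $\mathbf{b}-A^{T}\mathbf{a}$ in terms of the spanning set. The only cosmetic difference is that you invoke Cayley--Hamilton to bound the required horizon by $n$, whereas the paper instead selects a basis $A^{k_1}\mathbf{e}_{\alpha_1},\ldots,A^{k_n}\mathbf{e}_{\alpha_n}$ of $F_2^n$ from the spanning set and uses $k_n+1$ as the horizon; this makes the paper's argument slightly more constructive (it feeds directly into their control-signal algorithm) but is mathematically equivalent to yours.
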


\begin{proof}
Let $V \coloneqq \text{Span}(\{ \mathbf{e}_{i} : x_{i} \in U \})$.

First, we prove the ``only if'' direction. Assume that $\mathcal{B}$ is controllable. 
Then, for all $\mathbf{a} \in F^{n}_{2}$, there exists a control scheme 
$\mathbf{u}(0), \mathbf{u}(1), \ldots, \mathbf{u}(t) \in V$ which drives the BN from $\mathbf{0}_{n}$ (time $0$) to
$\mathbf{a}$ (time $t + 1$). Note that
\begin{align}
\mathbf{a} 
&= A( \ldots ( A ( A \mathbf{0}_{n} + \mathbf{u}(0) ) + \mathbf{u}(1) ) \ldots ) + \mathbf{u}(t) \\
&= A^{t+1}\mathbf{0}_{n} + A^{t}\mathbf{u}(0) + A^{t-1}\mathbf{u}(1) + \cdots + A^{0}\mathbf{u}(t) \\
&= A^{t}\mathbf{u}(0) + A^{t-1}\mathbf{u}(1) + \cdots + A^{0}\mathbf{u}(t).
\end{align}
Because for $s = 0, 1, \ldots, t$,
\begin{equation}
A^{s}\mathbf{u}(t - s) \in \{ A^{s}\mathbf{v} : \mathbf{v} \in V \} \subseteq W_{U},
\end{equation}
we have
\begin{equation}
\mathbf{a} = A^{t}\mathbf{u}(0) + A^{t-1}\mathbf{u}(1) + \cdots + A^{0}\mathbf{u}(t) \in W_{U}.
\end{equation}
Therefore, we have $F^{n}_{2} \subseteq W_{U}$. On the other hand, $W_{U} \subseteq F^{n}_{2}$ clearly holds.
Hence, $W_{U} = F^{n}_{2}$. This proves the ``only if'' direction of the theorem.

Second, we prove the ``if'' direction of the theorem. Since $W_{U} = F^{n}_{2}$, we can find 
$A^{k_{1}}\mathbf{e}_{\alpha_{1}}$, $A^{k_{2}}\mathbf{e}_{\alpha_{2}}$, \ldots,
$A^{k_{n}}\mathbf{e}_{\alpha_{n}}$ such that $0 \leq k_{1} \leq k_{2} \leq \cdots \leq k_{n}$,
$x_{\alpha_{1}}, x_{\alpha_{2}}, \ldots, x_{\alpha_{n}} \in U$ and 
$A^{k_{1}}\mathbf{e}_{\alpha_{1}}$, $A^{k_{2}}\mathbf{e}_{\alpha_{2}}$, \ldots,
$A^{k_{n}}\mathbf{e}_{\alpha_{n}}$ form a basis of $F^{n}_{2}$.
Fix arbitrary $\mathbf{a}, \mathbf{b} \in F^{n}_{2}$. We will construct a control scheme which drives the BN from 
$\mathbf{a}$ (time $0$) to $\mathbf{b}$ (time $k_{n} + 1$). Note that there exist
$A^{l_{1}}\mathbf{e}_{\beta_{1}}, A^{l_{2}}\mathbf{e}_{\beta_{2}}, \ldots, A^{l_{m}}\mathbf{e}_{\beta_{m}} \in
\{ A^{k_{1}}\mathbf{e}_{\alpha_{1}}, A^{k_{2}}\mathbf{e}_{\alpha_{2}}, \ldots, A^{k_{n}}\mathbf{e}_{\alpha_{n}} \}$
such that $0 \leq l_{1} \leq l_{2} \leq \ldots \leq l_{m} \leq k_{n}$ and
\begin{align}
&\mathbf{b} - A^{k_{n} + 1}\mathbf{a} = 
A^{l_{1}}\mathbf{e}_{\beta_{1}} + A^{l_{2}}\mathbf{e}_{\beta_{2}} + \cdots + A^{l_{m}}\mathbf{e}_{\beta_{m}} \\
\Rightarrow\ 
&\mathbf{b} = A^{k_{n} + 1}\mathbf{a} + \left( A^{l_{1}}\mathbf{e}_{\beta_{1}} 
+ A^{l_{2}}\mathbf{e}_{\beta_{2}} + \cdots + A^{l_{m}}\mathbf{e}_{\beta_{m}} \right).
\end{align}
By combining terms in $A^{l_{1}}\mathbf{e}_{\beta_{1}} + A^{l_{2}}\mathbf{e}_{\beta_{2}} + \cdots 
+ A^{l_{m}}\mathbf{e}_{\beta_{m}}$ with the same coefficient $A^{l}$, we write
\begin{equation}
\mathbf{b} = A^{k_{n} + 1}\mathbf{a} + \left( A^{s_{1}}\mathbf{v}_{1} + A^{s_{2}}\mathbf{v}_{2} + \cdots 
+ A^{s_{d}}\mathbf{v}_{d} \right)
\end{equation}
where $0 \leq s_{1} < s_{2} < \cdots < s_{d} \leq k_{n}$ and 
$\mathbf{v}_{1}, \mathbf{v}_{2}, \ldots, \mathbf{v}_{d} \in V$. Then, we can see that the control scheme 
$\mathbf{u}_{0}$, $\mathbf{u}(1)$, \ldots, $\mathbf{u}(k_{n})$ defined as 
$\mathbf{u}(k_{n} - s_{1}) \coloneqq \mathbf{v}_{1}$, $\mathbf{u}(k_{n} - s_{2}) \coloneqq \mathbf{v}_{2}$, \ldots,
$\mathbf{u}(k_{n} - s_{d}) \coloneqq \mathbf{v}_{d}$ and $\mathbf{u}(t) \coloneqq \mathbf{0}_{n}$ for all
$t \in \{ 0, 1, \ldots, k_{n} \} \setminus \{ k_{n} - s_{1}, k_{n} - s_{2}, \ldots, k_{n} - s_{d} \}$
can drive the BN from state $\mathbf{a}$ (time $0$) to state $\mathbf{b}$ (time $k_{n} + 1$).
This proves the ``if'' part of the theorem.
\end{proof}

Before we provide our algorithms related to the control of XOR-BNs, we need to present three more lemmas upon which these algorithms are based.

\begin{lemma}\label{lemma:M1}
Let $\mathbf{v}$ be a non-zero vector in $F^{n}_{2}$ and fix arbitrary $A \in F^{n \times n}_{2}$.
There exists a non-negative integer $k$ such that $(\mathbf{v}, A\mathbf{v}, \ldots, A^{k}\mathbf{v})$ forms a basis of $\text{Span}(\left\{ A^{l}\mathbf{v} : l \in \mathbb{Z}_{\geq 0} \right\})$.
\end{lemma}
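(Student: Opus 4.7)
The plan is to identify $k$ as the smallest non-negative integer for which $A^{k+1}\mathbf{v}$ becomes linearly dependent on its predecessors, and then to show that this single dependence propagates forward to cover every higher power of $A$ applied to $\mathbf{v}$.

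First, I would observe that $\mathbf{v} \neq \mathbf{0}$, so the set $\{ \mathbf{v} \}$ is linearly independent. Starting from $S_{0} \coloneqq (\mathbf{v})$, I would iteratively append $A\mathbf{v}$, $A^{2}\mathbf{v}$, etc., and stop at the first index $k \geq 0$ for which $A^{k+1}\mathbf{v} \in \mathrm{Span}(\mathbf{v}, A\mathbf{v}, \ldots, A^{k}\mathbf{v})$. Such a $k$ must exist, since $F_{2}^{n}$ has dimension $n$, so the sequence cannot keep producing linearly independent vectors past the $n$-th step; formally, $k \leq n - 1$. By construction, $(\mathbf{v}, A\mathbf{v}, \ldots, A^{k}\mathbf{v})$ is linearly independent.

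Next, I would show by induction on $j$ that for every $j \geq k + 1$, the vector $A^{j}\mathbf{v}$ lies in $W \coloneqq \mathrm{Span}(\mathbf{v}, A\mathbf{v}, \ldots, A^{k}\mathbf{v})$. The base case $j = k + 1$ is the defining property of $k$. For the inductive step, if $A^{j}\mathbf{v} = c_{0}\mathbf{v} + c_{1}A\mathbf{v} + \cdots + c_{k}A^{k}\mathbf{v}$ for some $c_{i} \in F_{2}$, then applying $A$ to both sides gives
\begin{equation}
A^{j+1}\mathbf{v} = c_{0}A\mathbf{v} + c_{1}A^{2}\mathbf{v} + \cdots + c_{k}A^{k+1}\mathbf{v},
\end{equation}
and each summand on the right is in $W$ (the last one by the base case, the others trivially), so $A^{j+1}\mathbf{v} \in W$ as well.

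Combining these facts, $\{ A^{l}\mathbf{v} : l \in \mathbb{Z}_{\geq 0} \} \subseteq W$, so its span is contained in $W$; the reverse inclusion is immediate since every basis element of $W$ is of the form $A^{l}\mathbf{v}$. Thus $(\mathbf{v}, A\mathbf{v}, \ldots, A^{k}\mathbf{v})$ is a linearly independent spanning set, i.e.\@ a basis, of $\mathrm{Span}(\{ A^{l}\mathbf{v} : l \in \mathbb{Z}_{\geq 0} \})$. I do not foresee any genuine obstacle here: the only point requiring a little care is the forward-propagation induction, and even that is completely mechanical once the correct minimal $k$ has been isolated.
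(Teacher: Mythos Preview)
Your proposal is correct and follows essentially the same argument as the paper's proof: both select the maximal $k$ for which $\mathbf{v}, A\mathbf{v}, \ldots, A^{k}\mathbf{v}$ are independent (equivalently, the minimal $k$ with $A^{k+1}\mathbf{v}$ dependent on its predecessors), and then propagate the single dependence forward by induction to show every higher power lies in $W$. The only cosmetic difference is that the paper phrases the inductive step via the containment $\mathrm{Span}(A\mathbf{v},\ldots,A^{k+1}\mathbf{v}) \subseteq \mathrm{Span}(\mathbf{v},\ldots,A^{k}\mathbf{v})$ rather than writing out an explicit linear combination, but the content is identical.
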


\begin{proof}
Let $k$ be the greatest non-negative integer such that $\mathbf{v}$, $A\mathbf{v}$, \ldots, $A^{k}\mathbf{v}$ are linearly independent. To prove that $(\mathbf{v}, A\mathbf{v}, \ldots, A^{k}\mathbf{v})$ forms a basis of 
$\text{Span}(\left\{ A^{l}\mathbf{v} : l \in \mathbb{Z}_{\geq 0} \right\})$, it suffices to show that for all integers
$l \geq k + 1$, $A^{l}\mathbf{v} \in \text{Span}(\{ \mathbf{v}, A\mathbf{v}, \ldots, A^{k}\mathbf{v} \})$. Let's call this statement $P(l)$. We will provide an inductive proof.

Consider the base case $P(k+1)$. Since $\mathbf{v}$, $A\mathbf{v}$, \ldots, $A^{k}\mathbf{v}$ are linearly independent and $\mathbf{v}$, $A\mathbf{v}$, \ldots, $A^{k}\mathbf{v}$, $A^{k+1}\mathbf{v}$ are linearly dependent, we must have 
$A^{k+1}\mathbf{v} \in \text{Span}(\{ \mathbf{v}, A\mathbf{v}, \ldots, A^{k}\mathbf{v} \})$. Therefore, $P(k+1)$ is true.

We now turn to the inductive step. Assume that $l \geq k + 1$ is an integer such that $P(l)$ is true, which means that $A^{l}\mathbf{v} \in \text{Span}(\{ \mathbf{v}, A\mathbf{v}, \ldots, A^{k}\mathbf{v} \})$. Then, 
$A^{l+1}\mathbf{v} \in \text{Span}(\{ A\mathbf{v}, A^{2}\mathbf{v}, \ldots, A^{k+1}\mathbf{v} \})$. 
By $P(k+1)$, $A^{k+1}\mathbf{v} \in \text{Span}(\{ \mathbf{v}, A\mathbf{v}, \ldots, A^{k}\mathbf{v} \})$. Hence,
\begin{equation}
A^{l+1}\mathbf{v} 
\in \text{Span}(\{ A\mathbf{v}, A^{2}\mathbf{v}, \ldots, A^{k+1}\mathbf{v} \})
\subseteq \text{Span}(\{ \mathbf{v}, A\mathbf{v}, \ldots, A^{k}\mathbf{v} \}).
\end{equation}
Therefore, $P(l + 1)$ is true, and the inductive step is complete.

By mathematical induction, we conclude that for all integers $l \geq k + 1$, $P(l)$ is true.
\end{proof}

\begin{lemma}\label{lemma:M2a}
Let $d \geq 1$ be an integer and $\mathbf{v}_{1}$, $\mathbf{v}_{2}$, \ldots, $\mathbf{v}_{d+1}$ be non-zero vectors in $F^{n}_{2}$. Fix arbitrary matrix $A \in F^{n \times n}_{2}$. If
\begin{equation}
\mathbf{v}_{d+1} \in \sum^{d}_{j = 1} \text{Span}(\{ A^{l}\mathbf{v}_{j} : l \in \mathbb{Z}_{\geq 0} \}) \eqqcolon W_{1},
\end{equation}
then for all positive integers $m$, $A^{m}\mathbf{v}_{d+1} \in W_{1}$.
\end{lemma}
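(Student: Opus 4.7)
The plan is to exploit the fact that $W_{1}$ is by construction invariant under left-multiplication by $A$, so any vector already in $W_{1}$ remains in $W_{1}$ after repeated application of $A$. I would first establish the invariance claim $A W_{1} \subseteq W_{1}$, and then derive the conclusion by a trivial induction on $m$.

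For the invariance step, I would argue as follows. By definition, $W_{1}$ is the span of the set of generators $\mathcal{G} \coloneqq \bigcup_{j = 1}^{d} \{ A^{l}\mathbf{v}_{j} : l \in \mathbb{Z}_{\geq 0} \}$. To show $A W_{1} \subseteq W_{1}$, it suffices to check that each generator is mapped into $W_{1}$, which is immediate: for any $j \in \{1, \ldots, d\}$ and any $l \geq 0$, we have
\begin{equation}
A \cdot (A^{l}\mathbf{v}_{j}) = A^{l+1}\mathbf{v}_{j} \in \mathcal{G} \subseteq W_{1}.
\end{equation}
Since $A$ is linear and $W_{1}$ is a subspace closed under addition and scalar multiplication, this pointwise check on generators extends to the full subspace, giving $A W_{1} \subseteq W_{1}$.

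Now I would prove by induction on $m \geq 1$ that $A^{m}\mathbf{v}_{d+1} \in W_{1}$. The base case $m = 1$ follows immediately from the invariance: since $\mathbf{v}_{d+1} \in W_{1}$ by hypothesis, $A \mathbf{v}_{d+1} \in A W_{1} \subseteq W_{1}$. For the inductive step, if $A^{m}\mathbf{v}_{d+1} \in W_{1}$, then applying $A$ once more and using invariance again yields $A^{m+1}\mathbf{v}_{d+1} \in A W_{1} \subseteq W_{1}$, completing the induction.

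I do not expect any serious obstacle here: the lemma is essentially a restatement of the fact that $A$-cyclic subspaces (and sums of them) are $A$-invariant. The only care needed is to phrase the invariance cleanly via the generators rather than trying to track explicit linear combinations of the $\mathbf{v}_{j}$'s, which would obscure the argument without adding content.
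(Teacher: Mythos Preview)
Your proof is correct and uses essentially the same idea as the paper: both rely on the observation that applying $A$ to a generator $A^{l}\mathbf{v}_{j}$ yields another generator $A^{l+1}\mathbf{v}_{j}$, so $W_{1}$ is $A$-invariant. The paper's version is even more compressed---it skips the explicit invariance statement and the induction, simply noting in one line that $A^{m}\mathbf{v}_{d+1} \in \sum_{j=1}^{d}\text{Span}(\{A^{l}\mathbf{v}_{j} : l \geq m\}) \subseteq W_{1}$---but the content is the same.
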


\begin{proof}
Suppose that $\mathbf{v}_{d+1} \in W_{1}$. Let $m$ be a positive integer. We have
\begin{equation}
A^{m}\mathbf{v}_{d+1} \in \sum^{d}_{j = 1} \text{Span}(\{ A^{l}\mathbf{v}_{j} : l \in \mathbb{Z}_{\geq m} \})
\subseteq W_{1}.
\end{equation}
\end{proof}

\begin{lemma}\label{lemma:M2b}
Let $d \geq 1$ be an integer and $\mathbf{v}_{1}$, $\mathbf{v}_{2}$, \ldots, $\mathbf{v}_{d+1}$ be non-zero vectors in $F^{n}_{2}$. Fix arbitrary matrix $A \in F^{n \times n}_{2}$. If $k$ is a positive integer such that
\begin{equation}\label{eq:W2}
A^{k}\mathbf{v}_{d+1} \in 
\text{Span}(\{ \mathbf{v}_{d+1}, A\mathbf{v}_{d+1}, \ldots, A^{k-1}\mathbf{v}_{d+1} \}) +
\sum^{d}_{j = 1} \text{Span}(\{ A^{l}\mathbf{v}_{j} : l \in \mathbb{Z}_{\geq 0} \}) \eqqcolon W_{2},
\end{equation}
then for all integers $m > k$, $A^{m}\mathbf{v}_{d+1} \in W_{2}$.
\end{lemma}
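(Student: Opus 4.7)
The plan is to reduce the statement to a single cleaner fact: the subspace $W_2$ is invariant under left multiplication by $A$. Once this is known, the conclusion follows by a one-line induction on $m \geq k$, using the hypothesis $A^k \mathbf{v}_{d+1} \in W_2$ as the base case and the identity $A^{m+1}\mathbf{v}_{d+1} = A(A^m \mathbf{v}_{d+1})$ to advance from $m$ to $m + 1$.

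To prove $A$-invariance, I would take a generic element $w \in W_2$, write it as
$$w = \sum_{i=0}^{k-1} c_i A^i \mathbf{v}_{d+1} + \sum_{j=1}^{d} w_j, \qquad w_j \in \text{Span}(\{A^l \mathbf{v}_j : l \in \mathbb{Z}_{\geq 0}\}),$$
and analyze $Aw$ summand by summand. The terms $A w_j$ cause no trouble: multiplying by $A$ merely shifts indices, so $A w_j \in \text{Span}(\{A^l \mathbf{v}_j : l \in \mathbb{Z}_{\geq 1}\}) \subseteq W_2$, exactly the reasoning already appearing in Lemma \ref{lemma:M2a}. The remaining piece $A \sum_{i=0}^{k-1} c_i A^i \mathbf{v}_{d+1} = \sum_{i=0}^{k-1} c_i A^{i+1}\mathbf{v}_{d+1}$ splits into the vectors $A^1\mathbf{v}_{d+1}, \ldots, A^{k-1}\mathbf{v}_{d+1}$, which already lie in $\text{Span}(\{\mathbf{v}_{d+1}, A\mathbf{v}_{d+1}, \ldots, A^{k-1}\mathbf{v}_{d+1}\}) \subseteq W_2$, together with the single potentially problematic term $c_{k-1}A^k \mathbf{v}_{d+1}$; but this is exactly the vector the hypothesis of the lemma places into $W_2$.

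The only obstacle worth flagging is conceptual rather than computational: the first summand in the definition of $W_2$ is not $A$-stable by itself (applying $A$ to $A^{k-1}\mathbf{v}_{d+1}$ leaks out of it), and the role of the lemma's assumption is precisely to patch this single leak by forcing $A^k \mathbf{v}_{d+1} \in W_2$. Once this is seen, $A$-invariance of $W_2$ is immediate, and the induction finishes the proof for every $m > k$.
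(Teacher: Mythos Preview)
Your proof is correct and is essentially the same argument as the paper's. The paper phrases the key step as showing that the auxiliary space $W_3 \coloneqq \text{Span}(\{A\mathbf{v}_{d+1},\ldots,A^{k}\mathbf{v}_{d+1}\}) + \sum_{j=1}^{d}\text{Span}(\{A^{l}\mathbf{v}_j : l \geq 1\})$ is contained in $W_2$, which is exactly your statement that $W_2$ is $A$-invariant (since $W_3 = A\cdot W_2$); your explicit naming of $A$-invariance is a slightly cleaner packaging of the same idea, but the logic and the induction are identical.
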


\begin{proof}
Assume that $k$ is a positive integer such that $A^{k}\mathbf{v}_{d+1} \in W_{2}$. For all integers $m > k$, let $P(m)$ be the statement that $A^{m}\mathbf{v}_{d+1} \in W_{2}$. We are going to prove by induction that $P(m)$ holds for all integers $m > k$.

We begin by proving the base case. By multiplying both sides of Eq.\@ (\ref{eq:W2}) by $A$, we get
\begin{equation}
A^{k+1}\mathbf{v}_{d+1} \in 
\text{Span}(\{ A\mathbf{v}_{d+1}, A^{2}\mathbf{v}_{d+1}, \ldots, A^{k}\mathbf{v}_{d+1} \}) +
\sum^{d}_{j = 1} \text{Span}(\{ A^{l}\mathbf{v}_{j} : l \in \mathbb{Z}_{\geq 1} \}) \eqqcolon W_{3}.
\end{equation}
Note that $W_{3} \subseteq W_{2}$ because $A^{k}\mathbf{v}_{d+1} \in W_{2}$. This proves $P(k+1)$.

To prove the inductive step, we assume that $P(m)$ holds for some integer $m > k$. 
Hence, $A^{m}\mathbf{v}_{d+1} \in W_{2}$, which implies that
$A^{m+1}\mathbf{v}_{d+1} \in W_{3}$.
Since $W_{3} \subseteq W_{2}$, we deduce that $A^{m+1}\mathbf{v}_{d+1} \in W_{2}$.
Hence, $P(m+1)$ is true, and the inductive step is complete.

By mathematical induction, $P(m)$ is true for all integers $m > k$.
\end{proof}

Based on Theorem \ref{thm:Wu_spans_Fn_2} and Lemmas \ref{lemma:M1}, \ref{lemma:M2a}, \ref{lemma:M2b}, we propose an algorithm for computing a control-node set of any XOR-BN 
(Algorithm \ref{alg:construct_control_node_set}).

\begin{algorithm}
\caption{Construct Control-Node Sets of XOR-BNs}
\label{alg:construct_control_node_set}
\begin{algorithmic}[1]
\Require An $n$-node XOR-BN with associated matrix $A \in F^{n \times n}_{2}$.

\Ensure A control-node set $U \subseteq \{ x_{1}, x_{2}, \ldots, x_{n} \}$ which renders the BN controllable.

\Statex

\State Set $U \gets \varnothing$, $V \gets \varnothing$, $i \gets 1$.

\While{$|V| < n$}

\If{$\mathbf{e}_{i} \notin \text{Span}(V)$}

\State Set $U \gets U \cup \{ x_{i} \}$, $V \gets V \cup \{ \mathbf{e}_{i} \}$, $k \gets 1$.

\While{$A^{k}\mathbf{e}_{i} \notin \text{Span}(V)$}

\State Set $V \gets V \cup \{ A^{k}\mathbf{e}_{i} \}$.

\State Increment $k$ by $1$.

\EndWhile 
\EndIf

\State Increment $i$ by $1$.
\EndWhile 

\State \Output $U$
\end{algorithmic}
\end{algorithm}

\FloatBarrier

The major computational cost of Algorithm \ref{alg:construct_control_node_set} comes from testing whether a vector in $F^{n}_{2}$ lies in the span of $V$ (lines 3 and 5), which requires $O(n^{3})$ time per execution. Moreover, note that after lines 2--8 are completely executed, $V$ will become a basis of the subspace spanned by $\mathbf{e}_{1}$, $\mathbf{e}_{2}$, \ldots, $\mathbf{e}_{n}$, $A\mathbf{e}_{1}$, $A\mathbf{e}_{2}$, \ldots, $A\mathbf{e}_{n}$, $A^{2}\mathbf{e}_{1}$, $A^{2}\mathbf{e}_{2}$, \ldots, $A^{2}\mathbf{e}_{n}$, \ldots (i.e., $V$ is a basis of $F^{n}_{2}$ and hence $|V| = n$). Therefore, we can see that lines 3 and 5 of the algorithm will be executed for $O(n)$ number of times. We conclude that the worst-case time complexity of Algorithm \ref{alg:construct_control_node_set} is $O(n^{4})$.

Next, we present an algorithm for constructing control schemes for any XOR-BNs (Algorithm \ref{alg:control_scheme}), which relies on another intermediate algorithm (Algorithm \ref{alg:intermediate}). 
Algorithms \ref{alg:intermediate} and \ref{alg:control_scheme} are also based on Theorem \ref{thm:Wu_spans_Fn_2} and Lemmas \ref{lemma:M1}, \ref{lemma:M2a}, \ref{lemma:M2b}.

\begin{algorithm}
\caption{An Intermediate Algorithm Needed for the Construction of Control Signals}
\label{alg:intermediate}
\begin{algorithmic}[1]
\Require An $n$-node controllable XOR-BN with control-node set $U = \{ x_{j_{1}}, x_{j_{2}}, \ldots, x_{j_{l}} \}$ and
associated matrix $A \in F^{n \times n}_{2}$.

\Ensure A list $[(i_{1}, k_{1}), (i_{2}, k_{2}), \ldots, (i_{n}, k_{n})]$ such that 
$i_{1}, i_{2}, \ldots, i_{n} \in \{ j_{1}, j_{2}, \ldots, j_{l} \}$, $k_{1}, k_{2}, \ldots, k_{n} \in \mathbb{Z}_{\geq 0}$ and
$(A^{k_{1}}\mathbf{e}_{i_{1}}, A^{k_{2}}\mathbf{e}_{i_{2}}, \ldots, A^{k_{n}}\mathbf{e}_{i_{n}})$ forms a basis of 
$F^{n}_{2}$.

\Statex

\State Set $L$ to an empty list. Set $V \gets \varnothing$, $d \gets 1$.

\While{the length of $L$ is less than $n$}

\State Set $k \gets 0$.

\While{$A^{k}\mathbf{e}_{j_{d}} \notin \text{Span}(V)$}

\State Set $V \gets V \cup \{ A^{k}\mathbf{e}_{j_{d}} \}$. Append $(j_{d}, k)$ to $L$.

\State Increment $k$ by $1$.
\EndWhile 

\State Increment $d$ by $1$.
\EndWhile 

\State \Output $L$
\end{algorithmic}
\end{algorithm}

\FloatBarrier

The worst-case time complexity of Algorithm \ref{alg:intermediate} is $O(n^{4})$, and the major computational cost of the algorithm comes from testing whether a vector in $F^{n}_{2}$ lies in the span of $V$ (line 4).

\begin{algorithm}
\caption{Construct a Control Scheme for any XOR-BN}
\label{alg:control_scheme}
\begin{algorithmic}[1]
\Require
An $n$-node controllable XOR-BN with control-node set $U = \{ x_{j_{1}}, x_{j_{2}}, \ldots, x_{j_{l}} \}$ and associated matrix $A \in F^{n \times n}_{2}$; \\
an initial state $\mathbf{a} \in F^{n}_{2}$ and a target state $\mathbf{b} \in F^{n}_{2}$; \\
$i_{1}, i_{2}, \ldots, i_{n} \in \{ j_{1}, j_{2}, \ldots, j_{l} \}$ and $k_{1}, k_{2}, \ldots, k_{n} \in \mathbb{Z}_{\geq 0}$ (which can be obtained using Algorithm \ref{alg:intermediate}) such that 
$(A^{k_{1}}\mathbf{e}_{i_{1}}, A^{k_{2}}\mathbf{e}_{i_{2}}, \ldots, A^{k_{n}}\mathbf{e}_{i_{n}})$ forms a basis of 
$F^{n}_{2}$.

\Ensure A control scheme which drives the BN from state $\mathbf{a}$ to state $\mathbf{b}$.

\Statex

\State Compute $k^{*} \coloneqq \max(k_{1}, k_{2}, \ldots, k_{n})$.

\State Find $c_{1}, c_{2}, \ldots, c_{n} \in F_{2}$ such that 
$\mathbf{b} - A^{k^{*} + 1}\mathbf{a} = 
c_{1}A^{k_{1}}\mathbf{e}_{i_{1}} + c_{2}A^{k_{2}}\mathbf{e}_{i_{2}} + \cdots + c_{n}A^{k_{n}}\mathbf{e}_{i_{n}}$, which is a system of $n$ linear equations in $n$ variables.

\State For all integers $t \in [0, k^{*}]$ and $j \in \{ j_{1}, j_{2}, \ldots, j_{l} \}$, set the control signal $u_{j}(t) \gets 0$.

\For{$s = 1, 2, \ldots, n$}
\If{$c_{s} = 1$}
\State Set $u_{i_{s}}(k^{*} - k_{s}) \gets 1$.
\EndIf
\EndFor

\State \Output the control signals $\mathbf{u}(0)$, $\mathbf{u}(1)$, \ldots, $\mathbf{u}(k^{*})$
\end{algorithmic}
\end{algorithm}

\FloatBarrier

In Algorithm \ref{alg:control_scheme}, computing $A^{k^{*} + 1}\mathbf{a}$ (line 2) requires $O((\max(k_{1}, k_{2}, \ldots, k_{n}) + 1)n^{2})$ time.
Computing $A^{k_{1}}\mathbf{e}_{i_{1}}$, $A^{k_{2}}\mathbf{e}_{i_{2}}$, \ldots, $A^{k_{n}}\mathbf{e}_{i_{n}}$ (line 2) requires $O(\max(k_{1}, k_{2}, \ldots, k_{n})n^{3})$ time. 
Solving the system of linear equations $\mathbf{b} - A^{k^{*} + 1}\mathbf{a} = 
c_{1}A^{k_{1}}\mathbf{e}_{i_{1}} + c_{2}A^{k_{2}}\mathbf{e}_{i_{2}} + \cdots + c_{n}A^{k_{n}}\mathbf{e}_{i_{n}}$ (line 2) requires $O(n^{3})$ time. Since the major computational cost of Algorithm \ref{alg:control_scheme} comes from line 2, we deduce that the worst-case time complexity of the algorithm is 
$O((\max(k_{1}, k_{2}, \ldots, k_{n}) + 1)n^{3})$.

Note that if $i_{1}, i_{2}, \ldots, i_{n}, k_{1}, k_{2}, \ldots, k_{n}$ are obtained using Algorithm \ref{alg:intermediate}, then $k_{1}, k_{2}, \ldots, k_{n} < n$. To see this, we assume for a contradiction that $k_{m} \geq n$ for some $m \in \{ 1, 2, \ldots, n \}$. From Algorithm \ref{alg:intermediate}, we can see that if $A^{k_{m}}\mathbf{e}_{i_{m}}$ is inside the set $V$ after Algorithm \ref{alg:intermediate} is fully executed, then $\mathbf{e}_{i_{m}}$, $A\mathbf{e}_{i_{m}}$, \ldots, $A^{k_{m} - 1}\mathbf{e}_{i_{m}}$ are also inside $V$ after the algorithm is fully executed. 
Therefore, $|V| \geq k_{m} + 1 \geq n + 1$.
But $V$ forms a basis of $F^{n}_{2}$ (by Theorem \ref{thm:Wu_spans_Fn_2}) and hence $|V| = n$. A contradiction is reached. Therefore, we conclude that $k_{1}, k_{2}, \ldots, k_{n} < n$, which implies that 
$\max(k_{1}, k_{2}, \ldots, k_{n}) < n$ and the worst-case time complexity of Algorithm \ref{alg:control_scheme} is $O(n^{4})$.

\begin{example}
Consider the BN defined in Example \ref{ex:3_node_XOR_BN}. By Theorem \ref{thm:Wu_spans_Fn_2}, we can deduce that $U = \{ x_{1} \}$ does not render the BN controllable. To see this, note that
\begin{equation}
A \mathbf{e}_{1} = 
\left(
\begin{array}{c}
0 \\
1 \\
1 
\end{array}
\right), \quad
A^{m} \mathbf{e}_{1} = 
\left(
\begin{array}{c}
0 \\
0 \\
0 
\end{array}
\right) \quad \text{for all integers $m \geq 2$.}
\end{equation}
Hence,
$\text{Span}(\{ \mathbf{e}_{1}, A\mathbf{e}_{1}, A^{2}\mathbf{e}_{1}, \ldots \})
= \text{Span}(\{ \mathbf{e}_{1}, A\mathbf{e}_{1} \}) \neq F^{3}_{2}$.

On the other hand,
\begin{equation}
A \mathbf{e}_{2} = 
\left(
\begin{array}{c}
1 \\
1 \\
0 
\end{array}
\right), \quad
A^{m} \mathbf{e}_{2} = 
\left(
\begin{array}{c}
1 \\
0 \\
1 
\end{array}
\right) \quad \text{for all integers $m \geq 2$.}
\end{equation}
Therefore, $\text{Span}(\{ \mathbf{e}_{2}, A\mathbf{e}_{2}, A^{2}\mathbf{e}_{2}, \ldots \})
= \text{Span}(\{ \mathbf{e}_{2}, A\mathbf{e}_{2}, A^{2}\mathbf{e}_{2} \}) = F^{3}_{2}$.
Hence, $(\mathbf{e}_{2}, A\mathbf{e}_{2}, A^{2}\mathbf{e}_{2})$ forms a basis of $F^{3}_{2}$ and
$U = \{ x_{2} \}$ renders the BN controllable.

Suppose that we adopt the control-node set $U \coloneqq \{ x_{2} \}$ and we want to drive the BN from the state 
$\mathbf{a} \coloneqq 001$ to the state $\mathbf{b} \coloneqq 010$. We will illustrate the idea behind constructing the desired control signals using Algortihm \ref{alg:control_scheme}. First, we consider the powers of the coefficient $A$ in $\mathbf{e}_{2}$, $A\mathbf{e}_{2}$, $A^{2}\mathbf{e}_{2}$ and set 
$k^{*} \coloneqq \max(0, 1, 2) = 2$. Second, we solve the following system of linear equations for $c_{1}$, $c_{2}$, $c_{3}$:
\begin{align}
&\mathbf{b}^{T} - A^{k^{*} + 1}\mathbf{a}^{T} 
= c_{1}\mathbf{e}_{2} + c_{2}A\mathbf{e}_{2} + c_{3}A^{2}\mathbf{e}_{2} \label{eq:linalg_star_1} \\
\Leftrightarrow\ &
\left(
\begin{array}{ccc}
0 & 1 & 1 \\
1 & 1 & 0 \\
0 & 0 & 1 
\end{array}
\right)
\left(
\begin{array}{c}
c_{1} \\
c_{2} \\
c_{3}
\end{array}
\right)
=
\left(
\begin{array}{c}
1 \\
1 \\
1
\end{array}
\right) \\
\Leftrightarrow\ &c_{1} = 1, c_{2} = 0, c_{3} = 1.
\end{align}
Then, we re-write Eq.\@ (\ref{eq:linalg_star_1}) in the following way:
\begin{align}
\mathbf{b}^{T} 
&= c_{1}\mathbf{e}_{2} + c_{2}A\mathbf{e}_{2} + c_{3}A^{2}\mathbf{e}_{2} + A^{3}\mathbf{a}^{T} \\
&= c_{1}\mathbf{e}_{2} + A\left[ c_{2}\mathbf{e}_{2} + c_{3}A\mathbf{e}_{2} + A^{2}\mathbf{a}^{T} \right] \\
&= c_{1}\mathbf{e}_{2} + A\left[ c_{2}\mathbf{e}_{2} + A\left[ c_{3}\mathbf{e}_{2} + A\mathbf{a}^{T} \right] \right] \\
&= A\left[ A \left[ A\mathbf{a}^{T} + 1 \cdot \mathbf{e}_{2} \right] + 0 \cdot \mathbf{e}_{2} \right] 
+ 1 \cdot \mathbf{e}_{2}. \label{eq:linalg_star_2}
\end{align}
From Eq.\@ (\ref{eq:linalg_star_2}), we can see that the control scheme 
$\mathbf{u}(0) = 010 = 1 \cdot \mathbf{e}^{T}_{2}$, $\mathbf{u}(1) = 000 = 0 \cdot \mathbf{e}^{T}_{2}$,
$\mathbf{u}(2) = 010 = 1 \cdot \mathbf{e}^{T}_{2}$ can drive the BN from state $\mathbf{a}$ (time $0$) 
to state $\mathbf{b}$ (time $3$).
\end{example}

\subsection{Best-Case Upper Bounds}\label{section:XOR_BNs_bcub}

Let $k \geq 2$ be an integer. If an $n$-node XOR-BN satisfies that each node has exactly $k$ inputs and $k$ outputs, then the BN is said to be a $k$-$k$-XOR-BN.
We remark that in the $n \times n$ matrix associated with a $k$-$k$-XOR-BN, each row has exactly $k$ $1$-entries and each column has exactly $k$ $1$-entries. In this section, we will present some best-case upper bounds for $k$-$k$-XOR-BNs 
(Theorems \ref{thm:k-k-XOR-BN_(k-1)-bound} and \ref{thm:1_control_node}).

\begin{theorem}\label{thm:k-k-XOR-BN_(k-1)-bound}
Let $k \geq 2$ and $n > k$ be integers. Consider the following $n$-node $k$-$k$-XOR-BN with control node set $U \coloneqq \{ x_{n-k+2}, x_{n-k+3}, \ldots, x_{n} \}$ (hence, $|U| = k - 1$):
\begin{align}
x_{1}(t + 1)	&= x_{1}(t) \oplus x_{2}(t) \oplus \cdots \oplus x_{k}(t) \\
x_{2}(t + 1)	&= x_{2}(t) \oplus x_{3}(t) \oplus \cdots \oplus x_{k+1}(t) \\
		& \vdots \nonumber \\
x_{n}(t + 1)	&= x_{n}(t) \oplus x_{1}(t) \oplus \cdots \oplus x_{k-1}(t) 
\end{align}
This BN is controllable.
\end{theorem}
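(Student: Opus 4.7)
The plan is to apply Theorem~\ref{thm:Wu_spans_Fn_2}: to prove controllability it suffices to show $W_U = F^{n}_{2}$. The matrix $A \in F^{n \times n}_{2}$ associated with this XOR-BN is circulant, with $A_{ij} = 1$ iff $j \in \{i, i+1, \ldots, i+k-1\} \pmod{n}$; equivalently, $A\mathbf{e}_j = \sum_{i = j-k+1}^{j} \mathbf{e}_i$ with indices read modulo $n$. Since $W_U$ contains $\mathbf{e}_{n-k+2}, \mathbf{e}_{n-k+3}, \ldots, \mathbf{e}_n$ and is closed under left-multiplication by $A$, it is enough to prove that every standard basis vector $\mathbf{e}_1, \mathbf{e}_2, \ldots, \mathbf{e}_n$ lies in $W_U$.

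The observation that kicks off the argument is $A\mathbf{e}_n = \mathbf{e}_{n-k+1} + \mathbf{e}_{n-k+2} + \cdots + \mathbf{e}_n$; combined with $\mathbf{e}_{n-k+2}, \ldots, \mathbf{e}_n \in W_U$ this yields $\mathbf{e}_{n-k+1} \in W_U$. I would then extend this by induction on $j$, establishing the statement $P(j)$: $\mathbf{e}_n, \mathbf{e}_{n-1}, \ldots, \mathbf{e}_{n-j+1} \in W_U$ for $j = k-1, k, \ldots, n$. The base case $P(k-1)$ is exactly the fact that the $k-1$ control-node basis vectors lie in $W_U$. For the inductive step $P(j) \Rightarrow P(j+1)$ with $k - 1 \leq j \leq n - 1$, I would apply $A$ to $\mathbf{e}_{n - j + k - 1}$, which is in $W_U$ by $P(j)$. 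Because $k - 1 \leq j \leq n - 1$, both $n - j + k - 1 \in [k, n]$ and $n - j \in [1, n - k + 1]$ lie in $\{1, \ldots, n\}$, so no modular wrap-around occurs, and
\[
A\mathbf{e}_{n - j + k - 1} = \mathbf{e}_{n-j} + \mathbf{e}_{n-j+1} + \cdots + \mathbf{e}_{n-j+k-1}.
\]
All summands on the right except $\mathbf{e}_{n-j}$ are in $W_U$ by $P(j)$, while the left-hand side is in $W_U$ by $A$-invariance; therefore $\mathbf{e}_{n-j} \in W_U$, establishing $P(j+1)$.

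Running the induction up to $j = n - 1$ places every $\mathbf{e}_1, \ldots, \mathbf{e}_n$ in $W_U$, so $W_U = F^{n}_{2}$ and controllability follows from Theorem~\ref{thm:Wu_spans_Fn_2}. I do not anticipate a serious obstacle: the argument is a clean inductive peeling-off of the circulant structure, uncovering one new standard basis vector per step. The only delicate point is the index bookkeeping that ensures no wrap-around modulo $n$ is needed during the induction, which is exactly why the hypothesis $n > k$ is used to keep $n - j \geq 1$ throughout the entire range $j \in [k-1, n-1]$.
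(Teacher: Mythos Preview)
Your proof is correct and follows essentially the same approach as the paper: both invoke Theorem~\ref{thm:Wu_spans_Fn_2} and exploit that applying $A$ to a vector whose smallest nonzero coordinate is $j$ produces one whose smallest nonzero coordinate is $j-(k-1)$ (no wrap-around). The only difference is organizational---the paper directly exhibits vectors $\mathbf{v}^{(i)} \coloneqq A^{(m_i-i)/(k-1)}\mathbf{e}_{m_i}$ (with $m_i\in\{n-k+2,\ldots,n\}$, $m_i\equiv i\pmod{k-1}$) whose matrix is lower triangular with unit diagonal, whereas you run an induction peeling off one standard basis vector at a time.
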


\begin{proof}
Let $A \in F^{n \times n}_{2}$ be the matrix associated with this BN. For $i = 1, 2, \ldots, n$, let $m_{i}$ be the unique integer in $\{ n - k + 2, n - k + 3, \ldots, n \}$ such that $m_{i} - i \equiv 0 \pmod{k - 1}$. Note that
$\mathbf{v}^{(i)} \coloneqq A^{\frac{m_{i} - i}{k - 1}} \mathbf{e}_{m_{i}}$ satisfies $v^{(i)}_{i} = 1$ and
$v^{(i)}_{j} = 0$ for all integers $j \in [1, i - 1]$.
Consider the matrix
\begin{equation}
B \coloneqq
\left(
\begin{array}{cccc}
|			& |				&		& | \\
\mathbf{v}^{(1)}	& \mathbf{v}^{(2)}	& \cdots 	& \mathbf{v}^{(n)} \\
|			& |				&		& | 
\end{array}
\right)
\in F^{n \times n}_{2}.
\end{equation}
Because $B$ is a lower triangular matrix with all diagonal entries equal to $1$, $B$ is invertible.
Hence, $\text{Span}(\mathbf{v}^{(1)}, \mathbf{v}^{(2)}, \ldots, \mathbf{v}^{(n)}) = F^{n}_{2}$.
Since $\mathbf{v}^{(1)}, \mathbf{v}^{(2)}, \ldots, \mathbf{v}^{(n)} \in W_{U}$, we have $W_{U} = F^{n}_{2}$.
By Theorem \ref{thm:Wu_spans_Fn_2}, the $k$-$k$-XOR-BN is controllable.
\end{proof}

\begin{example}
Consider the following $6$-node $3$-$3$-XOR-BN with control-node set $U = \{ x_{5}, x_{6} \}$:
\begin{align}
x_{1}(t + 1)	&= x_{1}(t) \oplus x_{2}(t) \oplus x_{3}(t) \\
x_{2}(t + 1)	&= x_{2}(t) \oplus x_{3}(t) \oplus x_{4}(t) \\
x_{3}(t + 1)	&= x_{3}(t) \oplus x_{4}(t) \oplus x_{5}(t) \\
x_{4}(t + 1)	&= x_{4}(t) \oplus x_{5}(t) \oplus x_{6}(t) \\
x_{5}(t + 1)	&= x_{5}(t) \oplus x_{6}(t) \oplus x_{1}(t) \\
x_{6}(t + 1)	&= x_{6}(t) \oplus x_{1}(t) \oplus x_{2}(t)
\end{align}
The matrix associated with this BN is
\begin{equation}
A = 
\left(
\begin{array}{cccccc}
1 & 1 & 1 & 0 & 0 & 0 \\
0 & 1 & 1 & 1 & 0 & 0 \\
0 & 0 & 1 & 1 & 1 & 0 \\
0 & 0 & 0 & 1 & 1 & 1 \\
1 & 0 & 0 & 0 & 1 & 1 \\
1 & 1 & 0 & 0 & 0 & 1
\end{array}
\right)
\in F^{6 \times 6}_{2}.
\end{equation}
The values of $m_{1}$, $m_{2}$, \ldots, $m_{6}$ are shown in the following table:
\begin{center}
\begin{tabular}{c|c|c|c|c|c|c}
 $i$		& 1	& 2	& 3	& 4	& 5	& 6 	\\ \hline
$m_{i}$	& 5	& 6	& 5	& 6	& 5	& 6
\end{tabular}
\end{center}
We have
\begin{align}
\mathbf{v}^{(6)} &\coloneqq A^{\frac{m_{6} - 6}{3 - 1}} \mathbf{e}_{m_{6}} = A^{0} \mathbf{e}_{6} 
= [0, 0, 0, 0, 0, 1]^{T} \\
\mathbf{v}^{(5)} &\coloneqq A^{\frac{m_{5} - 5}{3 - 1}} \mathbf{e}_{m_{5}} = A^{0} \mathbf{e}_{5} 
= [0, 0, 0, 0, 1, 0]^{T} \\
\mathbf{v}^{(4)} &\coloneqq A^{\frac{m_{4} - 4}{3 - 1}} \mathbf{e}_{m_{4}} = A \mathbf{e}_{6} 
= [0, 0, 0, 1, 1, 1]^{T} \\
\mathbf{v}^{(3)} &\coloneqq A^{\frac{m_{3} - 3}{3 - 1}} \mathbf{e}_{m_{3}} = A \mathbf{e}_{5} 
= [0, 0, 1, 1, 1, 0]^{T} \\
\mathbf{v}^{(2)} &\coloneqq A^{\frac{m_{2} - 2}{3 - 1}} \mathbf{e}_{m_{2}} = A^{2} \mathbf{e}_{6} 
= [0, 1, 0, 1, 0, 1]^{T} \\
\mathbf{v}^{(1)} &\coloneqq A^{\frac{m_{1} - 1}{3 - 1}} \mathbf{e}_{m_{1}} = A^{2} \mathbf{e}_{5} 
= [1, 0, 1, 0, 1, 0]^{T}
\end{align}
It can be easily seen that $\text{Span}(\mathbf{v}^{(1)}, \mathbf{v}^{(2)}, \mathbf{v}^{(3)}, \mathbf{v}^{(4)},
\mathbf{v}^{(5)}, \mathbf{v}^{(6)}) = F^{6}_{2}$. By Theorem \ref{thm:Wu_spans_Fn_2}, the BN is controllable.
\end{example}

Theorem \ref{thm:k-k-XOR-BN_(k-1)-bound} states that for all integers $k$, $n$ such that $n > k \geq 2$, there exists an $n$-node controllable $k$-$k$-XOR-BN with $k-1$ control nodes. From now on, we are going to show that if $n = 2^{m}$ (where $m \geq 2$ is an integer) and $k \in [3, n-1]$ is odd, then there exists an $n$-node controllable $k$-$k$-XOR-BN with $1$ control node (Theorem \ref{thm:1_control_node}). 
To do so, first we present some definitions and facts in ring theory.

\begin{definition}\label{def:commutative_ring}
A commutative ring $R$ is a set with two binary operations $+$, $\cdot$ and two distinct elements $0$, $1$ such that
\begin{enumerate}
\item $(R, +, 0)$ forms an abelian group;

\item $(R, \cdot, 1)$ forms a monoid and for all $a, b \in R$, $a \cdot b = b \cdot a$;

\item for all $a, b, c \in R$, $a \cdot (b + c) = (a \cdot b) + (a \cdot c)$.
\end{enumerate}
\end{definition}

\begin{example}
The set $F_{2}[t]$ of all polynomials in the variable $t$ over $F_{2}$ equipped with the usual addition operation $+$ and the usual multiplication operation $\cdot$ forms a commutative ring. This ring is called the polynomial ring over $F_{2}$.
\end{example}

\begin{definition}
An element $u$ of a commutative ring $R$ is said to be a unit if there exists $v \in R$ such that $vu = uv = 1$.
\end{definition}

\begin{example}
$1$ is the only unit in $F_{2}[t]$.
\end{example}

\begin{definition}
An integral domain (ID) is a commutative ring such that the product of any two non-zero elements is non-zero.
\end{definition}

\begin{definition}
An irreducible element of an ID is a non-zero, non-unit element which is not the product of two non-unit elements in the ID.
\end{definition}

\begin{example}
$F_{2}[t]$ is an ID. $t + 1$ is an irreducible element of $F_{2}[t]$. $t^{4} + 1$ is not an irreducible element of $F_{2}[t]$ because $t^{4} + 1 = (t + 1)^{4}$.
\end{example}

\begin{definition}
A unique factorization domain (UFD) is an ID $R$ in which every non-zero, non-unit element $x$ of $R$ can be expressed as a finite product of irreducible elements $p_{1}$, $p_{2}$, \ldots, $p_{n}$ of $R$ (where $n \geq 1$):
\begin{equation}
x = p_{1} p_{2} \ldots p_{n},
\end{equation}
and if $q_{1}$, $q_{2}$, \ldots, $q_{m}$ are irreducible elements of $R$ such that $x = q_{1} q_{2} \ldots q_{m}$, then $m = n$, and there exists a bijection $f : \{ 1, 2, \ldots, n \} \to \{ 1, 2, \ldots, n \}$ and units 
$u_{1}, u_{2}, \ldots, u_{n} \in R$ satisfying $p_{i} = u_{i} q_{f(i)}$ for $i = 1, 2, \ldots, n$.
\end{definition}

\begin{definition}
A Euclidean domain (ED) is an ID $R$ such that there exists a function 
$f : R \setminus \{ 0 \} \to \mathbb{Z}_{\geq 0}$ satisfying that for all $a, b \in R$ with $b \neq 0$, there exist
$q, r \in R$ such that $a = bq + r$ and either $r = 0$ or $f(r) < f(b)$. Such a function is said to be a Euclidean function.
\end{definition}

\begin{example}
$F_{2}[t]$ is an ED. A Euclidean function $f : F_{2}[t] \setminus \{ 0 \} \to \mathbb{Z}_{\geq 0}$ is given by
$f(p(t)) \coloneqq \text{deg}(p(t))$.
\end{example}

\begin{definition}
Let $R$ be a commutative ring. An ideal is a subset $I$ of $R$ such that $(I, +)$ is a subgroup of $(R, +)$, and for all $r \in R$, $x \in I$, $rx \in I$.
\end{definition}

\begin{definition}
Let $R$ be a commutative ring. For all $a \in R$, we define $\langle a \rangle \coloneqq \{ ar : r \in R \}$.
$\langle a \rangle$ is called the principal ideal of $R$ generated by $a$. It can be easily seen that 
$\langle a \rangle$ is an ideal of $R$.
\end{definition}

\begin{definition}
A principal ideal domain (PID) is an ID in which every ideal is a principal ideal.
\end{definition}

\begin{theorem}
$F_{2}[t]$ is an ED, PID and UFD.
\end{theorem}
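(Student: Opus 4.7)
The plan is to prove the three assertions in sequence, exploiting the standard chain of implications: once $F_{2}[t]$ is shown to be a Euclidean domain, the principal ideal property and unique factorization property follow from general ring-theoretic arguments. Thus the task reduces to (i) verifying the Euclidean property with an explicit Euclidean function, (ii) showing that every Euclidean domain is a principal ideal domain, and (iii) showing that every principal ideal domain is a unique factorization domain.

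For step (i), I would take the Euclidean function $f : F_{2}[t] \setminus \{ 0 \} \to \mathbb{Z}_{\geq 0}$ defined by $f(p(t)) \coloneqq \mathrm{deg}(p(t))$, as already suggested in the example preceding the statement. Given $a(t), b(t) \in F_{2}[t]$ with $b(t) \neq 0$, I would carry out polynomial long division over $F_{2}$: at each step, one subtracts an appropriate monomial multiple of $b(t)$ from the current remainder to kill its leading term. Since $F_{2}$ is a field, the leading coefficient of $b(t)$ is always $1$ and hence invertible, so the division terminates in a quotient $q(t)$ and a remainder $r(t)$ with either $r(t) = 0$ or $\mathrm{deg}(r(t)) < \mathrm{deg}(b(t))$. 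Because $F_{2}[t]$ is clearly an integral domain (the product of two polynomials with nonzero leading coefficients has nonzero leading coefficient), this establishes that $F_{2}[t]$ is an ED.

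For step (ii), given any nonzero ideal $I$ of $F_{2}[t]$, I would pick $b(t) \in I \setminus \{ 0 \}$ of minimum degree and claim $I = \langle b(t) \rangle$. For any $a(t) \in I$, the division algorithm gives $a(t) = b(t) q(t) + r(t)$ with $r(t) = 0$ or $\mathrm{deg}(r(t)) < \mathrm{deg}(b(t))$. Since $r(t) = a(t) - b(t) q(t) \in I$, minimality of $\mathrm{deg}(b(t))$ forces $r(t) = 0$, so $a(t) \in \langle b(t) \rangle$. The zero ideal is $\langle 0 \rangle$, so every ideal is principal and $F_{2}[t]$ is a PID.

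For step (iii), I would prove existence of irreducible factorizations by strong induction on degree: any non-unit, nonzero $a(t) \in F_{2}[t]$ is either irreducible or factors as $a(t) = b(t) c(t)$ with both factors of strictly smaller positive degree, to which the inductive hypothesis applies. The harder part, and the main obstacle of the whole argument, is uniqueness, which rests on showing that every irreducible element of a PID is prime. For this I would take an irreducible $p(t)$ with $p(t) \mid a(t) b(t)$ but $p(t) \nmid a(t)$, consider the ideal generated by $p(t)$ and $a(t)$ which by the PID property equals $\langle d(t) \rangle$ for some $d(t)$, argue that $d(t)$ must be a unit (otherwise the irreducibility of $p(t)$ together with $d(t) \mid p(t)$ forces $d(t)$ to be an associate of $p(t)$, contradicting $p(t) \nmid a(t)$), and then extract a Bézout-style identity $1 = p(t) x(t) + a(t) y(t)$ from which $p(t) \mid b(t)$ follows by multiplying through by $b(t)$. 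A standard cancellation argument on the number of irreducible factors then yields uniqueness up to units and permutation, completing the proof that $F_{2}[t]$ is a UFD.
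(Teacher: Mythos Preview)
Your proof outline is correct and follows the standard textbook route: verify the Euclidean property via the degree function and polynomial long division, then invoke the implications ED $\Rightarrow$ PID $\Rightarrow$ UFD, each established by the usual minimal-degree and irreducible-is-prime arguments. Nothing is missing.

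By way of comparison, the paper itself does not actually prove this theorem; it simply states it as a background fact from commutative algebra, alongside the other ring-theoretic definitions being recalled in that section. So you have supplied considerably more than the paper does. Your approach is exactly the one any algebra text would give, and there is no alternative route in the paper to compare against.
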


\begin{definition}
Let $R$ be a commutative ring and $I$ be an ideal of $R$. For all $a \in R$, we define
$I + a \coloneqq \{ r + a : r \in I \}$.
\end{definition}

\begin{proposition}
Let $R$ be a commutative ring and $I$ be an ideal of $R$. For all $a, b \in R$, we either have $I + a = I + b$ or 
$(I + a) \cap (I + b) = \varnothing$.
\end{proposition}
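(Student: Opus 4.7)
The plan is to reduce this to the standard fact that cosets of a subgroup partition the ambient group. Since $I$ is an ideal of $R$, the first axiom in the definition of an ideal guarantees that $(I,+)$ is a subgroup of the abelian group $(R,+)$. Consequently, $I+a$ and $I+b$ are simply the additive cosets of this subgroup, and the proposition becomes the statement that two cosets of a subgroup are either identical or disjoint.

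First I would dispose of the easy case: if $(I+a)\cap (I+b)=\varnothing$, the conclusion already holds. So assume instead that the intersection is nonempty and pick some $c\in (I+a)\cap(I+b)$. By definition of the cosets, there exist $i_{1},i_{2}\in I$ such that $c=i_{1}+a=i_{2}+b$, and hence $a-b=i_{2}-i_{1}\in I$, using that $I$ is closed under addition and additive inverses (it is a subgroup of $(R,+)$).

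Next I would prove the two containments. For any element $x\in I+a$, write $x=r+a$ with $r\in I$. Then
\begin{equation}
x=r+a=(r+(a-b))+b,
\end{equation}
and since $r\in I$ and $a-b\in I$, closure under addition gives $r+(a-b)\in I$, so $x\in I+b$. This shows $I+a\subseteq I+b$. By a symmetric argument, interchanging the roles of $a$ and $b$ (and using $b-a=-(a-b)\in I$ since $I$ is closed under additive inverses), we obtain $I+b\subseteq I+a$. Therefore $I+a=I+b$, completing the proof.

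There is no real obstacle here; the entire argument is a direct translation of the standard coset-partition proof from group theory. The only thing worth being a little careful about is to appeal explicitly to the subgroup axioms of $I$ (closure under addition and inverses), rather than to the stronger ring-theoretic axiom that $rx\in I$ for all $r\in R$, which is not needed for this particular statement.
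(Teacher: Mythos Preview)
Your proof is correct and is exactly the standard coset-partition argument. The paper itself does not supply a proof for this proposition; it is stated there as a background fact in the ring-theoretic preliminaries without justification. Your write-up would serve perfectly well as the omitted proof.
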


\begin{definition}
Let $R$ be a commutative ring and $I$ be an ideal of $R$. The quotient ring 
$\frac{R}{I} \coloneqq \{ I + a : a \in R \}$ is a commutative ring equipped with addition and multiplication defined as: for all $a, b \in R$, $(I + a) + (I + b) \coloneqq I + (a + b)$ and $(I + a)(I + b) \coloneqq I + ab$.
It can be checked that these definitions are well-defined. Moreover, $I$ is the zero-element of $\frac{R}{I}$ and $I + 1$ is the $1$-element of $\frac{R}{I}$. It can be easily verified that $\frac{R}{I}$ satisfies the properties of commutative rings stipulated in Definition \ref{def:commutative_ring}.
\end{definition}

\begin{definition}
Let $R$ and $S$ be two commutative rings. A ring homomorphism $f : R \to S$ is a function which satisfies that for all $a, b \in R$, $f(a + b) = f(a) + f(b)$, $f(ab) = f(a)f(b)$ and $f(1_{R}) = 1_{S}$.
\end{definition}

\begin{definition}
Let $R$ be a commutative ring and $I$ be an ideal of $R$. The function $\rho : R \to \frac{R}{I}$ defined by 
$\rho(a) \coloneqq I + a$ is called the natural quotient map from $R$ to $\frac{R}{I}$. It can be easily seen that 
$\rho$ is a surjective ring homomorphism.
\end{definition}

\begin{definition}
For all integers $m \geq 2$, we use $Q_{m}$ to denote the commutative ring 
$\frac{F_{2}[t]}{\langle t^{2^{m}} + 1 \rangle}$.
\end{definition}

\begin{remark}
From this point onwards, the symbol $0$ has multiple possible meanings. It can refer to the zero element in $F_{2}$, the zero element of $F_{2}[t]$, or the zero element in $Q_{m}$. The right meaning of the symbol $0$ at any place should be clear from the context.
\end{remark}

\begin{definition}
Let $m \geq 2$ be an integer. Fix arbitrary $p(t) \in Q_{m}$. If $q(t) \in F_{2}[t]$ satisfies that $q(p(t)) = 0$,
then $q(t)$ is said to be an annihilating polynomial of $p(t)$.
\end{definition}

\begin{example}
For all integers $m \geq 2$, $q(t) \coloneqq t^{2^{m}} + 1 \in F_{2}[t]$ is an annihilating polynomial of 
$p(t) \coloneqq t^{2} + t + 1 \in Q_{m}$ because
\begin{align}
q(p(t)) 
&= (t^{2} + t + 1)^{2^{m}} + 1 \\
&= (t^{2^{m+1}} + t^{2^{m}} + 1) + 1 \\
&= (t^{2^{m}})(t^{2^{m}}) + t^{2^{m}} + 1 + 1 \\
&= 1 \cdot 1 + 1 + 1 + 1 \\
&= 0
\end{align}
where the 4\textsuperscript{th} inequality follows from the fact that the two elements $1, t^{2^{m}} \in Q_{m}$ are equal.
\end{example}

\begin{proposition}
Let $m \geq 2$ be an integer and $p(t) \in Q_{m}$. There exists exactly one monic non-zero annihilating polynomial $q^{*}(t) \in F_{2}[t]$ such that for all non-zero annihilating polynomial $q(t)$ of $p(t)$,
$\text{deg}(q^{*}(t)) \leq \text{deg}(q(t))$. $q^{*}(t)$ is called the minimal polynomial of $p(t)$.
\end{proposition}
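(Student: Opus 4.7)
The plan is to exploit the fact (stated earlier in the excerpt) that $F_{2}[t]$ is a PID, and to define the set of all annihilating polynomials of $p(t)$ as an ideal. Specifically, I would set
\begin{equation}
I_{p} \coloneqq \{ q(t) \in F_{2}[t] : q(p(t)) = 0 \text{ in } Q_{m} \},
\end{equation}
and verify that $I_{p}$ is an ideal of $F_{2}[t]$. Closure under addition follows since evaluation at $p(t)$ is additive: if $q_{1}(p(t)) = 0$ and $q_{2}(p(t)) = 0$, then $(q_{1} + q_{2})(p(t)) = 0$. Absorption under multiplication by arbitrary $r(t) \in F_{2}[t]$ follows from $(r q)(p(t)) = r(p(t)) \cdot q(p(t)) = 0$.

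Next I would establish that $I_{p}$ contains a non-zero element, which is the only part of the argument that requires an observation beyond pure algebra. Since $Q_{m}$ has exactly $2^{2^{m}}$ elements, the list of powers $p(t)^{0}, p(t)^{1}, \ldots, p(t)^{2^{2^{m}}}$ contains $2^{2^{m}} + 1$ terms and hence must have a repeat, say $p(t)^{i} = p(t)^{j}$ with $0 \leq i < j$. Because $Q_{m}$ has characteristic $2$, this yields the non-zero annihilating polynomial $t^{j} + t^{i} \in I_{p}$.

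Once $I_{p}$ is known to be a non-zero ideal of the PID $F_{2}[t]$, I can write $I_{p} = \langle q^{*}(t) \rangle$ for some non-zero $q^{*}(t) \in F_{2}[t]$. Monicness is automatic: every non-zero polynomial in $F_{2}[t]$ has leading coefficient $1$ since the only non-zero element of $F_{2}$ is $1$. To verify the minimality property, I would take any non-zero annihilator $q(t) \in I_{p}$, write $q(t) = q^{*}(t) r(t)$ with $r(t) \neq 0$, and conclude $\deg(q) = \deg(q^{*}) + \deg(r) \geq \deg(q^{*})$. For uniqueness, if $\tilde{q}(t)$ is another monic non-zero annihilator of the same minimal degree, then $\tilde{q}(t) = q^{*}(t) r(t)$ forces $\deg(r) = 0$, so $r(t)$ is a unit, and since the only unit of $F_{2}[t]$ is $1$, we get $\tilde{q}(t) = q^{*}(t)$.

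I do not anticipate a substantive obstacle; the argument is the standard minimal polynomial construction transported to the ring $Q_{m}$. The one non-algebraic ingredient worth emphasizing is the finiteness of $Q_{m}$, without which non-emptiness of $I_{p} \setminus \{0\}$ would not be automatic; beyond that the proof is a direct application of the PID property of $F_{2}[t]$ together with the degenerate monic/unit structure peculiar to polynomial rings over $F_{2}$.
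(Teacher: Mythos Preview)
Your proof is correct. Both your argument and the paper's establish non-emptiness of the set of non-zero annihilators by the same pigeonhole argument on powers of $p(t)$ in the finite ring $Q_{m}$. The approaches diverge after that point: the paper takes a more elementary route, invoking the well-ordering principle on degrees to extract a minimal element and then proving uniqueness by observing that the difference of two distinct monic annihilators of the same degree would be a non-zero annihilator of strictly smaller degree. You instead package the annihilators as an ideal $I_{p}$ and invoke the PID property of $F_{2}[t]$ to obtain a generator $q^{*}(t)$, from which minimality and uniqueness follow by degree comparison and the triviality of units in $F_{2}[t]$. Your route is slightly more structural and has the bonus of yielding $I_{p} = \langle q^{*}(t) \rangle$ directly, which is precisely the content of the paper's next proposition (Proposition~\ref{prop:min_poly_generates_all_annihilating_poly}); the paper's route, on the other hand, avoids appealing to the PID theorem and works entirely from first principles.
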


\begin{proof}
Consider the set $X$ of all non-zero annihilating polynomials of $p(t)$. First, we prove that $X$ is non-empty. 
Note that $\{ p(t)^{k} : k \in \mathbb{Z}_{\geq 0} \} \subseteq Q_{m}$ and $Q_{m}$ is a finite set. By the pigeonhole principle, there exist non-negative integers $k_{1} \neq k_{2}$ such that $p(t)^{k_{1}} = p(t)^{k_{2}}$. 
Define $r(t) \coloneqq t^{k_{1}} + t^{k_{2}} \in F_{2}[t]$. We can see that $r(p(t)) = 0$ and therefore $r(t) \in X$, so $X$ is non-empty.

By the well-ordering principle, there exists a polynomial $q^{*}(t) \in X$ with the smallest degree. This shows the existence part of the proposition. To prove the uniqueness part, assume for a contradiction that there exist two distinct minimal polynomials $q^{*}_{1}(t)$, $q^{*}_{2}(t)$ of $p(t)$. Clearly, 
$\text{deg}(q^{*}_{1}(t)) = \text{deg}(q^{*}_{2}(t))$. Let $l$ be the degree of $q^{*}_{1}(t)$. We write
$q^{*}_{1}(t) = t^{l} + a_{l-1} t^{l-1} + \ldots + a_{0}$ and $q^{*}_{2}(t) = t^{l} + b_{l-1} t^{l-1} + \ldots + b_{0}$,
where $a_{0}, a_{1}, \ldots, a_{l-1}, b_{0}, b_{1}, \ldots, b_{l-1} \in F_{2}$.
Note that $q^{*}_{1}(t) + q^{*}_{2}(t) \neq 0$, $q^{*}_{1}(p(t)) + q^{*}_{2}(p(t)) = 0$ and
$\text{deg}(q^{*}_{1}(t) + q^{*}_{2}(t)) < l$. Hence, $q^{*}_{1}(t) + q^{*}_{2}(t)$ is a non-zero annihilating polynomial of $p(t)$ with degree smaller than $q^{*}_{1}(t)$. A contradiction is reached. Therefore, the uniqueness of the minimal polynomial of $p(t)$ is proved.
\end{proof}

\begin{proposition}\label{prop:min_poly_generates_all_annihilating_poly}
Let $m \geq 2$ be an integer and $p(t) \in Q_{m}$. Let $q(t) \in F_{2}[t]$ be the minimal polynomial of $p(t)$.
The set of all annihilating polynomials of $p(t)$ equals the principal ideal $\langle q(t) \rangle$.
\end{proposition}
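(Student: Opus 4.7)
The plan is to mirror the classical argument showing that the minimal polynomial of a ring or module element generates its entire annihilator ideal. First I would let $I$ denote the set of all annihilating polynomials of $p(t)$, and observe that $I$ is precisely the kernel of the evaluation map $\mathrm{ev}_{p} : F_{2}[t] \to Q_{m}$ sending $r(t) \mapsto r(p(t))$. Since $\mathrm{ev}_{p}$ is a ring homomorphism, $I$ is an ideal of $F_{2}[t]$, which will allow me to compare it cleanly with $\langle q(t) \rangle$.

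One containment is immediate. For any $s(t) \in F_{2}[t]$, evaluating $s(t) q(t)$ at $p(t)$ yields $s(p(t)) \cdot q(p(t)) = s(p(t)) \cdot 0 = 0$, so $s(t) q(t) \in I$. This shows $\langle q(t) \rangle \subseteq I$.

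For the reverse containment, I would take an arbitrary $r(t) \in I$ and apply the division algorithm in $F_{2}[t]$, which is available because $F_{2}[t]$ is a Euclidean domain (noted in the excerpt). This produces $a(t), b(t) \in F_{2}[t]$ with $r(t) = q(t) a(t) + b(t)$ where either $b(t) = 0$ or $\deg(b(t)) < \deg(q(t))$. Evaluating at $p(t)$ gives
\begin{equation}
b(p(t)) = r(p(t)) - q(p(t)) a(p(t)) = 0 - 0 \cdot a(p(t)) = 0,
\end{equation}
so $b(t)$ is itself an annihilating polynomial of $p(t)$. If $b(t)$ were non-zero, then (after scaling to make it monic, which does nothing over $F_{2}$) it would be a non-zero annihilating polynomial of degree strictly less than $\deg(q(t))$, contradicting the minimality of $q(t)$. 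Therefore $b(t) = 0$, so $r(t) = q(t) a(t) \in \langle q(t) \rangle$, proving $I \subseteq \langle q(t) \rangle$ and hence $I = \langle q(t) \rangle$.

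I do not anticipate a real obstacle: every ingredient is already in place. The only subtlety worth stating carefully is that the evaluation map is a well-defined ring homomorphism from $F_{2}[t]$ into $Q_{m}$, which I would justify in a single sentence using the universal property of the polynomial ring (or, equivalently, by a direct check that evaluating respects $+$ and $\cdot$). Everything else is a routine application of the division algorithm combined with the defining minimality of $q(t)$.
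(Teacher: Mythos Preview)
Your proposal is correct and follows essentially the same approach as the paper: both directions are handled identically, with $\langle q(t)\rangle \subseteq I$ shown by direct evaluation and $I \subseteq \langle q(t)\rangle$ shown via the division algorithm in the Euclidean domain $F_{2}[t]$ together with the minimality of $q(t)$. Your additional framing of $I$ as the kernel of the evaluation homomorphism is a nice touch but not used in the paper's version.
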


\begin{proof}
We need to show that for all $r(t) \in F_{2}[t]$, $r(t)$ is an annihilating polynomial of $p(t)$ if and only if there exists $s(t) \in F_{2}[t]$ such that $r(t) = s(t)q(t)$ (i.e., $r(t) \in \langle q(t) \rangle$).

First, we show the ``if'' direction. If $r(t) = s(t)q(t)$ for some $s(t) \in F_{2}[t]$, then
\begin{equation}
r(p(t)) = s(p(t)) q(p(t)) = s(p(t)) \cdot 0 = 0.
\end{equation}
Therefore, $r(t)$ is an annihilating polynomial of $p(t)$.

Second, we show the ``only if'' direction. Since $F_{2}[t]$ is a ED, there exist $a(t), b(t) \in F_{2}[t]$ such that
$r(t) = a(t)q(t) + b(t)$ and either $b(t) = 0$ or $\text{deg}(b(t)) < \text{deg}(q(t))$. Note that
\begin{equation}
b(p(t)) = r(p(t)) - a(p(t)) q(p(t)) = 0 - a(p(t)) \cdot 0 = 0.
\end{equation}
Hence, $b(t)$ is an annihilating polynomial of $p(t)$. $b(t)$ must equal $0$; otherwise, $b(t)$ is a non-zero annihilating polynomial of $p(t)$ such that $\text{deg}(b(t)) < \text{deg}(q(t))$, which contradicts with the fact that $q(t)$ is the minimal polynomial of $p(t)$. Therefore, $r(t) = a(t)q(t) \in \langle q(t) \rangle$.
\end{proof}

We are now ready to prove the main theorem which leads us to the construction of controllable $k$-$k$-XOR-BNs with $1$ control node.

\begin{theorem}\label{thm:important_thm}
Let $m \geq 2$ be an integer and $\rho : F_{2}[t] \to Q_{m}$ be the natural quotient map.
Fix arbitrary $v(t) \in F_{2}[t]$ such that $v(1) = 1$. Define $s(t) \coloneqq 1 + (t + 1)v(t)$.
Then, $\rho(v(t))$ is a unit in $Q_{m}$. Moreover, the minimal polynomial of $\rho(s(t))$ is
$t^{2^{m}} + 1 \in F_{2}[t]$.
\end{theorem}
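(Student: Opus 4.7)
The plan is to exploit the fact that in characteristic $2$ we have the identity $t^{2^{m}}+1 = (t+1)^{2^{m}}$ (freshman's dream), so that $Q_{m} = F_{2}[t]/\langle (t+1)^{2^{m}} \rangle$. Almost everything then reduces to divisibility by powers of $t+1$.

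First I would handle the unit claim. Since $v(1) = 1$, the polynomial $v(t)$ is not divisible by $t+1$ in $F_{2}[t]$, hence $\gcd(v(t),(t+1)^{2^{m}}) = 1$. Using that $F_{2}[t]$ is a Euclidean domain (hence a Bezout domain), I can write $a(t)v(t) + b(t)(t+1)^{2^{m}} = 1$ for some $a(t), b(t) \in F_{2}[t]$. Applying $\rho$ gives $\rho(a(t)) \rho(v(t)) = \rho(1) = 1$ in $Q_{m}$, so $\rho(v(t))$ is a unit.

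Next, let $a \coloneqq \rho(s(t))$ and consider the polynomial $q(t) \coloneqq t^{2^{m}} + 1 \in F_{2}[t]$. I would first verify that $q(t)$ is an annihilating polynomial of $a$: since $a + 1 = \rho((t+1)v(t))$ and $q(x) = (x+1)^{2^{m}}$ in $F_{2}[x]$, we get
\begin{equation}
q(a) = (a+1)^{2^{m}} = \rho\bigl((t+1)^{2^{m}} v(t)^{2^{m}}\bigr) = \rho\bigl(0 \cdot v(t)^{2^{m}}\bigr) = 0,
\end{equation}
since $(t+1)^{2^{m}} = t^{2^{m}}+1$ is exactly the generator of the ideal defining $Q_{m}$. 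By Proposition~\ref{prop:min_poly_generates_all_annihilating_poly}, the minimal polynomial $q^{*}(t)$ of $a$ divides $q(t) = (t+1)^{2^{m}}$, so $q^{*}(t) = (t+1)^{k}$ for some $k \in [0, 2^{m}]$, as $F_{2}[t]$ is a UFD with $t+1$ irreducible.

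The remaining step, and the only real obstacle, is showing minimality, namely $k = 2^{m}$. I would compute $(t+1)^{k}$ evaluated at $a$ in $Q_{m}$:
\begin{equation}
(a+1)^{k} = \rho\bigl((t+1)^{k} v(t)^{k}\bigr).
\end{equation}
This vanishes in $Q_{m}$ iff $(t+1)^{2^{m}}$ divides $(t+1)^{k} v(t)^{k}$ in $F_{2}[t]$. Because $v(t)^{k}$ is coprime to $(t+1)$ (as $v(1) = 1$), unique factorization in $F_{2}[t]$ forces $(t+1)^{2^{m}} \mid (t+1)^{k}$, i.e., $k \geq 2^{m}$. Combined with $k \leq 2^{m}$ from the previous paragraph, this yields $k = 2^{m}$, so the minimal polynomial of $\rho(s(t))$ equals $(t+1)^{2^{m}} = t^{2^{m}} + 1$, completing the proof.
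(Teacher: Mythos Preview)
Your proof is correct and follows essentially the same route as the paper's: both establish the unit claim via B\'ezout from $\gcd(v(t),t+1)=1$, verify that $t^{2^m}+1=(t+1)^{2^m}$ annihilates $\rho(s(t))$ using the Frobenius identity, deduce the minimal polynomial has the form $(t+1)^k$, and then use coprimality of $v(t)$ with $t+1$ to force $k=2^m$. Your minimality step is slightly more direct---you argue by unique factorization that $(t+1)^{2^m}\mid (t+1)^k v(t)^k$ forces $k\ge 2^m$, whereas the paper argues by contradiction at $k=2^m-1$ by multiplying through by the inverse of $\rho(v(t))$---but the underlying idea is the same.
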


\begin{proof}
First, we show that $\rho(v(t))$ is a unit in $Q_{m}$. 
Because $F_{2}[t]$ is a ED and $v(1) = 1$, there exists $\beta(t) \in F_{2}[t]$ such that $v(t) = \beta(t)(t + 1) + 1$.
Therefore, $\text{gcd}(v(t), t + 1) = 1$. Because $t + 1$ is the only irreducible factor of 
$(t + 1)^{2^{m}} = t^{2^{m}} + 1$ in $F_{2}[t]$, we have $\text{gcd}(v(t), t^{2^{m}} + 1) = 1$.
By using the Euclidean algorithm, we can find $a(t), b(t) \in F_{2}[t]$ such that $a(t)v(t) + b(t)(t^{2^{m}} + 1) = 1$.
Hence,
\begin{align}
& \rho(a(t)v(t) + b(t)(t^{2^{m}} + 1)) = \rho(1) \\
\Rightarrow\ &\rho(a(t))\rho(v(t)) + \rho(b(t))\rho(t^{2^{m}} + 1) = \rho(1) \\
\Rightarrow\ &\rho(a(t))\rho(v(t)) = \rho(1). \label{eq:hashtag}
\end{align}
where the second implication holds because $\rho(t^{2^{m}} + 1) = \langle t^{2^{m}} + 1 \rangle + t^{2^{m}} + 1 = \langle t^{2^{m}} + 1 \rangle$ is the zero element of the quotient ring $Q_{m}$.
Therefore, $\rho(v(t))$ is a unit in $Q_{m}$.

Second, we show that the minimal polynomial of $\rho(s(t))$ is $t^{2^{m}} + 1$. Note that
\begin{align}
\rho(s(t))^{2^{m}} + \rho(1)
&= \rho(s(t)^{2^{m}} + 1) \\
&= \rho((1 + (t + 1)v(t))^{2^{m}} + 1) \\
&= \rho(1^{2^{m}} + (t + 1)^{2^{m}}(v(t))^{2^{m}} + 1) \\
&= \rho((t + 1)^{2^{m}}(v(t))^{2^{m}}) \\
&= \rho(t^{2^{m}} + 1) \rho((v(t))^{2^{m}}) \\
&= 0 \cdot \rho((v(t))^{2^{m}}) \\
&= 0.
\end{align}
Therefore, $t^{2^{m}} + 1$ is an annihilating polynomial of $\rho(s(t))$. Let $z(t) \in F_{2}[t]$ be the minimal polynomial of $\rho(s(t))$. By Proposition \ref{prop:min_poly_generates_all_annihilating_poly}, there exists
$\gamma(t) \in F_{2}[t]$ such that $(t + 1)^{2^{m}} = t^{2^{m}} + 1 = \gamma(t) z(t)$. We deduce that
$z(t) = (t + 1)^{d}$ for some $d \in \{ 0, 1, \ldots, 2^{m} \}$. Clearly, $z(t) \neq 1$ and hence $d \neq 0$.
On the other hand, note that
\begin{align}
(\rho(s(t)) + \rho(1))^{2^{m}-1}
&= \rho((s(t) + 1)^{2^{m} - 1}) \\
&= \rho(((t+1)v(t))^{2^{m} - 1}) \\
&= \rho((t + 1)^{2^{m} - 1}(v(t))^{2^{m} - 1}) \\
&= \rho((t + 1)^{2^{m} - 1})\rho(v(t))^{2^{m} - 1}. \label{eq:star}
\end{align}
Suppose for a contradiction that $(\rho(s(t)) + \rho(1))^{2^{m}-1} = 0$. Then, by Eq.\@ (\ref{eq:star}),
\begin{align}
&\rho((t + 1)^{2^{m} - 1})\rho(v(t))^{2^{m} - 1} = 0 \\
\Rightarrow\ & \rho(a(t))^{2^{m} - 1}\rho(v(t))^{2^{m} - 1}\rho((t + 1)^{2^{m} - 1}) 
= \rho(a(t))^{2^{m} - 1} \cdot 0 = 0 \\
\Rightarrow\ &\rho((t + 1)^{2^{m} - 1}) = 0 \quad \text{(by Eq.\@ (\ref{eq:hashtag}))} \\
\Rightarrow\ &(t + 1)^{2^{m} - 1} \in \langle t^{2^{m}} + 1 \rangle.
\end{align}
But the degree of any non-zero polynomial inside $\langle t^{2^{m}} + 1 \rangle$ is at least $2^{m}$. A contradiction is reached. Therefore, $(\rho(s(t)) + \rho(1))^{2^{m}-1} \neq 0$.
Hence, for $\delta = 0, 1, \ldots, 2^{m} - 1$, $(\rho(s(t)) + \rho(1))^{\delta} \neq 0$.
Therefore, $z(t) = (t + 1)^{2^{m}} = t^{2^{m}} + 1$ is the minimal polynomial of $\rho(s(t))$. 
The proof is complete.
\end{proof}

Before using Theorem \ref{thm:important_thm} to construct the desired $k$-$k$-XOR-BNs (Theorem \ref{thm:1_control_node}), we need to state a definition and a trivial lemma.

\begin{definition}
Let $m \geq 2$ be an integer. We define $P_{m} \in F^{2^{m} \times 2^{m}}_{2}$ to be the permutation matrix corresponding to the permutation function
$
\left(
\begin{array}{cccccc}
1		& 2	& 3	& \cdots 	& 2^{m}-1	& 2^{m} \\
2^{m}	& 1	& 2	& \cdots 	& 2^{m}-2	& 2^{m} - 1
\end{array}
\right)
$.
\end{definition}

\begin{example}
Consider the case $m = 2$. Then,
\begin{align}
P_{2} &=
\left(
\begin{array}{cccc}
0 & 1 & 0 & 0 \\
0 & 0 & 1 & 0 \\
0 & 0 & 0 & 1 \\
1 & 0 & 0 & 0
\end{array}
\right)
& 
P^{2}_{2} &= 
\left(
\begin{array}{cccc}
0 & 0 & 1 & 0 \\
0 & 0 & 0 & 1 \\
1 & 0 & 0 & 0 \\
0 & 1 & 0 & 0
\end{array}
\right) \\
P^{3}_{2} &=
\left(
\begin{array}{cccc}
0 & 0 & 0 & 1 \\
1 & 0 & 0 & 0 \\
0 & 1 & 0 & 0 \\
0 & 0 & 1 & 0
\end{array}
\right)
& 
P^{4}_{2} &= 
\left(
\begin{array}{cccc}
1 & 0 & 0 & 0 \\
0 & 1 & 0 & 0 \\
0 & 0 & 1 & 0 \\
0 & 0 & 0 & 1
\end{array}
\right)
\end{align}
\end{example}

\begin{lemma}
Let $m \geq 2$ be an integer. 
$(\mathbf{w}_{1}, \mathbf{w}_{2}, \ldots, \mathbf{w}_{2^{m}}) \coloneqq
(\langle t^{2^{m}} + 1 \rangle + t^{2^{m} - 1}, \ldots, 
\langle t^{2^{m}} + 1 \rangle + t, \langle t^{2^{m}} + 1 \rangle + 1)$
forms an ordered basis of $Q_{m}$.
\end{lemma}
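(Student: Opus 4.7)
The plan is to show two things: that the $2^m$ cosets span $Q_m$ as an $F_2$-vector space, and that they are linearly independent over $F_2$. Both facts rest on the standard observation that $F_2[t]$ is a Euclidean domain with the degree function, which lets us reduce any polynomial modulo $t^{2^m}+1$ to a canonical representative of degree less than $2^m$.

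First I would prove spanning. Fix an arbitrary element $\langle t^{2^{m}}+1 \rangle + p(t)$ of $Q_m$, with $p(t) \in F_2[t]$. Using polynomial division in the Euclidean domain $F_2[t]$, write $p(t) = q(t)(t^{2^{m}}+1) + r(t)$ with either $r(t)=0$ or $\deg r(t) < 2^{m}$. Since $q(t)(t^{2^{m}}+1) \in \langle t^{2^{m}}+1 \rangle$, the cosets satisfy $\langle t^{2^{m}}+1 \rangle + p(t) = \langle t^{2^{m}}+1 \rangle + r(t)$. Writing $r(t) = \sum_{i=0}^{2^{m}-1} c_{i} t^{i}$ with $c_{i} \in F_2$, we get $\langle t^{2^{m}}+1 \rangle + p(t) = \sum_{i=0}^{2^{m}-1} c_{i} \bigl(\langle t^{2^{m}}+1 \rangle + t^{i}\bigr)$, which is an $F_2$-linear combination of the $\mathbf{w}_{j}$.

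Next I would prove linear independence. Suppose $\sum_{j=1}^{2^{m}} a_{j}\mathbf{w}_{j} = 0$ in $Q_{m}$ for some $a_{j} \in F_2$. Rewriting the indices, this says that the polynomial $h(t) \coloneqq \sum_{i=0}^{2^{m}-1} a_{2^{m}-i}\, t^{i}$ lies in $\langle t^{2^{m}}+1 \rangle$. So there exists $g(t) \in F_2[t]$ with $h(t) = g(t)(t^{2^{m}}+1)$. If $g(t) \neq 0$ then $\deg h(t) = \deg g(t) + 2^{m} \geq 2^{m}$, contradicting $\deg h(t) < 2^{m}$; hence $g(t) = 0$ and $h(t) = 0$. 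Comparing coefficients, every $a_{j} = 0$, proving independence.

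Combining the two items, $(\mathbf{w}_{1}, \mathbf{w}_{2}, \ldots, \mathbf{w}_{2^{m}})$ is a spanning linearly independent ordered tuple, hence an ordered basis of $Q_{m}$ as an $F_2$-vector space. There is no real obstacle here; the only thing to be careful about is consistently translating the indexing of the $\mathbf{w}_{j}$'s (which list the powers of $t$ in decreasing order) into the standard ascending enumeration used in polynomial division.
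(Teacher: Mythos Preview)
Your proof is correct. The paper does not supply a proof of this lemma at all (it is introduced as a ``trivial lemma'' and left unproved), so your argument via Euclidean division for spanning and the degree contradiction for linear independence is exactly the standard justification one would fill in.
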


\begin{theorem}\label{thm:1_control_node}
Let $m \geq 2$ be an integer and $k \in [3, 2^{m} - 1]$ be an odd integer. There exists a $2^{m}$-node 
controllable $k$-$k$-XOR-BN with 1 control node.
\end{theorem}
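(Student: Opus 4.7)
The plan is to realize the sought $k$-$k$-XOR-BN as the linear map ``multiplication by $\rho(s(t))$'' on $Q_{m}$, expressed in the ordered basis $(w_{1}, w_{2}, \ldots, w_{2^{m}}) = (t^{2^{m}-1}, t^{2^{m}-2}, \ldots, t, 1)$ of $Q_{m}$. A direct computation shows that multiplication by $t$ in this basis is exactly the cyclic matrix $P_{m}$, so multiplication by a canonical representative $s(t) = \sum_{i} c_{i} t^{i}$ (with $\deg s < 2^{m}$) of $\rho(s(t))$ is represented by $A \coloneqq s(P_{m}) = \sum_{i} c_{i} P_{m}^{i}$. Since $P_{m}$ has order $2^{m}$, the matrices $P_{m}^{0}, P_{m}^{1}, \ldots, P_{m}^{2^{m}-1}$ are distinct permutation matrices, hence in $A$ every row and every column carries exactly $\mathrm{wt}(s)$ ones, where $\mathrm{wt}(s)$ denotes the Hamming weight of this representative. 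So the task reduces to choosing $s$ with $\mathrm{wt}(s) = k$ and in a form to which Theorem \ref{thm:important_thm} applies, i.e., $s(t) = 1 + (t+1)v(t)$ for some $v(t) \in F_{2}[t]$ with $v(1) = 1$.

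Writing $v(t) = (s(t)+1)/(t+1)$, the requirement that $v$ be a polynomial with $v(1) = 1$ translates, via the formal-derivative identity $v(1) = s'(1)$ in characteristic $2$, into the pair of conditions that the sum of the even-indexed coefficients of $s$ equals $0$ and the sum of its odd-indexed coefficients equals $1$; equivalently, $s(t) \equiv t \pmod{(t+1)^{2}}$. I would then exhibit such $s$ of weight $k$ by a two-case construction according to $k \bmod 4$: when $k \equiv 3 \pmod{4}$, take $s(t) = 1 + t + t^{2} + \cdots + t^{k-1}$, of degree $k-1 \leq 2^{m}-2$; when $k \equiv 1 \pmod{4}$ (so $k \geq 5$), take $s(t) = 1 + t + t^{2} + \cdots + t^{k-2} + t^{k}$, of degree $k \leq 2^{m}-1$. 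In both cases a short parity count confirms that $\mathrm{wt}(s) = k$ and that $s \equiv t \pmod{(t+1)^{2}}$, yielding the required $v$.

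With $s$ and $A = s(P_{m})$ in hand, Theorem \ref{thm:important_thm} asserts that the minimal polynomial of $\rho(s(t)) \in Q_{m}$ is $t^{2^{m}}+1$, whose degree $2^{m}$ equals $\dim_{F_{2}} Q_{m}$. Hence $\{1, \rho(s(t)), \rho(s(t))^{2}, \ldots, \rho(s(t))^{2^{m}-1}\}$ is $F_{2}$-linearly independent and so a basis of $Q_{m}$. Translating back through the chosen basis and noting that $1 \in Q_{m}$ corresponds to $w_{2^{m}} = \mathbf{e}_{2^{m}}$, this says that $\{\mathbf{e}_{2^{m}}, A\mathbf{e}_{2^{m}}, \ldots, A^{2^{m}-1}\mathbf{e}_{2^{m}}\}$ spans $F_{2}^{2^{m}}$, so $W_{\{x_{2^{m}}\}} = F_{2}^{2^{m}}$, and Theorem \ref{thm:Wu_spans_Fn_2} concludes that the $k$-$k$-XOR-BN with adjacency matrix $A$ and single-node control set $U = \{x_{2^{m}}\}$ is controllable. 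The conceptual heart has already been packaged into Theorem \ref{thm:important_thm}, so the only genuinely delicate step is the existence of $s$ with all three properties simultaneously; the two explicit families above resolve this uniformly over every odd $k \in [3, 2^{m}-1]$, reducing the remainder of the proof to a bookkeeping parity check rather than a real obstacle.
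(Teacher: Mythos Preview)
Your proof is correct and follows essentially the same route as the paper: represent the BN as multiplication by $\rho(s(t))$ on $Q_m$ in the basis $(t^{2^m-1},\ldots,t,1)$, observe that the matrix is $s(P_m)$ with row/column weight $\mathrm{wt}(s)$, invoke Theorem~\ref{thm:important_thm} to get minimal polynomial $t^{2^m}+1$, and conclude via Theorem~\ref{thm:Wu_spans_Fn_2} with $U=\{x_{2^m}\}$. The only difference is the particular polynomial chosen in the case $k\equiv 1\pmod 4$: the paper takes $s(t)=t+t^2+\cdots+t^k$, whereas you take $s(t)=1+t+\cdots+t^{k-2}+t^k$; both satisfy $\mathrm{wt}(s)=k$ and $s\equiv t\pmod{(t+1)^2}$, so both work. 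Your reformulation of the two conditions on $v$ as the single congruence $s\equiv t\pmod{(t+1)^2}$ via $v(1)=s'(1)$ is a clean way to package the parity checks, but otherwise the argument is the same.
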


\begin{proof}
We split into two cases to consider: $k \equiv 3 \pmod{4}$ and $k \equiv 1 \pmod{4}$.

Consider the first case: $k \equiv 3 \pmod{4}$. Define $v_{1}(t) \coloneqq t + t^{3} + \ldots + t^{k-2} \in F_{2}[t]$, which has an odd number of terms and hence $v_{1}(1) = 1$. Define
$s_{1}(t) \coloneqq 1 + (t + 1)v_{1}(t) = 1 + t + \ldots + t^{k-1}$, which clearly has $k$ terms. 
By Theorem \ref{thm:important_thm}, the minimal polynomial of $\rho(s_{1}(t))$ is 
$t^{2^{m}} + 1 \in F_{2}[t]$. Hence, for all non-zero polynomial $q(t) \in F_{2}[t]$ with 
$\text{deg}(q(t)) \leq 2^{m} - 1$, $q(\rho(s_{1}(t))) \neq \langle t^{2^{m}} + 1 \rangle$ (i.e., $q(\rho(s_{1}(t)))$ is not the zero element of $Q_{m}$). Therefore, $\rho(1)$, $\rho(s_{1}(t))$, \ldots, 
$(\rho(s_{1}(t)))^{2^{m} - 1}$ are linearly independent in $Q_{m}$. Because $\text{dim}(Q_{m}) = 2^{m}$, 
$\rho(1)$, $\rho(s_{1}(t))$, \ldots, $(\rho(s_{1}(t)))^{2^{m} - 1}$ form a basis of $Q_{m}$.

Consider the linear map $T_{1} : Q_{m} \to Q_{m}$ such that for all $a(t) \in F_{2}[t]$, 
$T_{1}(\rho(a(t))) = \rho(s_{1}(t))\rho(a(t))$. Note that
\begin{align}
T_{1}(\langle t^{2^{m}} + 1 \rangle + 1) 	&= \langle t^{2^{m}} + 1 \rangle + t^{k-1} + \cdots + t + 1 \\
T_{1}(\langle t^{2^{m}} + 1 \rangle + t) 	&= \langle t^{2^{m}} + 1 \rangle + t^{k} + \cdots + t^{2} + t \\
							&\vdots \\
T_{1}(\langle t^{2^{m}} + 1 \rangle + t^{2^{m} - 1}) 
&= \langle t^{2^{m}} + 1 \rangle + t^{2^{m} + k - 2} + \cdots + t^{2^{m}} + t^{2^{m} - 1} \\
&= \langle t^{2^{m}} + 1 \rangle + t^{k - 2} + \cdots + t + 1 + t^{2^{m} - 1}
\end{align}
which can be alternatively written as
\begin{align}
T_{1}(\mathbf{w}_{2^{m}}) 
&= \mathbf{w}_{2^{m} - k + 1} + \cdots + \mathbf{w}_{2^{m} - 1} + \mathbf{w}_{2^{m}} \\
T_{1}(\mathbf{w}_{2^{m} - 1}) 
&= \mathbf{w}_{2^{m} - k} + \cdots + \mathbf{w}_{2^{m} - 2} + \mathbf{w}_{2^{m} - 1} \\
&\vdots \\
T_{1}(\mathbf{w}_{1}) 
&= \mathbf{w}_{2^{m} - k + 2} + \cdots + \mathbf{w}_{2^{m} - 1} + \mathbf{w}_{2^{m}} + \mathbf{w}_{1}
\end{align}
Therefore, the transformation matrix $A_{1} \in F^{2^{m} \times 2^{m}}_{2}$ of $T_{1} : Q_{m} \to Q_{m}$ with respect to the ordered basis $(\mathbf{w}_{1}, \mathbf{w}_{2}, \ldots, \mathbf{w}_{2^{m}})$ is given by
$A_{1} = I + P_{m} + \cdots + P^{k-1}_{m}$, which is a circulant matrix. Moreover, each column and row of $A_{1}$ has exactly $k$ $1$-entries. As mentioned before, $\mathbf{w}_{2^{m}} = \rho(1)$, 
$T_{1}(\mathbf{w}_{2^{m}}) = \rho(s_{1}(t))$, \ldots, 
$T^{2^{m} - 1}_{1}(\mathbf{w}_{2^{m}}) = (\rho(s_{1}(t)))^{2^{m} - 1}$ form a basis of $Q_{m}$. Therefore,
$\mathbf{e}_{2^{m}}$, $A_{1}\mathbf{e}_{2^{m}}$, \ldots, $A^{2^{m} - 1}_{1}\mathbf{e}_{2^{m}}$ form a basis of 
$F^{2^{m}}_{2}$ where $\mathbf{e}_{2^{m}} \in F^{2^{m}}_{2}$ is the standard unit vector whose last entry is equal to $1$. Then, by Theorem \ref{thm:Wu_spans_Fn_2}, the $2^{m}$-node $k$-$k$-XOR-BN with associated matrix $A_{1}$ and control-node set $U = \{ x_{2^{m}} \}$ is controllable.

Consider the second case: $k \equiv 1 \pmod{4}$. Define $v_{2}(t) \coloneqq 1 + t^{2} + \ldots + t^{k-1} \in F_{2}[t]$, which has an odd number of terms and hence $v_{2}(1) = 1$. Define
$s_{2}(t) \coloneqq 1 + (t + 1)v_{2}(t) = t + t^{2} + \ldots + t^{k}$, which clearly has $k$ terms. 
By Theorem \ref{thm:important_thm}, the minimal polynomial of $\rho(s_{2}(t))$ is 
$t^{2^{m}} + 1 \in F_{2}[t]$. Hence, for all non-zero polynomial $q(t) \in F_{2}[t]$ with 
$\text{deg}(q(t)) \leq 2^{m} - 1$, $q(\rho(s_{2}(t))) \neq \langle t^{2^{m}} + 1 \rangle$ (i.e., $q(\rho(s_{2}(t)))$ is not the zero element of $Q_{m}$). Therefore, $\rho(1)$, $\rho(s_{2}(t))$, \ldots, 
$(\rho(s_{2}(t)))^{2^{m} - 1}$ are linearly independent in $Q_{m}$. Because $\text{dim}(Q_{m}) = 2^{m}$, 
$\rho(1)$, $\rho(s_{2}(t))$, \ldots, $(\rho(s_{2}(t)))^{2^{m} - 1}$ form a basis of $Q_{m}$.

Consider the linear map $T_{2} : Q_{m} \to Q_{m}$ such that for all $a(t) \in F_{2}[t]$, 
$T_{2}(\rho(a(t))) = \rho(s_{2}(t))\rho(a(t))$. 
Note that
%
\begin{align}
T_{2}(\mathbf{w}_{2^{m}}) 
&= \mathbf{w}_{2^{m} - k} + \cdots + \mathbf{w}_{2^{m} - 2} + \mathbf{w}_{2^{m} - 1} \\
T_{2}(\mathbf{w}_{2^{m} - 1}) 
&= \mathbf{w}_{2^{m} - k - 1} + \cdots + \mathbf{w}_{2^{m} - 3} + \mathbf{w}_{2^{m} - 2} \\
&\vdots \\
T_{2}(\mathbf{w}_{1}) 
&= \mathbf{w}_{2^{m} - k + 1} + \cdots + \mathbf{w}_{2^{m} - 1} + \mathbf{w}_{2^{m}}
\end{align}
Therefore, the transformation matrix $A_{2} \in F^{2^{m} \times 2^{m}}_{2}$ of $T_{2} : Q_{m} \to Q_{m}$ with respect to the ordered basis $(\mathbf{w}_{1}, \mathbf{w}_{2}, \ldots, \mathbf{w}_{2^{m}})$ is given by
$A_{2} = P_{m} + P^{2}_{m} + \cdots + P^{k}_{m}$, which is a circulant matrix. Moreover, each column and row of $A_{2}$ has exactly $k$ $1$-entries. As mentioned before, $\mathbf{w}_{2^{m}} = \rho(1)$, 
$T_{2}(\mathbf{w}_{2^{m}}) = \rho(s_{2}(t))$, \ldots, 
$T^{2^{m} - 1}_{2}(\mathbf{w}_{2^{m}}) = (\rho(s_{2}(t)))^{2^{m} - 1}$ form a basis of $Q_{m}$. Therefore,
$\mathbf{e}_{2^{m}}$, $A_{2}\mathbf{e}_{2^{m}}$, \ldots, $A^{2^{m} - 1}_{2}\mathbf{e}_{2^{m}}$ form a basis of 
$F^{2^{m}}_{2}$. Then, by Theorem \ref{thm:Wu_spans_Fn_2}, the $2^{m}$-node $k$-$k$-XOR-BN with associated matrix $A_{2}$ and control-node set $U = \{ x_{2^{m}} \}$ is controllable.

The proof is complete.
\end{proof}

\begin{example}
Let $m = 3$ and $k = 3$. The $8$-node controllable $3$-$3$-XOR-BN with $1$ control node constructed in the proof of Theorem \ref{thm:1_control_node} is associated with the matrix
\begin{equation}
A_{1} \coloneqq
\left(
\begin{array}{cccccccc}
1 & 1 & 1 & 0 & 0 & 0 & 0 & 0 \\
0 & 1 & 1 & 1 & 0 & 0 & 0 & 0 \\
0 & 0 & 1 & 1 & 1 & 0 & 0 & 0 \\
0 & 0 & 0 & 1 & 1 & 1 & 0 & 0 \\
0 & 0 & 0 & 0 & 1 & 1 & 1 & 0 \\
0 & 0 & 0 & 0 & 0 & 1 & 1 & 1 \\
1 & 0 & 0 & 0 & 0 & 0 & 1 & 1 \\
1 & 1 & 0 & 0 & 0 & 0 & 0 & 1
\end{array}
\right) 
\in F^{8 \times 8}_{2}
\end{equation}
and the control-node set $U = \{ x_{8} \}$.
More explicitly, the Boolean update rules of this BN are
\begin{align}
x_{1}(t + 1)	&= x_{1}(t) \oplus x_{2}(t) \oplus x_{3}(t)	& x_{5}(t + 1)	&= x_{5}(t) \oplus x_{6}(t) \oplus x_{7}(t) \\
x_{2}(t + 1)	&= x_{2}(t) \oplus x_{3}(t) \oplus x_{4}(t)	& x_{6}(t + 1)	&= x_{6}(t) \oplus x_{7}(t) \oplus x_{8}(t) \\
x_{3}(t + 1)	&= x_{3}(t) \oplus x_{4}(t) \oplus x_{5}(t)	& x_{7}(t + 1)	&= x_{7}(t) \oplus x_{8}(t) \oplus x_{1}(t) \\
x_{4}(t + 1)	&= x_{4}(t) \oplus x_{5}(t) \oplus x_{6}(t)	& x_{8}(t + 1)	&= x_{8}(t) \oplus x_{1}(t) \oplus x_{2}(t)
\end{align}
Note that
\begin{align}
\mathbf{e}_{8}		&= [0, 0, 0, 0, 0, 0, 0, 1]^{T}	& A^{4}_{1}\mathbf{e}_{8} &= [0, 0, 0, 1, 0, 0, 0, 0]^{T} \\
A_{1}\mathbf{e}_{8}	&= [0, 0, 0, 0, 0, 1, 1, 1]^{T}	& A^{5}_{1}\mathbf{e}_{8} &= [0, 1, 1, 1, 0, 0, 0, 0]^{T} \\
A^{2}_{1}\mathbf{e}_{8}	&= [0, 0, 0, 1, 0, 1, 0, 1]^{T}	& A^{6}_{1}\mathbf{e}_{8} &= [0, 1, 0, 1, 0, 0, 0, 1]^{T} \\
A^{3}_{1}\mathbf{e}_{8}	&= [0, 1, 1, 0, 1, 0, 1, 1]^{T}	& A^{7}_{1}\mathbf{e}_{8} &= [1, 0, 1, 1, 0, 1, 1, 0]^{T}
\end{align}
We can easily write a computer program to check that
\begin{equation}
\det([\mathbf{e}_{8}, A_{1}\mathbf{e}_{8}, A^{2}_{1}\mathbf{e}_{8}, \ldots, A^{7}_{1}\mathbf{e}_{8}]) = 1 \neq 0,
\end{equation}
which implies that $\mathbf{e}_{8}$, $A_{1}\mathbf{e}_{8}$, 
$A^{2}_{1}\mathbf{e}_{8}$, \ldots, $A^{7}_{1}\mathbf{e}_{8}$ form a basis of $F^{8}_{2}$.
\end{example}

\begin{example}
Let $m = 3$ and $k = 5$. The $8$-node controllable $5$-$5$-XOR-BN with $1$ control node constructed in the proof of Theorem \ref{thm:1_control_node} is associated with the matrix
\begin{equation}
A_{2} \coloneqq
\left(
\begin{array}{cccccccc}
0 & 1 & 1 & 1 & 1 & 1 & 0 & 0 \\
0 & 0 & 1 & 1 & 1 & 1 & 1 & 0 \\
0 & 0 & 0 & 1 & 1 & 1 & 1 & 1 \\
1 & 0 & 0 & 0 & 1 & 1 & 1 & 1 \\
1 & 1 & 0 & 0 & 0 & 1 & 1 & 1 \\
1 & 1 & 1 & 0 & 0 & 0 & 1 & 1 \\
1 & 1 & 1 & 1 & 0 & 0 & 0 & 1 \\
1 & 1 & 1 & 1 & 1 & 0 & 0 & 0
\end{array}
\right) 
\in F^{8 \times 8}_{2}
\end{equation}
and the control-node set $U = \{ x_{8} \}$.
Note that
\begin{align}
\mathbf{e}_{8}		&= [0, 0, 0, 0, 0, 0, 0, 1]^{T}	& A^{4}_{2}\mathbf{e}_{8} &= [0, 0, 0, 1, 0, 0, 0, 0]^{T} \\
A_{2}\mathbf{e}_{8}	&= [0, 0, 1, 1, 1, 1, 1, 0]^{T}	& A^{5}_{2}\mathbf{e}_{8} &= [1, 1, 1, 0, 0, 0, 1, 1]^{T} \\
A^{2}_{2}\mathbf{e}_{8}	&= [0, 1, 0, 1, 0, 0, 0, 1]^{T}	& A^{6}_{2}\mathbf{e}_{8} &= [0, 0, 0, 1, 0, 1, 0, 1]^{T} \\
A^{3}_{2}\mathbf{e}_{8}	&= [0, 1, 0, 1, 0, 0, 1, 0]^{T}	& A^{7}_{2}\mathbf{e}_{8} &= [0, 0, 1, 0, 0, 1, 0, 1]^{T}
\end{align}
We can easily write a computer program to check that
\begin{equation}
\det([\mathbf{e}_{8}, A_{2}\mathbf{e}_{8}, A^{2}_{2}\mathbf{e}_{8}, \ldots, A^{7}_{2}\mathbf{e}_{8}]) = 1 \neq 0,
\end{equation}
which implies that $\mathbf{e}_{8}$, $A_{2}\mathbf{e}_{8}$, 
$A^{2}_{2}\mathbf{e}_{8}$, \ldots, $A^{7}_{2}\mathbf{e}_{8}$ form a basis of $F^{8}_{2}$.
\end{example}

\section{Conclusion}\label{section:conclusion}

In this paper, we have investigated the minimal controllability of degree-constrained Boolean networks (BNs). First, we have established lower-bound-related inequalities and general upper bounds on the number of required control nodes for several families of controllable majority-type threshold BNs. Second, we have constructed controllable majority-type BNs and threshold BNs with mixed-sign coefficients that admit small control-node sets. Third, for general XOR-BNs, we have given a linear-algebraic necessary and sufficient condition for controllability and developed polynomial-time algorithms to compute control-node sets and control signals. Finally, leveraging ring theory and linear algebra, we have derived best-case upper bounds for degree-constrained XOR-BNs. Specifically, we have shown that for any positive integers $n$ and $k$ with $n > k \geq 2$, there exists a controllable $n$-node $k$ $k$-XOR-BN with a control-node set of size $k-1$; moreover, we have shown that when $n \geq 4$ is a power of two and $k \in [3, n - 1]$ is an odd integer, there exists a controllable $n$-node $k$-$k$-XOR-BN with a single control node. We expect our research to advance understanding of both minimal interventions in biological systems and opinion dynamics in social networks.

There are a number of possible future directions of research on the minimal controllability problem of BNs. 
Firstly, the gaps between our general lower bounds (Corollary \ref{coroll:k_in_MAJOR_BN_lower_bound} and Theorem \ref{thm:2k_in_MTBI_BN_ineq_thm}) and our best-case upper bounds (Theorems \ref{thm:(2k+1)_bcub}, \ref{thm:2k_MAJOR_bcub}, \ref{thm:2k_MTBI_bcub}) for $(2k+1)$-$(2k+1)$-MAJORITY-BNs, $2k$-$2k$-MAJORITY-BNs and $2k$-$2k$-MTBI-BNs are small when the value of $m \geq 2$ is small, and future investigation can focus on narrowing these gaps for large values of $m$.
Secondly, we note that most of our results for threshold BNs in this paper are related to majority-type Boolean threshold functions. In the future, the research community can further study the minimal control of threshold BNs consisting of non-majority-type Boolean threshold functions with mixed-sign mixed-magnitude coefficients. This is an important direction of research because such BNs can model biological systems in which both activation and inhibition interactions are present. 
Thirdly, canalyzing Boolean functions can also effectively model many biological interactions \cite{HarrisSawhill, KauffmanCanalyzing}, and the minimal controllability problem of BNs consisting of canalyzing Boolean functions can also be investigated to increase our understanding of certain biological systems. 
Fourthly, further research can also be conducted on whether for any positive integers $n$ and $k$ such that $n > k \geq 2$, there exists a controllable $n$-node $k$-$k$-XOR-BN with a control-node set of size $1$. Our paper only resolves this question for the case that $k = 2$ and the case that $n \geq 4$ is a power of $2$ and $k \in [3, n - 1]$ is an odd integer.



\end{document}